\definecolor{myblue}{rgb}{0.1,0.1,0.5}
\definecolor{mygreen}{rgb}{0.2,0.5,0.1}
 \definecolor{shadecolor}{gray}{0.75}
\renewcommand*{\backref}[1]{}
\renewcommand*{\backrefalt}[4]{%
    \ifcase #1 (Not cited.)%
    \or        (Cited on page~#2)%
    \else      (Cited on pages~#2)%
    \fi}
\newlength{\RoundedBoxWidth}
\newsavebox{\GrayRoundedBox}
\newenvironment{GrayBox}[1]%
   {\setlength{\RoundedBoxWidth}{.93\textwidth}
    \def\boxheading{#1}
    \begin{lrbox}{\GrayRoundedBox}
       \begin{minipage}{\RoundedBoxWidth}}%
   {   \end{minipage}
    \end{lrbox}
    \begin{center}
    \begin{tikzpicture}%
       \node(Text)[draw=black!20,fill=white,rounded corners,%
             inner sep=2ex,text width=\RoundedBoxWidth]%
             {\usebox{\GrayRoundedBox}};
        \coordinate(x) at (current bounding box.north west);
        \node [draw=white,rectangle,inner sep=3pt,anchor=north west,fill=white] 
        at ($(x)+(6pt,.75em)$) {\boxheading};
    \end{tikzpicture}
    \end{center}}     
\newenvironment{defproblemx}[2][]{\noindent\ignorespaces%
                                \FrameSep=6pt%
                                \parindent=0pt%
                \vspace*{-1.5em}
                \ifthenelse{\isempty{#1}}{%
                  \begin{GrayBox}{\textsc{#2}}%                
                }{%
                  \begin{GrayBox}{\textsc{#2}  parameterized by~{#1}}%  
                }
                \begin{tabular*}{\textwidth}{@{\hspace{.1em}} >{\itshape} p{1.8cm} p{0.8\textwidth} @{}}%        
            }{
                \end{tabular*}%
                \end{GrayBox}%
                \ignorespacesafterend
            }
\newcommand{\NP}{\textsf{NP}}
\newcommand{\Pol}{\textsf{P}}
\newcommand{\FPT}{\textsf{FPT}}
\newcommand{\XP}{\textsf{XP}}
\newcommand{\NPH}{\textsf{NP}-hard}
\newcommand{\NPC}{\textsf{NP}-complete\xspace}
\newcommand{\tw}{{\sf tw}}
\newcommand{\card}[1]{\ensuremath{{\vert {#1} \vert }}}
\newcommand{\ca}[1]{\ensuremath{\mathcal{#1} }}
\newcommand{\set}[1]{\ensuremath{\left\{ {#1} \right\}}}
\newcommand{\fn}[3]{\ensuremath{{{#1} : {#2} \rightarrow {#3}}}}
\newcommand{\cO}{\mathcal{O}}
\newtheorem{lemma}{Lemma}[section]
\newtheorem{theorem}[lemma]{Theorem}
\newtheorem{observation}[lemma]{Observation}
\newtheorem{claim}[lemma]{Claim}
\theoremstyle{definition}
\newtheorem{remark}[lemma]{Remark}
\newcommand{\diam}{\ensuremath{\text{diam}}}
\newcommand{\dist}{\ensuremath{\text{dist}}}
\newcommand{\ext}{\ensuremath{\mathtt{ext}}}
\newcommand{\add}{\ensuremath{\mathtt{add}}}
\newcommand{\ttsup}[2]{\ensuremath{{#1}^{\texttt{#2}}}}
\newcommand{\blk}{\ensuremath{\mathtt{block}}}
\newcommand{\mms}{\ensuremath{\mathtt{mms}}}
\newcommand{\merge}{\ensuremath{\mathtt{merge}}}
\newcommand{\tsub}[2]{\ensuremath{{#1}_{\text{#2}}}}
\mathchardef\mh="2D
\newcommand{\aco}{\ensuremath{A_{\texttt{co}}}}
\newcommand{\asc}{\ensuremath{A_{\texttt{sco}}}}
\newcommand{\bco}{\ensuremath{A'_{\texttt{co}}}}
\newcommand{\maxval}{\ensuremath{W}}
\newcommand{\spropn}[2]{{\sc Prop-Str-}$({#1}, {#2})$-{\sc Compact-FD}}
\newcommand{\sefn}[2]{{\sc EF-Str-}$({#1}, {#2})$-{\sc Compact-FD}}
\newcommand{\scefn}[2]{{\sc Complete EF-Str-}$({#1}, {#2})$-{\sc Compact-FD}}
\newcommand{\spoefn}[2]{{\sc Complete EF-Str-}$({#1}, {#2})$-{\sc Compact-FD}}
\newcommand{\xcompact}[2]{$[\mathtt{X}]$-$({#1}, {#2})$-{\sc Compact-FD}}
\newcommand{\xanno}{$[\mathtt{X}]$-{\sc Annotated-FD}}
\newcommand{\compact}[2]{$({#1}, {#2})$-{\sc Compact-FD}}
\newcommand{\anno}{{\sc Annotated-FD}}
\newcommand{\mmsn}[2]{{\sc MMS-$({#1}, {#2})$-{\sc Compact-FD}}}
\newcommand{\smmsn}[2]{{\sc MMS-Str-$({#1}, {#2})$-{\sc Compact-FD}}}
\newcommand{\propn}[2]{{\sc Prop-}$({#1}, {#2})$-{\sc Compact-FD}}
\newcommand{\efn}[2]{{\sc EF-}$({#1}, {#2})$-{\sc Compact-FD}}
\newcommand{\cefn}[2]{{\sc Complete-EF-}$({#1}, {#2})$-{\sc Compact-FD}}
\newcommand{\poefn}[2]{{\sc PO-EF-}$({#1}, {#2})$-{\sc Compact-FD}}
\begin{document}
\title{Fair Division of a Graph into Compact Bundles}

%\author{}
\author{Jayakrishnan Madathil \\ University of Glasgow, United Kingdom \\ \texttt{jayakrishnan.madathil@glasgow.ac.uk}} 
\date{}

\maketitle
\begin{abstract}
We study the computational complexity of fair division of indivisible items in an enriched model: there is an underlying graph on the set of items. And we have to allocate the items (i.e., the vertices of the graph) to a set of agents in such a way that (a) the allocation is fair (for appropriate notions of fairness) and (b) each agent receives a bundle of items (i.e., a subset of vertices) that induces a subgraph with a specific ``nice structure.'' This model has previously been studied in the literature with the nice structure being a connected subgraph. 
In this paper, we propose an alternative for connectivity in fair division.  
We introduce what we call compact graphs, and look for fair allocations in which each agent receives a compact bundle of items. Through compactness, we try to capture the idea that every agent must receive a set of ``closely related'' items. We prove a host of hardness and tractability results for restricted input settings with respect to fairness concepts such as proportionality, envy-freeness and maximin share guarantee. 
\end{abstract}

\noindent
{{\bf Keywords:} Fair division,  Compact bundles,  $(\alpha,\beta)$-Compact graphs, Connectivity, Diameter, Proportionality, Envy-freeness, Maximin share guarantee}

`
%%%%%%%%%%%%%%%%%%%%%%%%%%%%%%%%%%%%%%%%%%%%%%%%%%%%%%%%%%%%%%%%%%%%%%%%

\section{Introduction}\label{sec:intro}
We study a fair allocation problem: $m$ items are to be allocated among $n$ agents. The items are indivisible, the agents have preferences over the items, and the allocation must be fair (for appropriate notions of fairness). A typical fair allocation problem thus far. But wait, there's more. The items to be allocated are the vertices of a graph, and we have to allocate the items in such a way that the allocation is fair \emph{and} each agent receives a bundle of items (i.e., a subset of vertices) that induces a subgraph with a specific ``nice structure.'' The fairness notions that we study are proportionality, envy-freeness and maximin share guarantee; and the nice structure that we study requires that each agent must receive a bundle of ``closely related'' items. 

While fair division of indivisible items has been studied extensively from both economics and computational perspectives~\citep{DBLP:books/daglib/0017734,DBLP:journals/corr/abs-2202-08713,DBLP:conf/ijcai/AmanatidisBFV22}, fair division of graphs is a relatively recent line of study, starting with the work of~\citet{DBLP:conf/ijcai/BouveretCEIP17}. 
The classic fair division literature assumes no relationship between the items; the items are independent of each other. 
But there are any number of scenarios where the items to be allocated, rather than being an assortment of isolated units, are related to one another. And we can often model such relationships between the items using a graph. The items, for example, may be rooms in a large building, and the graph models  adjacency between pairs of rooms; or the items may be a valuable collection of vintage postage stamps, and the graph models similarities between pairs of stamps on the basis of country, design or typography; or the items may be the topics for a course (to be divided between two professors who are planning to co-teach the course), and the graph models connections between the topics. Fair division of graphs attempts to capture precisely such settings.  

While fair division of graphs has received considerable attention in recent years~\citep{DBLP:conf/ijcai/DeligkasEGHO21,DBLP:journals/geb/BiloCFIMPVZ22,DBLP:conf/aaai/GrecoS20,DBLP:conf/aaai/IgarashiP19,DBLP:journals/siamdm/BeiILS22}, to the best of our knowledge, all work so far has  focused solely on \emph{connected} fair division. That is, each agent should receive a set of vertices that induces a connected subgraph. 
\citet{DBLP:conf/ijcai/BouveretCEIP17}, who introduced this model of fair division of graphs, considered the example of office spaces in a university being allocated to various research groups, 
where each research group should receive a contiguous set of offices. This scenario can easily be modelled as a fair-division-of-a graph problem where each agent must get a connected subgraph---agents represent various research groups, and the vertices and the edges of the graph respectively represent the offices and adjacency between pairs of offices. Notice that while contiguity might be useful, the ``closeness'' of the offices allocated to each group might be just as important. Simply demanding contiguity might leave a research group with a set of offices along a long and narrow corridor, which may not be an attractive proposition to the members of the group. In the absence of other constraints, the connectivity requirement, while desirable, might produce bundles with unwieldy topologies. 
In some cases, demanding connected bundles for every agent may be too stringent a requirement. For example, imagine a scenario with just one agent and two items that are isolated from each other, where  the agent has a utility of $1/2$ for each of the two items and a total utility of $1$ for the two items together. As we can see, demanding connectivity results in only one item being allocated to the agent, costing her half her total utility for the set of items. 

We can also think of the connectivity requirement in the fair division of graphs as a direct adaptation of the contiguity requirement in the fair division of divisible items, i.e., the cake cutting problem where each agent should receive a contiguous piece of the cake~\citep{stromquist1980cut,DBLP:journals/jair/GoldbergHS20}. But contiguity is a singularly important consideration in the cake cutting setting, for otherwise an agent ``who hopes only for a modest interval of cake may be presented instead with a countable union of crumbs''~\citep{stromquist1980cut}. That is not the case with graphs. Moreover, graphs are a rich combinatorial structure capable of modelling a variety of relationships among the items,  and connectivity of the items allocated to each agent may not always be the most important consideration. There is, however, a void when it comes to our understanding of fair division of graphs under constraints other than connectivity. In this paper, we take a first step towards filling this void. 

\subsection{{Our Contribution}}  

We make three contributions. (1) Propose meaningful alternatives for connectivity in the fair division of graphs. (2) Study the computational complexity of finding fair allocations under such alternative constraints, even on restricted inputs. The restrictions on the input are based on the nature of the valuation functions and the structure of the graph on the set of items. (3) Identify and exploit results and techniques from structural and algorithmic graph theory to design efficient algorithms for fair division of graphs. To that end, we propose an alternative for connectivity: We demand that each agent must receive a  \emph{compact} bundle of items. Through compactness, we try to capture the idea  that each agent must receive a set of closely related items.  

To formally define compact graphs, we introduce the following notation. For $n \in \mathbb{N}$, $[n]$  denotes the set $\set{1, 2,\ldots, n}$ and $[n]_0 = [n] \cup \set{0}$. For a graph $G$, $V(G)$ and $E(G)$ respectively denote the set of vertices and edges of $G$. For a path $P$ in $G$, the length of $P$ is the  number of edges in $P$. For vertices $z, z' \in V(G)$, the distance between $z$ and $z'$, denoted by $\dist_G(z, z')$ is the length of a shortest path between $z$ and $z'$. The diameter of $G$, denoted by $\diam(G)$, is the maximum distance between a pair of vertices, i.e., $\diam(G) = \max_{z, z' \in V(G)} \dist_G(z, z')$. For $z \in V(G)$ and a non-negative integer $\beta$, $B_G(z, \beta) = \set{z' \in V(G) ~|~ \dist_G(z, z') \leq \beta}$; we call $B_G(z, \beta)$ the ball of radius $\beta$ centred at $z$. 

\paragraph*{{Compact graphs: Definitions.}} For non-negative integers $\alpha$ and $\beta$, we say that a graph $G$ is \emph{$(\alpha, \beta)$-compact} if there exist vertices $z_1, z_2,\ldots,  z_{\alpha} \in V(G)$ such that for every $z \in V(G)$, $\dist(z, z_i) \leq \beta$ for some $i \in [\alpha]$, i.e., $V(G) = \bigcup_{i = 1}^{\alpha} B_G(z_i, \beta)$.\footnote{In graph theory literature, a set of vertices $\set{z_1, z_2,\ldots, z_{\alpha}} \subseteq V(G)$ such that $V(G) = \bigcup_{i = 1}^{\alpha} B_G(z_i, \beta)$ is often called a distance-$\beta$ dominating set~\citep{haynes2020topics}. So $(\alpha, \beta)$-compact graphs are precisely those graphs that have a distance-$\beta$ dominating set of size at most $\alpha$.} Intuitively, $G$ is $(\alpha, \beta)$-compact if $G$ can be ``covered by $\alpha$ many balls, each of \emph{radius} at most $\beta$.''  
We also define a related class of graphs called strongly $(\alpha, \beta)$-compact graphs. 
We say that a graph $G$ is \emph{strongly $(\alpha, \beta)$-compact} if there exist vertex subsets $V_1, V_2,\ldots, V_{\alpha} \subseteq V(G)$ such that $V(G) = \bigcup_{i = 1}^{\alpha} V_i$ and for every $i \in [\alpha]$ and for every $z, z' \in V_i$, $\dist_G(z, z') \leq \beta$. 
In this paper, we focus primarily on $(\alpha, \beta)$-compact graphs. (We consider the empty graph---the graph $G$ with $V(G) = E(G) = \emptyset$---to be both $(\alpha, \beta)$-compact and strongly $(\alpha, \beta)$-compact for every $\alpha, \beta \geq 0$.) 

\paragraph*{{Compact graphs: Examples.}} Observe that if a graph $G$ is strongly $(\alpha, \beta)$-compact, then it also $(\alpha, \beta)$-compact. But the converse need not hold. For example, a star is $(1, 1)$-compact, but not strongly $(1, 1)$-compact. On the other hand, if $G$ is $(\alpha, \beta)$ compact, then it is strongly $(\alpha, 2\beta)$-compact.  When $\beta = 0$, the two definitions coincide. For example, a graph is $G$ is (strongly) $(\alpha, 0)$-compact if and only if $G$ has at most $\alpha$ many vertices. And $G$ is strongly $(1, 1)$-compact if and only if $G$ is a clique; and $G$ is strongly $(1, \beta)$-compact if and only if $G$ has diameter at most $\beta$. In particular, an $m$-vertex graph $G$ is connected if and only if $G$ is (strongly) $(1, m-1)$-compact. Similarly, a graph $G$ with $\alpha$ connected components is strongly $(\alpha, \beta)$-compact if and only every component has diameter at most $\beta$. In particular, a cluster graph with $\alpha$ components is strongly $(\alpha, 1)$-compact;  a cluster graph is a graph in which every connected component is a clique. 
As these examples show, the definition of (strongly) compact graphs is expansive enough to accommodate several natural restrictions on the structure of $G$, including properties of graphs such as number of vertices, connectivity, bounded diameter etc. 

\subsubsection{{Our Model}} We study fair division problems of the following type. A typical instance of our problem consists of a triplet $(G, N, \ca{V})$, where $G$ is an $m$-vertex graph, $N$ is a set of agents with $\card{N} = n$, and $\ca{V}$ is an $n$-tuple of functions $(v_i)_{i \in N}$, where $\fn{v_i}{2^{V(G)}}{\mathbb{Q}_{\geq 0}}$. For $i \in N$, $v_i$ is called the valuation function or utility function of agent $i$. We assume throughout that the valuation functions are additive (unless otherwise stated). That is, for $i \in N$, $S \subseteq V(G)$, we have $v_i(S) = \sum_{z \in S}v_i(\set{z})$ and $v_i(\emptyset) = 0$. When $S = \set{z}$ is a singleton set, we omit the braces and simply write $v_i(z)$.  We call the set of vertices of $G$ goods or items as well. Throughout the paper, we use $m$ for the number of items and $n$ for the number of agents. And we use $N$ or $[n]$ for the set of agents. 

%%%TABLE BEGINS
%That is, we use vertices, items and items interchangeably. 
 % Please add the following required packages to your document preamble:
% \usepackage{multirow}
% \usepackage[table,xcdraw]{xcolor}
% If you use beamer only pass "xcolor=table" option, i.e. \documentclass[xcolor=table]{beamer}
\begin{table}[]
\centering
\scalebox{0.83}{
\begin{tabular}{|c|cc|c|c|c|c|}
\hline
                   & \multicolumn{2}{c|}{$\beta = 0$}                                                                                                                                       &                                                                                            &                                                                                            &                                                                                                                                                                                    &                                                                                                                                                                               \\ \cline{2-3}
\multirow{-2}{*}{} & \multicolumn{1}{c|}{$\alpha = 1$}                                                       & $\alpha = 2$                                                                 & \multirow{-2}{*}{\begin{tabular}[c]{@{}c@{}}$\alpha \geq 1$\\ $\beta \geq 1$\end{tabular}} & \multirow{-2}{*}{\begin{tabular}[c]{@{}c@{}}$\alpha \geq 3$\\ $\beta \geq 0$\end{tabular}} & \multirow{-2}{*}{\begin{tabular}[c]{@{}c@{}}$\alpha = 1$\\ $\beta \geq 0$\end{tabular}}                                                                                            & \multirow{-2}{*}{\begin{tabular}[c]{@{}c@{}}$\alpha \geq 1$\\ $\beta \geq 0$\end{tabular}}                                                                                    \\ \hline
Prop               & \multicolumn{1}{c|}{\cellcolor[HTML]{C0C0C0}\Pol}                                          & \begin{tabular}[c]{@{}c@{}}\NPH\\ (strong)\\ follows from \\ \cite{DBLP:conf/atal/FerraioliGM14}\end{tabular} & \begin{tabular}[c]{@{}c@{}}\NPH\\ (weak)\\ follows from \\ \cite{DBLP:conf/sigecom/LiptonMMS04}\end{tabular}             & \cellcolor[HTML]{C0C0C0}\begin{tabular}[c]{@{}c@{}}\NPH\\ (strong)\end{tabular}             & \cellcolor[HTML]{C0C0C0}\begin{tabular}[c]{@{}c@{}}(a) \FPT\ w.r.t. $n$\\ when $G$ \\ is a path\\ \\ (b) \XP\ w.r.t. \\ number of \\ agent types \\ when $G$ \\ is a path\end{tabular} & \cellcolor[HTML]{C0C0C0}\begin{tabular}[c]{@{}c@{}}(a) \XP\ w.r.t.\\ $n + \Delta(G)$\\ even for\\ arbitrary \\ valuations\\ \\ (b) pseudo-\XP\ w.r.t.\\ $n + \tw(G)$ \\ for integer \\ valuations \\ {} \end{tabular} \\ \cline{1-3} \cline{6-6}
EF                 & \multicolumn{1}{c|}{\begin{tabular}[c]{@{}c@{}}\Pol${}^\blacktriangle$\\ follows from \\ \cite{DBLP:journals/mss/GanSV19}\end{tabular}} &                                                                             &                                                                                            & \cellcolor[HTML]{C0C0C0}                                                                   &                                                                                                                                                                                   & \cellcolor[HTML]{C0C0C0}\begin{tabular}[c]{@{}c@{}}(c) pseudo-\XP\ w.r.t\\ $n$\\ when $G$\\ is planar, for \end{tabular}                                                              \\ \cline{1-6}
MMS                & \multicolumn{1}{c|}{\cellcolor[HTML]{C0C0C0}\Pol}                                          &                                                                             &                                                                                           &                                                                                            &                                                                                                                                                                                 & \multicolumn{1}{c|}{\cellcolor[HTML]{C0C0C0} integer valuations}  \\ \hline
\end{tabular}
}
\caption{A summary of results for $(\alpha, \beta)$-compact allocations for various choices of $\alpha$ and $\beta$. The results proved in this paper are shaded in grey. Each row corresponds to a fairness concept. The last two columns contain  our main algorithmic results; all the other columns discuss the polynomial time solvability or (weak or strong)-\NPH ness of the problems depending on various choices of $\alpha$ and $\beta$. For a graph $G$, $\Delta(G)$ and $\tw(G)$ denote the maximum degree and treewidth of $G$, respectively. The ``pseudo'' in the last column indicates that the runtime of the algorithm depends on the valuation functions. The $\blacktriangle$ in the (EF, $\alpha = 1, \beta = 0$) cell indicates the additional constraint that each agent be allocated exactly one item.}
\label{table:results}
\end{table}
%%%TABLE ENDS

\paragraph*{Compact allocations.} Consider $(G, N, \ca{V})$. An allocation is a function that assigns pairwise disjoint subsets of $V(G)$ to the agents. That is, an allocation is a function $\fn{\pi}{N}{2^{V(G)}}$ such that $\pi(i) \cap \pi(j) = \emptyset$ for distinct $i, j\in N$. Consider an allocation $\pi$. We call $\pi(i)$ agent $i$'s bundle under the allocation $\pi$. We say that $\pi(i)$ is $(\alpha, \beta)$-compact if $G[\pi(i)]$ is $(\alpha, \beta)$-compact. And we say that an allocation $\pi$ is \emph{$(\alpha, \beta)$-compact} if $\pi(i)$ is $(\alpha, \beta)$-compact for every $i \in N$. The definition of a strongly $(\alpha, \beta)$-compact allocation is analogous. 

\paragraph*{Fairness concepts. } We consider three well-studied fairness concepts---proportionality, envy-freeness and maximin fairness. An allocation $\pi$ is \emph{proportional} if $v_i(\pi(i)) \geq (1/n) \cdot v_i(V(G))$ for every $i \in N$, i.e., every agent receives at least a $1/n$ fraction of her utility for the whole graph. And we say that $\pi$ is \emph{envy-free} if $v_i(\pi(i)) \geq v_i(\pi(j))$ for every pair of distinct agents $i, j \in N$, i.e., every agent prefers her own bundle to that of the other agents. 
We now define maximin fair allocations by suitably adapting the definition introduced by~\cite{budish2011combinatorial}. For $(G, N, \ca{V})$, let $\Pi(G, N, \ca{V})$ be the set of all allocations $\fn{\phi}{N}{2^{V(G)}}$. Let $\Gamma \subseteq \Pi(G, N, \ca{V})$ be any non-empty set of allocations. For every agent $i \in N$, we define the $\Gamma$-maximin share guarantee of $i$, denoted by $\Gamma \mh \mms_i(G, N, \ca{V})$, as follows:
\[
\Gamma \mh \mms_i(G, N, \ca{V}) = \max_{\pi \in \Gamma} \min_{j \in N} v_i(\pi(j)). 
\]
Informally, $\Gamma \mh \mms_i(G, N, \ca{V})$ is the maximum utility agent $i$ could guarantee for herself if agent $i$ were to allocate the items, with the caveat that the allocation be in $\Gamma$, and  allowed to choose only the least valued bundle for herself. We say that an allocation $\fn{\phi}{N}{2^{V(G)}}$ is \emph{$\Gamma$-maximin fair} if $v_i(\pi(i)) \geq \Gamma \mh \mms_i(G, N, \ca{V})$. We are  interested in $\Gamma$-maximin fair allocations for the case when $\Gamma$ is the set of all $(\alpha, \beta)$-compact allocations, which we denote by $\Gamma((\alpha, \beta) \mh \mathtt{com})$. For convenience, when $(G, N, \ca{V})$ and $\Gamma$ are clear from the context, we may simply write $\mms_i$ instead of $\Gamma \mh \mms_i(G, N, \ca{V})$. 

\paragraph*{Economic efficiency constraints.} We say that an allocation $\pi$ is complete if $\pi$ allocates all the vertices of $G$, i.e., $\bigcup_{i \in N} \pi(i) = V(G)$. And we say that $\pi$ is Pareto-optimal if there exists no allocation $\pi'$ such that $v_i(\pi'(i)) > v_i(\pi(i))$ for some $i \in N$ and $v_j(\pi'(j)) \geq v_j(\pi(j))$ for every $j \in N \setminus \set{i}$. 

\paragraph*{Computational questions.} Let $\alpha$ and $\beta$ be fixed integers. We are interested in computational problems that take an instance $(G, N, \ca{V})$ as input, and the question is to decide if $(G, N, \ca{V})$ admits an allocation that is (strongly) $(\alpha, \beta)$-compact and fair? Depending on the fairness concept, we have the following specific problems. 
\begin{itemize} 
\item \propn{\alpha}{\beta}: Does $(G, N, \ca{V})$  admit an allocation that is proportional and $(\alpha, \beta)$-compact? 
\item \efn{\alpha}{\beta}: Does $(G, N, \ca{V})$ admit an allocation that is envy-free and $(\alpha, \beta)$-compact? As the empty allocation (where no item is allocated) is trivially envy-free, we often combine envy-freeness with efficiency constraints such as completeness or Pareto-optimality. We thus have the corresponding computational questions: \cefn{\alpha}{\beta} and \poefn{\alpha}{\beta}, where in addition to envy-freeness and $(\alpha, \beta)$-compactness, the allocation must respectively be complete and Pareto-optimal. 
\item \mmsn{\alpha}{\beta}: Does $(G, N, \ca{V})$ admit an allocation that is $\Gamma((\alpha, \beta) \mh \mathtt{com})$-maximin fair and $(\alpha, \beta)$-compact? 
\end{itemize}
We define the strongly compact variants of the problems analogously. Notice that $\alpha$ and $\beta$ are fixed constants and not part of the input. 

\subsubsection{Our Results}
We prove a host of hardness and algorithmic results for $[\mathtt{X}]$-\compact{\alpha}{\beta}, where we use $[\mathtt{X}]$ as a placeholder for one of the three fairness concepts discussed above. See Table~\ref{table:results} for a quick summary. Recall that $n$ is the number of agents and $m$ is the number of vertices in the input graph. We first discuss the polynomial time solvability versus (weak and strong) \NPH ness of the problems. These results are further divided into cases based on the choice of $\alpha$ and $\beta$. The hardness results hold only for problems corresponding to proportionality and envy-freeness, and they all hold for additive valuations. And then we move on to our algorithmic results, specifically, \FPT\ and \XP\ algorithms w.r.t. a combination of parameters, including the number of agents, maximum degree and treewidth of the graph. We now list our results one by one. First, \NPH ness. 
\begin{enumerate}
\item The well-known reduction from {\sc Partition} that shows the \NPH ness of checking whether a proportional allocation or an envy-free allocation exists (in the classic fair allocation setting)~\citep{DBLP:conf/sigecom/LiptonMMS04} extends to our setting as well. We need only represent the set of items as a clique. Thus \propn{\alpha}{\beta}, \cefn{\alpha}{\beta} and \poefn{\alpha}{\beta} are all \NPH\ for every $\alpha, \beta \geq 1$. These results hold even when there are only two agents with identical additive valuations. Notice, however, that a reduction from {\sc Partition} only implies weak \NPH ness. Also, this result does not cover the case when $\beta = 0$.  
\item The case of $\beta = 0$ warrants special attention. In this case, every agent can receive at most $\alpha$ items as a graph $G$ is $(\alpha, 0)$-compact if and only if $G$ has at most $\alpha$ vertices. So we can ignore the underlying graph, and treat the problem as an instance of the classic fair division problem with the additional constraint that every agent be allocated at most $\alpha$ items. In this case, the complexities of the problems vary depending on the choice of $\alpha$.
\begin{enumerate}
\item If $\alpha = 1$, then every agent can receive at most one item. In this case, for all three fairness notions, the corresponding problems are polynomial time solvable. We can reduce each problem to a matching problem in an agents-items bipartite graph. For envy-freeness, we have the additional constraint that every agent must receive exactly one item; and the polynomial time algorithm is due to \cite{DBLP:journals/mss/GanSV19}. 

\item The case of $\alpha = 2$:  The problem corresponding to proportionality, i.e., \propn{2}{0}, is strongly \NPH. This is implied by a result due to~\cite{DBLP:conf/atal/FerraioliGM14}. The complexity of the problems corresponding to envy-freeness and maximin fairness, however, are still open.  
\item The case when $\alpha \geq 3$ is covered by the next result. 
\end{enumerate}

\item For every $\alpha \geq 3$ and $\beta \geq 0$, we prove that the problems corresponding to (i) proportionality, (ii) envy-freeness + completeness and (iii) envy-freeness + Pareto-optimality are all  strongly \NPH. These results hold even for edgeless graphs. 
\end{enumerate}
Now, the algorithmic results.
\begin{enumerate}
\setcounter{enumi}{3}
\item  If $\alpha = 1$ and the graph $G$ is a path, then for proportionality, the corresponding problem admits an algorithm with runtime $2^{n} \cdot m^{\cO(1)}$, and hence is fixed-parameter tractable (\FPT) with respect to the number of agents. We also extend this result to an algorithm with runtime $n^{p} \cdot m^{\cO(1)}$, where $p$ is the number of \emph{types} of agents. Two agents are of the same type if their valuations are identical. These results hold for arbitrary valuations. Our algorithms are based on a straightforward dynamic programming procedure, which is only a slight adaptation of the algorithm of~\cite[Theorem 3.3]{DBLP:conf/ijcai/BouveretCEIP17} for proportional and \emph{connected} fair division. But the fact that $\alpha = 1$ is crucial for these algorithms to work.  
\item For all three fairness concepts and for all values of $\alpha$ and $\beta$, we design algorithms with runtime $m^{\cO(\alpha n)} \cdot 2^{\alpha n \cdot \Delta^{\cO(\beta)}}$ for the corresponding problems, where $\Delta$ is the maximum degree of $G$. We thus have slicewise polynomial time (\XP) algorithms  parameterized by $n + \Delta$. 
The algorithms follow from the simple observation that for any vertex $z \in V(G)$, the number of vertices that are at distance at most $\beta$ from $z$ is at most $\Delta^{\beta + 1}$. Therefore, we can enumerate all $(\alpha, \beta)$-compact allocations in time $m^{\cO(\alpha n)} \cdot 2^{\alpha n \cdot \Delta^{\cO(\beta)}}$. 
\item For all three fairness fairness concepts, we design pseudo-\XP\ algorithms with respect to the parameter $\tw + n$, where $\tw$ is the treewidth of $G$. We design a single dynamic programming algorithm that works for all three fairness concepts. We assume here that the valuations are integer-valued. Let $\maxval$ be the maximum value any agent has for the whole graph, i.e., $\maxval = \max_{i \in N}v_i(V(G))$. Our algorithms have a runtime of $(\tw + n)^{\cO(\tw + n)}  \beta^{\cO(\tw)} \cdot m^{\cO(\alpha n)} \cdot \maxval^{\cO(n^2)}$. In particular, when the valuations are polynomially bounded, we have \XP\ algorithms. These algorithms yield a number of interesting corollaries, which we discuss below. 
\begin{enumerate}
\item {\bf Welfare maximisation:} We can use our dynamic programming procedure to compute $(\alpha, \beta)$-compact allocations that maximise welfare. For example, we can answer questions such as this: Among all the allocations that maximise the utilitarian social welfare, is there one that is also $(\alpha, \beta)$-compact?  
The utilitarian social welfare of an allocation $\pi$ is the sum of the utilities of the agents have under $\pi$, i.e., $\sum_{i \in N} v_i(\pi(i))$. 
\item {\bf Connected fair division:} As noted earlier, an $m$-vertex graphs is connected if and only if it is $(1, m-1)$ compact. Hence our algorithms work for connected fair division as well, with a runtime of $(\tw + n)^{\cO(\tw + n)} \cdot  m^{\cO(\tw + n)} \cdot \maxval^{\cO(n^2)}$. 
\item {\bf Implications for planar graphs (and more):} On planar graphs, our algorithms run in time $n^{\cO(n)} \cdot m^{\cO(\alpha n)} \cdot \maxval^{\cO(n^2)}$, i.e., a pseudo-\XP\ algorithm parameterized by $n$. To derive this result, we leverage the  well-known fact that the treewidth of a planar graph is $\cO(D)$, where $D$ is the diameter of the graph~\citep{DBLP:journals/algorithmica/Eppstein00}, and the fact that the diameter of every connected component of an $(\alpha, \beta)$-compact graph is $\cO(\alpha \beta)$.\footnote{We note that the classes of bounded degree graphs, bounded treewidth graphs and planar graphs are pairwise-incomparable. That is, no class is contained in either of the two other classes. Hence an algorithm for problems restricted to one of these classes does not automatically imply a similar result for the other classes. In particular, the reliance on bounding treewidth by a function of the diameter is crucial for deriving our algorithm on planar graphs.} In fact, such a runtime is possible not just on planar graphs, but on all graphs whose treewidth is bounded by a function of its diameter. In particular, if $\ca{F}$ is a minor-closed family of graphs and $\ca{F}$ excludes an apex-graph, then on any graph $G \in \ca{G}$, our algorithms run in time $f(n) \cdot m^{\cO(\alpha n)} \maxval^{\cO(n^2)}$. This follows from a deep result in structural graph theory due to~\cite{DBLP:journals/algorithmica/Eppstein00}, who characterised all minor-closed graph classes that have the ``diameter-treewidth property.''\footnote{A graph $G$ is an apex graph if $G - v$ is planar for some $v \in V(G)$. That is, apex graphs are ``planar plus 1 vertex'' graphs. Consider a family of graphs $\ca{F}$. We say that $\ca{F}$ has the diameter-treewidth property if there exists a function $\fn{f}{\mathbb{R}}{\mathbb{R}}$ such that $\tw(G) \leq f(\diam(G))$ for every $G \in \ca{F}$. Notice that the function $f$ depends only on the family $\ca{F}$. \cite{DBLP:journals/algorithmica/Eppstein00} proved that if $\ca{F}$ is a minor closed family of graphs, then $\ca{F}$ has the diameter-treewidth property if and only if $\ca{F}$ excludes an apex graph. For more details, see~\citep{DBLP:journals/algorithmica/Eppstein00} or \citep{DBLP:journals/algorithmica/DemaineH04}.} 
\end{enumerate}
\end{enumerate} 

\paragraph*{Results for strongly compact variants.} All our hardness results mentioned above hold for the strongly compact variants of the respective problems as well. So do (i) our algorithm for proportional allocations on paths when $\alpha = 1$ and (ii) the \XP\ algorithms parameterized by $n + \Delta$. In addition, we show that for proportionality and for envy-freeness + completeness, the strongly compact variants of the corresponding problems are strongly \NPH\ for every $\alpha, \beta \geq 1$. (Notice that our strong \NPH ness results for the compact variants only cover the case of $\alpha \geq 3$.) 

\paragraph*{Runtime of our algorithms.} The runtimes of the algorithmic results discussed above have an exponential dependence on $n$ (the number of agents). But this need not be a disqualifying factor as several fair division settings only have a constant number of agents, and often just two agents. Common examples of 2-agent settings, as noted by~\cite{DBLP:journals/siamcomp/PlautR20}, include divorce settlements, inheritance division and international border disputes. In fact, 2-agent setting is one of the most intensively studied special cases in fair division; see, for example, \citep{DBLP:journals/scw/BramsF00,DBLP:journals/orf/BramsKK22}. We would also like to point out that the exponential dependence of the runtime of our treewidth-based algorithm on both $n$ and $\tw$ is also essential; notice that our \NPH ness results for proportional or envy-free, compact allocations when $\alpha \geq 3$ hold for graphs of treewidth zero (even when the valuations are additive and polynomially bounded).  The runtime of our treewdith-based algorithm depends also on $\maxval$, the maximum valuation of an agent for the whole graph. This may be inevitable as well. 
As shown by~\cite{DBLP:conf/aaai/IgarashiP19} in the context of connected fair division, unless $\Pol = \NP$, there is no polynomial time algorithm that finds a Pareto-optimal allocation even when the valuations are binary and additive and the graph is a tree with bounded pathwidth or bounded diameter. We believe that a similar result should hold for compact allocations as well. Also, as noted above, we can use our algorithms to find welfare-maximising allocations. Our algorithms can thus compute allocations with three properties: fairness, compactness and economic efficiency. We would like to emphasise that for algorithms that find efficient allocations,  ``pseudo'' runtimes are rather common; see, for example~\cite{DBLP:conf/sigecom/BarmanKV18,DBLP:journals/eor/AzizHMS23}. 

\paragraph*{Choice of parameters.} We would like to add a word about the parameterizations that we use in our algorithmic results. As noted above, in many fair allocation problems, the number of agents is small, and therefore, parameterizations by the number of agents or agent types are quite reasonable. In particular, such parameterizations have been used by \cite{DBLP:conf/ijcai/BouveretCEIP17} and \cite{DBLP:conf/ijcai/DeligkasEGHO21} in the context of connected fair division of graphs. Now, about the the graph parameters that we use. It is quite common in algorithmic graph theory to leverage structural parameters of the input graph (maximum degree, degeneracy, treewidth etc.) to design efficient algorithms. The two graph parameters that we use---maximum degree and treewidth---have been leveraged this way in the parameterized algorithms literature. Maximum degree, while being a simple and reasonable parameter, is perhaps less interesting because a number of problems remain \NP-hard on graphs of constant degree. But treewidth is a completely different story. It is considered as a natural measure of ``sparsity'' of the graph, and has been studied extensively.  
Treewidth is related to a number of other graph parameters, and it is the source of several deep theorems in graph theory. See, for example, the Festschrift edited by~\cite{DBLP:conf/birthday/2020bodlaender} for the many applications of treewidth in algorithm design. We must also add that connected fair division of graphs has previously been studied on special graphs such as paths, stars, trees, cycles etc., which are all graphs of treewidth 1 or 2~\citep{DBLP:conf/ijcai/BouveretCEIP17,DBLP:conf/ijcai/LoncT18}. Also, parameters such as treewidth and cliquewidth have been used as structural parameters in connected fair division~\citep{DBLP:conf/ijcai/DeligkasEGHO21}. We discuss some of these works below. So it is very much in line with the existing literature to use treewidth as a parameter to design algorithms. In particular, one of our goals was to apply results and techniques from graph theory literature to problems in fair division. And for this purpose, treewidth is arguably the most suitable choice. 

\subsection{Related Work} 
Modern fair division literature goes back at least to the pioneering work of \cite{steinhaus}. For an overview of the relevant literature, we refer the reader to the surveys by~\cite{DBLP:books/sp/16/LindnerR16} on the divisible goods setting; and by~\cite{DBLP:journals/corr/abs-2202-08713} or \cite{DBLP:conf/ijcai/AmanatidisBFV22} on the indivisible goods setting. 

There is, in particular, a large volume of literature on fair division under connectivity constraints. 
As for the cake cutting setting (i.e., divisible goods), \cite{stromquist1980cut} proved that an envy-free allocation of the cake into connected pieces exists, but there is no finite protocol that can find such an allocation~\citep{DBLP:journals/combinatorics/Stromquist08}.  
Coming to the indivisible goods setting, the work of~\cite{DBLP:conf/ijcai/BouveretCEIP17}, which formally introduced this line of study, contains several algorithmic and hardness results for the connected fair division of a graph. They showed that finding a connected allocation that is envy-free or proportional is \NPH, even when the graph is a path. On the other hand, a maximin allocation always exists and can be computed in polynomial time if the graph is a tree. They also designed several \FPT\ and \XP\ algorithms with respect to parameters such as number of agents and number of agent types, for special cases of the graph. This was followed by the work of~\cite{DBLP:conf/ijcai/LoncT18}, who studied the existence and complexity of maximin fair allocations on cycles. Among other results, they proved that when there are only three agents, there exists an allocation that guarantees each agent a constant fraction of her maximin share. Most related to our work is that of \cite{DBLP:conf/ijcai/DeligkasEGHO21}, who undertook an in-depth study of the parameterized complexity of connected fair division. They established a host of hardness and algorithmic results---under various combinations of parameters such as the number of agents, treewidth, cliquewidth etc.---with respect to fairness concepts such as proportionality, envy-freeness, EF1 and EFX. A number of recent works deal with the special case when the graph is a path, under various fairness notions. In particular, \cite{DBLP:journals/geb/BiloCFIMPVZ22} studied the existence and complexity of EF1 and EF2 allocations; \cite{DBLP:journals/corr/abs-2209-01348} studied EF1 allocations and showed that EF1 allocations always exists for any number of agents with monotone valuations;  \cite{DBLP:journals/dam/Suksompong19} studied approximate guarantees for proportionality, envy-freeness and equitability; and \cite{DBLP:journals/corr/abs-2101-09794} studied the complexity of finding allocations that are equitable up to one item (EQ1) combined with economic efficiency notions such as Pareto-optimality, non-wastefulness etc. 
In addition to these works that mainly deal with the existence or computational questions associated with connected fair division, \cite{DBLP:journals/siamdm/BeiILS22} studied the price of connectivity---the loss incurred by agents when the connectivity constraint is imposed---for maximin fair allocations. Their techniques included the application of several graph theoretic tools and concepts to fair division. 

Much of the literature on fair division under restricted input settings focus on restrictions on valuations, number of agents or relaxations of fairness notions. To the best of our knowledge, there are only a handful of works that consider restrictions on the \emph{structure} of bundles allocated to agents (other than connected fair division in the context of graphs). The typical restriction on bundles involves what can be called cardinality constraints. The canonical example is the house allocation problem, where each agent must receive exactly one item. ~\cite{DBLP:journals/mss/GanSV19} studied envy-freeness and \cite{DBLP:journals/orl/KamiyamaMS21} studied proportionality and equitability in the house allocation setting. \cite{DBLP:conf/atal/FerraioliGM14} et al. studied the problem of maximising egalitarian welfare, under the additional constraint that each agent be allocated exactly $k$ items. \cite{DBLP:conf/ijcai/BiswasB18} introduced a more general variant of cardinality constraints: the items are partitioned into groups and there is a cap on each group's contribution to any agent's bundle. And they showed that a $1/3$-approximate maximin fair allocation can be computed efficiently; \cite{DBLP:conf/eumas/HummelH22} recently improved the approximation guarantee to $1/2$. See the survey by \cite{DBLP:journals/sigecom/Suksompong21} for a comprehensive discussion of various constraints in fair division.

\section{Preliminaries}\label{sec:prelims}
We use this section only to collect all the notation and terminology we use in one place. Some of the terms that we define here will be used only in Section~\ref{sec:treedp}. 

\paragraph*{Sets, tuples and functions.} For $n \in \mathbb{N}$, $[n]$ denotes the set $\set{1, 2,\ldots, n}$, $[n]_0$ denotes the set $\set{0, 1, 2, \ldots, n}$; $\tsub{[n]}{od}$ and $\tsub{[n]}{ev}$ respectively denote the set of odd numbers in $[n]$ and the set of even numbers in $[n]$, i.e., $\tsub{[n]}{od} = [n] \cap \set{1, 3, 5,\ldots}$ and $\tsub{[n]}{ev} = [n] \cap \set{2, 4, 6,\ldots}$. We use the shorthand $(z_i)_{i \in [n]}$ for the tuple $(z_1, z_2,\ldots, z_n)$. For sets $X, Y$, a function $\fn{f}{X}{Y}$ and a subset $Z \subseteq X$, $f \big|_{Z}$ denotes the restriction of $f$ to $Z$. That is, $\fn{f \big|_{Z}}{Z}{Y}$ is the function from $Z$ to $Y$ such that $f \big|_{Z}(z) = f(z)$ for every $z \in Z$. For sets $X, Y, X', Y'$ and a function $\fn{f}{X}{Y}$, $\ext(f, X', Y')$ denotes the set of all functions $\fn{g}{X \cup X'}{Y \cup Y'}$ such that $g(x) = f(x)$ for every $x \in X$. That is, for every $g \in \ext(f, X', Y')$, $g \big|_{X} = f$. 

\paragraph*{Partition.} Consider a non-empty set $S$. A partition of $S$ is a family $\ca{P} \subseteq 2^S$ of non-empty subsets of $S$ such that $\bigcup_{P \in \ca{P}} P = S$ and $P \cap P' = \emptyset$ for every distinct $P, P' \in \ca{P}$. For a partition $\ca{P}$ of $S$, each set $P \in \ca{P}$ is  called a block of $\ca{P}$. And for $z \in S$, we denote the block of $\ca{P}$ that contains $z$ by $\blk_{\ca{P}}(z)$, i.e., $\blk_{\ca{P}}(z) = P$ if $P \in \ca{P}$ and $z \in P$. 
When the partition $\ca{P}$ is clear from the context, we omit the subscript and simply write $\blk(z)$. 
Consider $z \in S$ and let $\ca{P}$ be a partition of $S \setminus \set{z}$. By $\add(z, \ca{P})$, we denote the set of all partitions of $S$ obtained from $\ca{P}$ by adding $z$ to one of the blocks of $\ca{P}$. That is, if $\ca{P} = \set{P_1, P_2,\ldots,P_r}$ is a partition of $S \setminus \set{z}$, then $\add(z, \ca{P}) = \set{\ca{P}^1, \ca{P}^2,\ldots, \ca{P}^r}$, where for each $i \in [r]$, $\ca{P}^i$ is the partition of $S$ defined as $\ca{P}^i = \set{P^i_1, P^i_2,\ldots, P^i_r}$ with $P^i_i = P_i \cup \set{z}$ and $P^i_j = P_j$ for every $j \in [r] \setminus \set{i}$. 
\paragraph*{Rooted partition.} We now define an annotated version of a partition called a rooted partition. A rooted partition $\ca{P}$ is nothing but a partition, but each block $P$ of $\ca{P}$ has a designated element, which we call the root of $P$. Formally, a rooted partition of $S$ is an ordered pair $(\ca{P}, R)$ such that $\ca{P}$ is a partition  of $S$, $R \subseteq S$ with $\card{R} = \card{\ca{P}}$ and for each block $P$ of $\ca{P}$, $\card{P \cap R} = 1$; we call the unique element of $P \cap R$ the root of $P$.

\paragraph*{Graphs.} All graphs is this paper are undirected.  
For a graph $G$, $V(G)$ and $E(G)$ respectively denote the vertex set and edge set of $G$. 
Consider a graph $G$. 
For a path $P$ in $G$, the length of $P$ is the number of edges in $P$. 
For vertices $z, z' \in V(G)$, the distance between $z$ and $z'$ in $G$, denoted by $\dist_G(z, z')$, is the length of a shortest path between $z$ and $z'$. 
If $z = z'$, then $\dist_G(z, z') = 0$. 
For a vertex $z \in V(G)$ and a non-negative integer $\beta$, we use $B(z, \beta)$ to denote the set of all vertices in $G$ that are at distance at most $\beta$ from $z$, i.e., $B_G(z, \beta) = \{z' \in V(G) ~|~ \dist_G(z, z') \leq \beta\}$. 
(We may omit the subscript $G$ when the graph is clear from the context.) 
The diameter of $G$, denoted by $\diam(G)$, is the maximum distance between any pair of vertices in $G$, i.e., $\diam(G) = \max_{z, z' \in V(G)} \dist(z, z')$.  For a graph $G$, a subgraph $H$ of $G$ is called a spanning subgraph if $V(H) = V(G)$. An acyclic spanning subgraph is called a spanning forest.  
A tree is a connected, acyclic graph. A rooted tree $T$ is a tree in which we designate exacly one vertex $x \in V(T)$ as the ``root'' of $T$; and we say that $T$ is rooted at $x$. For a tree rooted at $x \in V(T)$, and vertices $z, z' \in V(T)$, we say that $z$ is a descendant of $z'$ if $z'$ lies on the unique path in $T$ from $x$ to $z$. For a rooted tree $T$ and $z \in V(T)$, by the sub-tree of $T$ rooted at $z$, we mean the sub-tree of $T$ induced by $z$ and all descendants of $z$, and we root the sub-tree at $z$. 
A rooted forest is a graph in which each connected component is a rooted tree. For a rooted forest $G$, by the set of roots of $G$, we mean the set $R \subseteq V(G)$ that consists of the root of each connected component of $G$. That is, for each connected component of $H$ of $G$, $R \cap V(H) = \set{x_H}$, where $x_H$ is the root of $H$. For a connected graph $G$ and a vertex $z \in V(G)$, a BFS tree of $G$ rooted at $z$ is a spanning tree of $G$, produced by a breadth-first search traversal of $G$, starting at $z$. If $H$ is a BFS tree of $G$ rooted at $z$, then for every $z' \in V(G)$, we have $\dist_G(z, z') = \dist_H(z, z')$. 

\paragraph*{Merge of two graphs.} For graphs $G_1$ and $G_2$ with $V(G_1) = V(G_2)$, by the merge of $G_1$ and $G_2$, denoted by $\merge(G_1, G_2)$ we mean the (multi)graph $G$ obtained by the union of $G_1$ and $G_2$. That is, $V(G) = V(G_1) = V(G_2)$ and $E(G) = E(G_1) \cup E(G_2)$ (treated as a multiset). Thus, for $z, z' \in V(G)$, $zz' \in E(G)$ if and only if $zz' \in E(G_1)$ or $zz' \in E(G_2)$. And if  $zz' \in E(G_1)$ \emph{and} $zz' \in E(G_2)$, then $G$ contains two parallel edges between $z$ and $z'$. 

\paragraph*{Partition into the connected components. } For a graph $G$ and a non-empty set $Z \subseteq V(G)$, by the \emph{partition of $Z$ into the connected components of $G$}, we mean the partition $\ca{P}$ of $Z$ such that for $z, z' \in Z$, $z$ and $z'$ are in the same block of $\ca{P}$ if and only if they are in the same connected component of $G$. 

\paragraph*{Graphical representation of a partition.} For a non-empty set $S$ and a partition $\ca{P}$ of $S$, a graphical representation of $\ca{P}$, denoted by $\mathtt{G}(\ca{P})$, is an acyclic graph with vertex set $S$ such that $\ca{P}$ is precisely the partition of $S$ into the connected components of $\mathtt{G}(\ca{P})$. That is, each block of $\ca{P}$ corresponds to a connected component of $\mathtt{G}(\ca{P})$ and vice versa; and since $\mathtt{G}(\ca{P})$ is acyclic, each connected component is a tree. 
Notice that $\mathtt{G}(\ca{P})$ need not be unique. So by $\mathtt{G}(\ca{P})$, we only mean any fixed (arbitrarily chosen) acyclic graph that satisfies the required property. 

\paragraph*{Acyclic join of partitions.} For partitions $\ca{P}', \ca{P}''$ and $\ca{P}$ of $S$, we say that $\ca{P}$ is an acyclic join of $\ca{P}'$ and $\ca{P}''$ if the merge of $\mathtt{G}(\ca{P}')$ and $\mathtt{G}(\ca{P}'')$ is a graphical representation of $\ca{P}$. For rooted partitions $(\ca{P}', R'), (\ca{P}'', R'')$ and $(\ca{P}, R)$ of $S$, we say that $(\ca{P}, R)$ is an acyclic join of $(\ca{P}', R')$ and $(\ca{P}'', R'')$ if $\ca{P}$ is an acyclic join of $\ca{P}'$ and $\ca{P}''$ and $R = R' \cap R''$. We now make the following observation about the complexity of checking if a given partition is an acyclic join. 

\begin{observation}\label{obs:acyclic}
Given a non-empty set $S$ and rooted partitions $(\ca{P}', R'), (\ca{P}'', R'')$ and $(\ca{P}, R)$ of $S$, we can check in time $\card{S}^{\cO(1)}$ if $(\ca{P}, R)$ is an acyclic join of $(\ca{P}', R')$ and $(\ca{P}'', R'')$. This is possible because (i) we can construct $\mathtt{G}(\ca{P}')$ and $\mathtt{G}(\ca{P}')$ in time $\card{S}^{\cO(1)}$, and (ii) we can also construct the merge of $\mathtt{G}(\ca{P}')$ and $\mathtt{G}(\ca{P}')$ and verify whether it is a graphical representation of $\ca{P}$ in time $\card{S}^{\cO(1)}$, and (iii) we can also verify whether $R = R' \cap R''$ in time $\card{S}^{\cO(1)}$. 
\end{observation}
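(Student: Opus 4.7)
The plan is to follow the three-step blueprint suggested in the statement itself. First, I would construct explicit graphical representations of $\ca{P}'$ and $\ca{P}''$: for each block, list its elements in some arbitrary order $z_1,\ldots,z_k$ and add the path edges $z_1z_2,\ldots,z_{k-1}z_k$ to the graph. This produces an acyclic graph on $S$ whose connected components are exactly the blocks of the given partition, built in $\cO(\card{S})$ time. At the very end I would verify $R=R'\cap R''$ by a single set comparison, which is trivially polynomial.

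The meat of the procedure is the middle step. I would form the merge $H=\merge(\mathtt{G}(\ca{P}'),\mathtt{G}(\ca{P}''))$ as the multigraph whose edge multiset is the concatenation of the two edge lists, compute its connected components via BFS or union-find, verify that this partition of $S$ equals $\ca{P}$, and then check that $H$ is acyclic (treating any pair of parallel edges as a length-$2$ cycle). Each of these is a standard polynomial-time graph operation, so the whole verification fits in $\card{S}^{\cO(1)}$ time.

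The one non-routine point, which I expect to be the main obstacle, is a potential ambiguity in the definition: because $\mathtt{G}(\ca{P}')$ and $\mathtt{G}(\ca{P}'')$ may be chosen from among many different acyclic graphs representing the same partition, it is not a priori clear that checking a single arbitrary choice suffices to decide whether $\ca{P}$ is an acyclic join of $\ca{P}'$ and $\ca{P}''$. I would resolve this with two observations. First, the connected components of $H$ depend only on $\ca{P}'$ and $\ca{P}''$ and not on the specific spanning trees picked, since two elements are linked in $H$ precisely when they are linked by a chain alternating between the two equivalence relations induced by $\ca{P}'$ and $\ca{P}''$; hence the partition check is invariant under the construction in Step~1. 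Second, the total edge count of $H$ is always $(\card{S}-\card{\ca{P}'})+(\card{S}-\card{\ca{P}''})=2\card{S}-\card{\ca{P}'}-\card{\ca{P}''}$, and every edge of $H$ lies inside some block of $\ca{P}$ whenever the components of $H$ do match $\ca{P}$. Under that matching, each component is a connected multigraph, so by the standard edge count $H$ is a spanning forest (equivalently, acyclic) if and only if this total edge count equals $\card{S}-\card{\ca{P}}$, a condition that is purely combinatorial and independent of the representations chosen. Together, these observations show that a single pass through the three steps yields the correct verdict, giving the desired $\card{S}^{\cO(1)}$-time algorithm.
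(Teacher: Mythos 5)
Your proposal is correct and follows essentially the same three-step route as the paper's own justification (which is just the text of the observation itself): construct representations of $\ca{P}'$ and $\ca{P}''$, form their merge, test whether it is a graphical representation of $\ca{P}$, and compare $R$ with $R' \cap R''$. The one place you go beyond the paper is the well-definedness point: since $\mathtt{G}(\ca{P}')$ and $\mathtt{G}(\ca{P}'')$ are arbitrary choices, you check that both the component structure of the merge and, given that the components match $\ca{P}$, its acyclicity (equivalent to the choice-independent edge count $2\card{S}-\card{\ca{P}'}-\card{\ca{P}''}$ equalling $\card{S}-\card{\ca{P}}$) do not depend on those choices; the paper leaves this implicit, and your argument for it is sound.
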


\paragraph*{Parameterized complexity. } We use standard terminology from parameterized complexity, for which we refer the reader to the book by~\cite{DBLP:books/sp/CyganFKLMPPS15}. A typical parameterized problem on graphs consists of an $m$-vertex graph $G$ and a non-negative integer $k$ as input, and we have to decide if $(G, k)$ satisfies a fixed property. The integer $k$ may be given explicitly as part of the input, or it may be a function of the input graph $G$. We call $k$ the parameter, and say that the problem is parameterized by $k$. We say that the problem is fixed-parameter tractable (or belongs to the class \FPT) if it admits an algorithm running in time $f(k) m^{\cO(1)}$, for some computable function $f$. We also call an algorithm with such a runtime an \FPT\ algorithm. And we say that the  problem belongs to the class \XP\ (slicewise polynomial time) if it admits an algorithm running in time $f(k) m^{g(k)}$ for some computable functions $f$ and $g$. 

\subsubsection*{Tree decomposition} Consider a graph $G$. A tree decomposition of \(G\) is a pair \((T, \{X_t ~|~ t \in V(T)\}) \), where $T$ is a tree and every node $t \in V(T)$ has a set $X_t \subseteq V(G)$ associated with it, such that \((T, \{X_t ~|~ t \in V(T)\}) \) satisfies the following conditions. (a) \(\bigcup_{t \in V(T)} X_t = V(G)\). 
(b) For every edge \(uv \in E(G)\), there exists a node \(t \in V(T)\) such that $u, v \in X_t$. (c) For every \(u \in V(G)\), the set \(\{t \in V(T) : u \in X_t\}\) induces a connected subtree of \(T\).
We call each $X_t$ a bag. 
 The {width} of a tree decomposition $(T, \set{X_t ~|~ t \in V(T)})$ is $\max_{t} \card{X_t} - 1$. The {treewidth} of a graph \(G\), denoted by $\tw(G)$ is the minimum width of a tree decomposition \(G\). When $G$ is clear from the context, we simply write $\tw$. 
 When talking about a tree decomposition $(T, \set{X_t ~|~ t \in V(T)})$ of a graph $G$, to avoid any confusion, we follow the common practice of referring to the vertices of $T$ as nodes, and reserve the term vertex exclusively for the vertices of $G$.   

We now define a particular kind of tree decompositions called nice tree decompositions, which come in handy while designing algorithms.  A tree decomposition $(T, \set{X_t ~|~ t \in V(T)})$ of a graph \(G\) is said to be a nice tree decomposition if it has the following properties. (a) $T$ is rooted tree. (b) If $t \in V(T)$ is the root or a leaf, then $X_t = \emptyset$. (c) Every non-leaf node $t$ of $T$ is one of the following four types: (c.1) introduce vertex node: $t$ has exactly one child, say $t'$, and $X_t = X_{t'} \cup \set{z}$ for some $z \notin X_{t'}$; (c.2) forget node: $t$ has exactly one child, say $t'$, and $X_t = X_{t'} \setminus \set{z}$ for some $z \in X_{t'}$; (c.3) introduce edge node: $t$ has exactly one child, say $t'$, and $X_t = X_{t'}$, and for some edge $zz' \in E(G)$ with $z, z' \in X_t$, the node $t$ is labelled with the edge $zz'$. (c.4) join node: $t$ has exactly two children, say $t'$ and $t''$, and $X_t = X_{t'} = X_{t''}$. 

Consider a graph $G$, a nice tree decomposition $(T, \{X_t ~|~ t \in V(T)\})$ of $G$ and two nodes $t, t' \in V(T)$, where $t'$ is the unique child of $t$. If $t$ is an introduce node with $X_t = X_{t'} \cup \set{z}$, then we say that the vertex $z$ is introduced at $t$ or that $t$ introduces the vertex $z$. Similarly, if $t$ is an introduce edge node labelled with an edge $zz' \in E(G)$, then we say that  $zz'$ is introduced at $t$ or that $t$ introduces $zz'$. And if $t$ is a forget node with $X_t = X_{t'} \setminus \set{z}$, then we say that the vertex $z$ is forgotten at $t$ or that $t$ forgets the vertex $z$. We assume throughout without loss of generality that in a nice tree decomposition, every vertex of $G$ is introduced at exactly one node of $T$ and forgotten at exactly one node of $T$; and that every edge of $G$ is introduced at exactly one node of $T$. For a  node $t$, we use $G_t$ to denote the subgraph of $G$ that consists of all the vertices and edges introduced in the subtree rooted at $t$. 
When designing an algorithm for a problem parameterized by treewidth, we assume that a nice tree decomposition of $G$ of width $\tw(G)$ is also given as part of the input. This assumption is harmless because given an $m$-vertex graph $G$, we can compute a nice tree decomposition of $G$ of width $\cO(\tw(G))$ in time $2^{\cO(\tw(G))} m^{\cO(1)}$~\citep{DBLP:journals/siamcomp/BodlaenderDDFLP16}.

\section{Preliminary Results and Observations}
In this section, we record a few straightforward results that more or less follow from the definition of (strongly) $(\alpha, \beta)$-compact graphs. We will use some of these results in Sections~\ref{sec:nph} and \ref{sec:treedp}. Specifically, results from Sections~\ref{sec:diameter}, \ref{sec:annotated} and \ref{sec:recognition} will be used in later sections. But Sections~\ref{sec:weak-NP}, \ref{sec:max-degree} and \ref{sec:one-zero} are standalone sections. These sections respectively discuss the weak \NPH ness of \xcompact{\alpha}{\beta}, the \XP\ algorithm for \xcompact{\alpha}{\beta}\ parameterized by the number of agents and the maximum degree of $G$, and the polynomial time algorithms for \xcompact{1}{0}. 

\subsection{NP-Hardness of Compact Fair Division}
\label{sec:weak-NP}
We show the \NPH ness of deciding whether a proportional or envy-free (strongly) $(\alpha, \beta)$-compact allocation exists. 
We note that the well-known reduction from {\sc Partition}~\citep{DBLP:conf/sigecom/LiptonMMS04} that shows the \NPH ness of {\sc Prop-FD} and {\sc EF-FD} can be extended to our setting as well. We just need to represent the set of items as a clique. For the sake of completeness, we state the result and sketch its proof. 

\begin{theorem}\label{thm:lipton}
For every fixed $\alpha, \beta \geq 1$, \propn{\alpha}{\beta}, \cefn{\alpha}{\beta} and \poefn{\alpha}{\beta} are all \NPH\ when the valuations are encoded in binary, and even when there are only two agents with identical valuations. So are the strongly compact variants of these three problems. 
\end{theorem}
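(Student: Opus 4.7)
The plan is to lift the classical reduction from \textsc{Partition} of \cite{DBLP:conf/sigecom/LiptonMMS04} to the compact setting by choosing an input graph on which the compactness constraint is vacuous. The key observation that drives everything is that for $\alpha,\beta \geq 1$, every induced subgraph of a clique is trivially both $(\alpha,\beta)$-compact and strongly $(\alpha,\beta)$-compact: a clique on $k$ vertices has diameter at most $1$, so a single vertex forms a distance-$\beta$ dominating set of size $1 \leq \alpha$, and similarly the whole vertex set witnesses strong $(1,1)$-compactness. Hence on $G = K_m$, the compactness requirement imposes no restriction at all on allocations, and the problem collapses to the classical, graph-free question.

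Concretely, given a \textsc{Partition} instance $(a_1,\ldots,a_m)$ with $\sum_j a_j = 2S$, I would construct the instance $(G, N, \ca{V})$ where $G$ is the clique on vertex set $\{z_1,\ldots,z_m\}$, $N = \{1, 2\}$, and the two agents share the identical additive valuation $v_1(z_j) = v_2(z_j) = a_j$. The reduction is clearly polynomial (in the binary encoding). By the observation above, every allocation $\pi$ of this instance is automatically $(\alpha,\beta)$-compact and strongly $(\alpha,\beta)$-compact, so the output side of the reduction is determined entirely by the fairness (and efficiency) conditions.

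For correctness, I would handle the three fairness notions in turn, all via the same $S$-$S$ partition argument. For \propn{\alpha}{\beta}: a proportional allocation requires $v_i(\pi(i)) \geq S$ for both agents, which, since each agent's bundles together have value at most $2S$, forces $v_1(\pi(1)) = v_2(\pi(2)) = S$; this is exactly a \yes instance of \textsc{Partition}, and conversely any balanced partition yields a proportional allocation. For \cefn{\alpha}{\beta}: since the allocation must be complete and the valuations are identical, envy-freeness forces $v_i(\pi(1)) = v_i(\pi(2)) = S$, again equivalent to the \textsc{Partition} instance. For \poefn{\alpha}{\beta}: I would use the standard fact that with identical additive valuations, any Pareto-optimal allocation must be complete (otherwise, handing any unallocated positively-valued item to either agent is a Pareto improvement), so this case reduces to the complete-EF case. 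The strongly compact variants are handled identically because strong $(\alpha,\beta)$-compactness is also vacuous on cliques for $\alpha,\beta \geq 1$.

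I do not foresee a genuine technical obstacle here: the only mild subtlety is checking that the \poefn{\alpha}{\beta} case really does collapse to the complete-EF case, which is the small argument about PO forcing completeness under identical additive valuations. Everything else is bookkeeping, and the weak \NPH ness (i.e., hardness under binary encoding) is inherited directly from the \NPH ness of \textsc{Partition}.
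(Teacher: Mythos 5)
Your proposal is correct and follows essentially the same route as the paper: reduce from \textsc{Partition}, encode the items as a clique so that every bundle is automatically (strongly) $(\alpha,\beta)$-compact for $\alpha,\beta\geq 1$, and observe that each fairness/efficiency combination forces both identical-valuation agents to receive value exactly half the total. Your treatment of the Pareto-optimality case (PO forcing all positively-valued items to be allocated) is in fact slightly more explicit than the paper's one-line equivalence, but it is the same argument.
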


\begin{proof}
To prove \NPH ness, we reduce from the {\sc Partition} problem, in which the input consists of $m$ non-negative integers $x_1, x_2, \ldots, x_m$ such that $\sum_{j \in [m]} x_j = 2W$; and the question is to decide if there exists a subset of indices $J \subseteq [m]$ such that $\sum_{j \in J} x_j = W$. Given an instance $(x_1, x_2,\ldots, x_m)$ of {\sc Partition}, we construct an instance $(G, N, \ca{V})$ of \xcompact{\alpha}{\beta} as follows, (where $[\mathtt{X}]$ is a placeholder for proportionality or envy-freeness and completeness or envy-freeness and Pareto-optimality). We take $N$ = [2]; and for every $j \in [m]$, introduce an item $z_j$. We take $G$ to be the clique with vertex set $\set{z_j ~|~ j \in [m]}$. Finally, for $i \in N$, we define the valuation function $\fn{v_i}{V(G)}{\mathbb{Q}_{\geq 0}}$, where $v_i(z_j) = x_j$. Notice that for any $Z \subseteq V(G)$, the subgraph $G[Z]$ is a clique and hence (strongly) $(\alpha, \beta)$-compact. Now, an allocation $\fn{\pi}{N}{2^{V(G)}}$ is proportional (or complete and envy-free, or Pareto-optimal and envy-free) if and only if $v_i(\pi(i)) = \sum_{z_j \in \pi(i)} x_j  = (1/2) \cdot  \sum_{j = 1}^m x_j = W$ for each $i \in [2]$. 
\end{proof}

\subsection{Compact Allocations of Bounded Degree Graphs}
\label{sec:max-degree}
We now consider the special case of \xcompact{\alpha}{\beta} when the maximum degree of the graph $G$ is bounded, where $[\mathtt{X}]$ is a placeholder for proportionality, envy-freeness or maximin fairness. Consider $(G, N, \ca{V})$, where the degree of each vertex of $G$ is at most $\Delta$. Observe then that for every $z \in V(G)$ and for every $j \geq 0$, the number of vertices of $G$ that are at a distance of $j$ from $z$ is at most $\Delta^j$. We thus have $\card{B_G(z, \beta)} \leq \sum_{j = 0}^{\beta} \Delta^{j} \leq \Delta^{\beta + 1}$. As an immediate consequence of this observation, we derive the following lemma. 

\begin{lemma}\label{lem:maxdegree}
Consider an instance $(G, [n], \ca{V})$ of \compact{\alpha}{\beta}. Then the  number of $(\alpha, \beta)$-comapct allocations of $(G, [n], \ca{V})$ is at most $m^{\alpha n} \cdot  2^{\alpha n \cdot \Delta^{\cO(\beta)}}$, where $\Delta$ is the maximum degree of $G$. Moreover, we can enumerate all $(\alpha, \beta)$-compact allocations in time $m^{\cO(\alpha n)} \cdot 2^{\alpha n \cdot \Delta^{\cO(\beta)}}$. 
\end{lemma}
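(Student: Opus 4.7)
The plan is to exploit the already-noted bound $|B_G(z,\beta)| \le \Delta^{\beta+1}$ to show that every $(\alpha,\beta)$-compact subset of $V(G)$ lies inside a union of $\alpha$ small balls, and then to iterate over all such centre-tuples together with all subsets of the resulting union.

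First I would establish the following structural observation: if $S \subseteq V(G)$ is such that $G[S]$ is $(\alpha,\beta)$-compact, then there exist vertices $z_1,\ldots,z_\alpha \in V(G)$ with $S \subseteq \bigcup_{j=1}^{\alpha} B_G(z_j,\beta)$. This is immediate, because a distance-$\beta$ dominating set of $G[S]$ witnesses compactness, and since $\dist_{G[S]}(z,z_j) \ge \dist_G(z,z_j)$ for all $z,z_j \in S$, every $z \in S$ satisfies $\dist_G(z,z_j) \le \beta$ for some $j$. Together with the bound $|B_G(z,\beta)| \le \Delta^{\beta+1}$, the union $\bigcup_j B_G(z_j,\beta)$ has at most $\alpha\,\Delta^{\beta+1}$ vertices.

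Next I would count $(\alpha,\beta)$-compact bundles. Each such bundle $S$ is determined (not uniquely) by an ordered tuple of at most $\alpha$ centres in $V(G)$ together with a subset of the associated ball union. There are at most $m^{\alpha}$ centre tuples, and for each tuple at most $2^{\alpha \Delta^{\beta+1}}$ choices of $S$, so the number of distinct $(\alpha,\beta)$-compact bundles is at most $m^{\alpha} \cdot 2^{\alpha \Delta^{\beta+1}}$. An $(\alpha,\beta)$-compact allocation is a choice of one such bundle per agent (the empty set is compact by convention), so the number of $(\alpha,\beta)$-compact allocations is at most $\bigl(m^{\alpha} \cdot 2^{\alpha \Delta^{\beta+1}}\bigr)^{n} = m^{\alpha n} \cdot 2^{\alpha n \cdot \Delta^{\cO(\beta)}}$, giving the claimed bound.

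Finally, for the enumeration part, I would run through the candidates implicit in this count: for every $n$-tuple of centre tuples $(z_1^i,\ldots,z_\alpha^i)_{i \in [n]}$ (at most $m^{\alpha n}$ options) and every $n$-tuple of subsets $S_i \subseteq \bigcup_j B_G(z_j^i,\beta)$ (at most $2^{\alpha n \Delta^{\beta+1}}$ options), I would verify three things in time $m^{\cO(1)}$: the $S_i$ are pairwise disjoint, $\{z_1^i,\ldots,z_\alpha^i\} \subseteq S_i$ for every $i$, and each $G[S_i]$ is $(\alpha,\beta)$-compact with distance-$\beta$ dominating set $\{z_1^i,\ldots,z_\alpha^i\}$ (a BFS from each centre inside $G[S_i]$ suffices). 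Any valid $(\alpha,\beta)$-compact allocation appears in this enumeration (possibly several times, but it is straightforward to suppress duplicates using a set data structure of the produced allocations). The total running time is $m^{\cO(\alpha n)} \cdot 2^{\alpha n \cdot \Delta^{\cO(\beta)}}$. The only delicate point, and the one I would emphasise, is that distances inside $G[S]$ may exceed distances in $G$; this is precisely why the inclusion $S \subseteq \bigcup_j B_G(z_j,\beta)$ in the first step is the right one-sided relation to use for upper-bounding, while compactness of $G[S_i]$ is then re-verified explicitly during enumeration.
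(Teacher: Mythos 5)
Your proposal is correct and follows essentially the same route as the paper: bound each bundle by a choice of at most $\alpha$ centres together with a subset of the union of their balls in $G$ (using $\dist_{G[S]}(z,z') \geq \dist_G(z,z')$ and $\card{B_G(z,\beta)} \leq \Delta^{\beta+1}$), raise to the $n$-th power, and enumerate all candidates with polynomial-time verification. The only slip is that your enumeration's requirement $\set{z_1^i,\ldots,z_\alpha^i} \subseteq S_i$ would exclude allocations in which some agent receives the empty bundle (which is compact by convention); dropping that containment check and simply re-verifying compactness of each $G[S_i]$ directly, as the paper does, fixes this without affecting the running time.
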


\begin{proof}
To prove the lemma, we will bound the possible number of choices for an $(\alpha, \beta)$-compact allocation. 
Consider any $(\alpha, \beta)$-compact allocation $\fn{\pi}{N}{2^{V(G)}}$. Then for every $i \in [n]$, there exist vertices $z_i^1, z_i^2,\ldots, z_i^{\alpha} \in V(G)$ such that $\pi(i) = \bigcup_{j = 1}^{\alpha} B_{G[\pi(i)]}(z_i^j, \beta) \subseteq \bigcup_{j = 1}^{\alpha} B_{G}(z_i^j, \beta)$. 
And as we observed earlier, we have $\card{B_G(z, \beta)} \leq \Delta^{\cO(\beta)}$, and thus $\card{\pi(i)} \subseteq \alpha \cdot \Delta^{\cO(\beta)}$. 

Now, the number of choices for $(z_i^1, z_i^2,\ldots, z_i^{\alpha})$ is at most $m^{\alpha}$; and the number of choices for $\pi(i)$ such that $\pi(i) \subseteq \bigcup_{j = 1}^{\alpha} B_{G}(z_i^j, \beta)$ is at most $2^{\alpha \cdot \Delta^{\cO(\beta)}}$. Hence the number of choices for $\pi(i)$ is at most $m^{\alpha} \cdot 2^{\alpha \cdot \Delta^{\cO(\beta)}}$. 

Notice now that any allocation $\pi$ is characterised by the tuple $(\pi(1), \pi(2),\ldots, \pi(n))$. Therefore, the number of choices for $\pi$ is at most $\left(m^{\alpha} \cdot 2^{\alpha \cdot \Delta^{\cO(\beta)}}\right)^n = m^{\alpha n} \cdot 2^{\alpha n \cdot \Delta^{\cO(\beta)}}$. 

To enumerate all $(\alpha, \beta)$-compact allocations, we go over all possible choices for $(z_i^j))_{i \in [n], j \in [\alpha]}$ and consider all possible choices for $(\pi(1), \pi(2),\ldots, \pi(n))$, where $\pi(i) \subseteq \bigcup_{j = 1}^{\alpha} B_{G}(z_i^j, \beta)$; and we check if $(\pi(1), \pi(2),\ldots, \pi(n))$ constitutes an $(\alpha, \beta)$-compact allocation. 
\end{proof}

Notice that once we enumerate all $(\alpha, \beta)$-compact allocations, we can verify if at least one of those allocations is fair, with respect to the fairness concepts proportionality, envy-freeness and maximin-fairness. Notice in particular that once we enumerate all $(\alpha, \beta)$-compact allocations, we can compute $\Gamma((\alpha, \beta)\mh \mathtt{com}) \mh \mms_i(G, [n], \ca{V})$ for every $i \in [n]$. Thus, as an immediate consequence of Lemma~\ref{lem:maxdegree}, we have the following result. 

\begin{theorem}\label{thm:maxdegree}
For every $\alpha, \beta \geq 0$, \propn{\alpha}{\beta}, \efn{\alpha}{\beta} and \mmsn{\alpha}{\beta} admit algorithms that run in time $m^{\cO(\alpha n)} \cdot 2^{\alpha n \cdot \Delta^{\cO(\beta)}}$. 
\end{theorem}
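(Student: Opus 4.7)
The plan is to derive the theorem as a direct consequence of Lemma~\ref{lem:maxdegree}. Since that lemma already enumerates every $(\alpha, \beta)$-compact allocation in time $m^{\cO(\alpha n)} \cdot 2^{\alpha n \cdot \Delta^{\cO(\beta)}}$, all that remains is to show that each of the three fairness conditions can be verified within a polynomial overhead per allocation (and with only polynomially many passes over the enumerated list for maximin fairness).

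For \propn{\alpha}{\beta} and \efn{\alpha}{\beta}, the algorithm simply enumerates the list $\ca{L}$ of all $(\alpha, \beta)$-compact allocations via Lemma~\ref{lem:maxdegree}, and for each $\pi \in \ca{L}$ checks the defining inequalities. Proportionality amounts to checking $v_i(\pi(i)) \geq (1/n)\, v_i(V(G))$ for every $i \in [n]$, which takes $\cO(nm)$ time per allocation; envy-freeness amounts to checking $v_i(\pi(i)) \geq v_i(\pi(j))$ for every ordered pair $i,j$, taking $\cO(n^2 m)$ time per allocation. Output \yes iff some enumerated allocation passes. Since $\card{\ca{L}} \leq m^{\alpha n} \cdot 2^{\alpha n \cdot \Delta^{\cO(\beta)}}$, the overall runtime stays within the claimed bound.

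For \mmsn{\alpha}{\beta}, the only extra step is computing the thresholds $\Gamma((\alpha,\beta)\mh\mathtt{com})\mh\mms_i(G, [n], \ca{V})$ for every agent $i$. Using the very same list $\ca{L}$, set
\[
\mms_i \;=\; \max_{\phi \in \ca{L}} \; \min_{j \in [n]} \; v_i(\phi(j))
\]
for each $i \in [n]$; this is one pass over $\ca{L}$ per agent, contributing an $n \cdot |\ca{L}| \cdot \cO(nm)$ term that is absorbed by the enumeration cost. Then make a second pass over $\ca{L}$ and accept iff some $\pi \in \ca{L}$ satisfies $v_i(\pi(i)) \geq \mms_i$ for every $i \in [n]$. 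Because $\Gamma((\alpha,\beta)\mh\mathtt{com})$ is exactly the set enumerated by $\ca{L}$, the $\mms_i$ computed this way equals the definition verbatim.

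There is no real obstacle; the content of the proof lies entirely in Lemma~\ref{lem:maxdegree}. The only detail to be careful about is that maximin fairness is defined as a max over the same compact-allocation family we are searching in, so the value $\mms_i$ must be computed over $\ca{L}$ rather than over all allocations $\Pi(G, [n], \ca{V})$; this is immediate from the definitions but worth stating explicitly so that the reader sees the two uses of the enumeration routine are over the same object. Once this is observed, the runtime bound $m^{\cO(\alpha n)} \cdot 2^{\alpha n \cdot \Delta^{\cO(\beta)}}$ follows immediately, yielding \XP\ algorithms parameterized by $n + \Delta$ for every fixed $\alpha$ and $\beta$.
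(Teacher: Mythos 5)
Your proposal is correct and follows exactly the paper's route: enumerate all $(\alpha,\beta)$-compact allocations via Lemma~\ref{lem:maxdegree}, check proportionality and envy-freeness directly on each, and compute the maximin thresholds $\Gamma((\alpha,\beta)\mh\mathtt{com})\mh\mms_i$ from the same enumerated list before the final check. The paper states this in one paragraph; your version merely spells out the per-allocation verification costs and the (correct) observation that the MMS maximum ranges over precisely the enumerated family.
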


\paragraph*{Analogous result for strongly compact variants.} We can use the same idea as in Lemma~\ref{lem:maxdegree} to solve the strongly $(\alpha, \beta)$-compact variants of the problems as well. If $\fn{\pi}{[n]}{2^{V(G)}}$ is a strongly compact allocation, then for every $i \in [n]$, there exist $V_i^1, V_i^2,\ldots, V_i^{\alpha}$ such that $\pi(i) = V_i^1 \cup V_i^2 \cup \cdots V_i^{\alpha}$ and for every $j \in [\alpha]$ and every $z, z' \in V_i^j$, we have $\dist_{G[\pi(i)]}(z, z') \leq \beta$. So, for any fixed $\hat z_i^j \in V_i^j$, we have $V_i^j \subseteq B_{G[\pi(i)]}(z_i^j, \beta) \subseteq B_G(z_i^j, \beta)$. Hence $\card{V_i^j} \leq \card{B_G(z_i^j, \beta)} \leq \Delta^{\cO(\beta)}$. Thus the number of strongly $(\alpha, \beta)$-compact allocations is also bounded by $m^{\alpha n} \cdot 2^{\alpha n \cdot \Delta^{\cO(\beta)}}$. To enumerate all such allocations, we go over all tuples $(z_i^j)_{i \in [n], j \in [\alpha]}$ such that $z_i^j \in V_i^j$. Moreover, notice that for each tuple $(V_i^1, V_i^2, \ldots, V_i^{\alpha})$, for each $j \in [\alpha]$ and $z, z' \in V_i^{j}$, we can indeed check in polynomial time whether $\dist_{G[\pi(i)]}(z, z') \leq \beta$, where $\pi(i) = V_i^1 \cup V_i^2 \cup \cdots V_i^{\alpha}$. That is, we can check whether $\pi(i)$ is indeed strongly $(\alpha, \beta)$-compact. We thus have the following result. 

\begin{theorem}\label{thm:maxdegree-s}
For every $\alpha, \beta \geq 0$, \spropn{\alpha}{\beta}, \sefn{\alpha}{\beta} and \smmsn{\alpha}{\beta} admit algorithms that run in time $m^{\cO(\alpha n)} \cdot 2^{\alpha n \cdot \Delta^{\cO(\beta)}}$.
\end{theorem}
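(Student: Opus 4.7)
The plan is to mirror the enumeration strategy of Lemma~\ref{lem:maxdegree}, but adapt the candidate-generation step to reflect the stronger structural requirement. Recall that a strongly $(\alpha,\beta)$-compact bundle $\pi(i)$ admits a cover $\pi(i) = V_i^1 \cup \cdots \cup V_i^{\alpha}$ where every pair $z,z' \in V_i^j$ satisfies $\dist_{G[\pi(i)]}(z,z') \le \beta$. The first observation is that for any fixed representative $\hat z_i^j \in V_i^j$ we have $V_i^j \subseteq B_{G[\pi(i)]}(\hat z_i^j,\beta) \subseteq B_G(\hat z_i^j,\beta)$, so the same degree-based counting as in Lemma~\ref{lem:maxdegree} gives $|V_i^j| \le \Delta^{\cO(\beta)}$ and hence $|\pi(i)| \le \alpha \cdot \Delta^{\cO(\beta)}$.

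Armed with that bound, I would enumerate candidate allocations as follows. First loop over all tuples $(\hat z_i^j)_{i \in [n], j \in [\alpha]} \in V(G)^{\alpha n}$, giving at most $m^{\alpha n}$ choices. For each such tuple, loop over all ways to pick $V_i^j \subseteq B_G(\hat z_i^j,\beta)$ with $\hat z_i^j \in V_i^j$; since each ball has size at most $\Delta^{\cO(\beta)}$, the total number of choices here is at most $2^{\alpha n \cdot \Delta^{\cO(\beta)}}$. Set $\pi(i) := V_i^1 \cup \cdots \cup V_i^\alpha$.

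The main new step, compared with the $(\alpha,\beta)$-compact case, is the \emph{verification} that a candidate is genuinely strongly compact: I would check in polynomial time that the $\pi(i)$'s are pairwise disjoint, and that for every $i$, every $j \in [\alpha]$, and every $z,z' \in V_i^j$, $\dist_{G[\pi(i)]}(z,z') \le \beta$. This is the main subtlety, because $(V_i^j)$ being contained in a ball of $G$ around $\hat z_i^j$ does not by itself guarantee that all pairwise distances realized \emph{inside} the induced subgraph $G[\pi(i)]$ are bounded by $\beta$; shortest paths may need to leave $\pi(i)$. Hence we cannot skip the induced-distance check as we implicitly could for the weaker $(\alpha,\beta)$-compact notion. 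Computing $\dist_{G[\pi(i)]}$ is a straightforward BFS, so the test costs only $m^{\cO(1)}$ per candidate.

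Finally, for every candidate allocation that survives the strong-compactness check, I evaluate the relevant fairness criterion. Proportionality and envy-freeness are checked directly from the valuations in $m^{\cO(1)}$ time. For maximin fairness, I exploit that the same enumeration simultaneously yields, for each agent $i$, the value $\Gamma((\alpha,\beta)\mh\mathtt{scom})\mh\mms_i(G,[n],\ca{V}) = \max_{\pi} \min_{j} v_i(\pi(j))$ ranging over all strongly compact allocations already enumerated, after which checking $v_i(\pi(i)) \ge \mms_i$ is immediate. Summing up, the overall running time is $m^{\cO(\alpha n)} \cdot 2^{\alpha n \cdot \Delta^{\cO(\beta)}}$, matching the claimed bound.
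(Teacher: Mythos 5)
Your proposal is correct and follows essentially the same route as the paper: bound each $V_i^j$ by a ball $B_G(\hat z_i^j,\beta)$ of size $\Delta^{\cO(\beta)}$, enumerate all center tuples and subsets, verify strong compactness via the induced-distance check $\dist_{G[\pi(i)]}(z,z')\le\beta$, and then test the fairness notion (computing the maximin shares from the same enumeration). The subtlety you highlight --- that containment in a ball of $G$ does not certify bounded distances in $G[\pi(i)]$, so an explicit polynomial-time check is needed --- is exactly the point the paper makes as well.
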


\subsection{The Case of \texorpdfstring{$\bm{(\alpha, \beta) = (1, 0)}$}{(alpha, beta) = (1, 0)}}
\label{sec:one-zero}
Consider an instance $(G, N, \ca{V})$ of \xcompact{1}{0}. Observe that for any $(1, 0)$-compact allocation $\pi$, we have $\card{\pi(i)} \leq 1$ for every $i \in N$. That is, every agent must receive at most one item. In this case, we can reduce the problems corresponding to proportionality and maximin fairness to matching problems. For proportionality, we construct the agents-items bipartite graph $H$: there is a vertex in $H$ for every agent $i \in N$ and there is a vertex in $H$ for every item $z \in V(G)$. And there is an edge in $H$ between $i \in N$ and $z \in V(G)$ if $v_i(z) \geq (1/n) \cdot v_i(V(G))$. Notice that $(G, N, \ca{V})$ admits an allocation that is proportional and $(1, 0)$-compact if and only if $H$ has a matching that saturates all the agent-vertices, which we can check in polynomial time. 

For maximin fairness, we apply an identical procedure. Recall that $\Gamma((1, 0) \mh \mathtt{com})$ denotes the set of all $(1, 0)$ compact allocations. For convenience, in this section, we simply write $\mms_i$ for $\Gamma((1, 0) \mh \mathtt{com}) \mh \mms_i(G, N, \ca{V})$ for every $i \in N$. As before, we construct the agents-items bipartite graph $H$ but with the following difference. There is an edge in $H$ between $i$ and $z$ if and only if $v_i(z) \geq \mms_i$. 
The rest of the arguments are identical to that of proportionality. To construct $H$, we need to compute $\mms_i$. But this is quite straightforward, as $\mms_i = 0$ if $m < n$ and otherwise, $\mms_i$ is precisely the value agent $i$ has for her $n$th most valued item. We relegate a formal proof of this fact to the appendix. 

\begin{toappendix}
Recall that $\Gamma((1, 0) \mh \mathtt{com})$ denotes the set of all $(1, 0)$ compact allocations. For convenience, in this section, we simply write $\mms_i$ for $\Gamma((1, 0) \mh \mathtt{com}) \mh \mms_i(G, N, \ca{V})$ for every $i \in N$. The following lemma says that $\mms_i = 0$ if $m < n$;  and otherwise, $\mms_i$ is precisely the value agent $i$ has for her $n$th most valued item.
\begin{lemma}
Fix $i \in [n]$ and consider an ordering $\sigma^i = (z^i_1, z^i_2, \ldots, z^i_m)$ of the items such that $v_i(z^i_1) \geq v_i(z^i_{2}) \geq \cdots \geq v_{i}(z^i_m)$. If $m < n$, then $\mms_i = 0$; otherwise $\mms_i = v_i(z^i_n)$. 
\end{lemma}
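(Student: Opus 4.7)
The plan is to handle the two cases ($m < n$ and $m \geq n$) separately, in each case showing $\mms_i$ matches the claimed value by exhibiting an allocation achieving it (lower bound) and arguing no allocation does better (upper bound).

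For the case $m < n$, I would use a pigeonhole argument. Since every $(1,0)$-compact allocation $\pi$ satisfies $|\pi(j)| \leq 1$ for every $j \in [n]$, and there are fewer than $n$ items in total, at least one agent $j^\star$ must receive $\pi(j^\star) = \emptyset$ in any such allocation. By additivity, $v_i(\pi(j^\star)) = 0$, so $\min_{j \in [n]} v_i(\pi(j)) = 0$. Taking the max over all $\pi \in \Gamma((1,0)\mh\mathtt{com})$ yields $\mms_i \leq 0$. Since valuations are non-negative, $\mms_i \geq 0$ trivially, and we conclude $\mms_i = 0$.

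For the case $m \geq n$, I would first establish the lower bound $\mms_i \geq v_i(z^i_n)$ by exhibiting a concrete allocation. Define $\fn{\pi^\star}{[n]}{2^{V(G)}}$ by $\pi^\star(j) = \{z^i_j\}$ for each $j \in [n]$. Since each bundle has at most one vertex, $\pi^\star$ is $(1,0)$-compact and hence lies in $\Gamma((1,0)\mh\mathtt{com})$. By the ordering $\sigma^i$, we have $\min_{j \in [n]} v_i(\pi^\star(j)) = v_i(z^i_n)$, and this is a lower bound on $\mms_i$.

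For the matching upper bound $\mms_i \leq v_i(z^i_n)$, I would argue as follows. Fix any $\pi \in \Gamma((1,0)\mh\mathtt{com})$. If some agent $j^\star$ has $\pi(j^\star) = \emptyset$, then $\min_j v_i(\pi(j)) = 0 \leq v_i(z^i_n)$, and we are done for this $\pi$. Otherwise, every agent receives exactly one item, so the bundles $\pi(1), \ldots, \pi(n)$ correspond to $n$ distinct items $y_1, \ldots, y_n \in V(G)$. Among any $n$ distinct items, at least one of them must have $v_i$-value at most $v_i(z^i_n)$, because the top $n-1$ items under $\sigma^i$ are only $n-1$ in number and thus cannot contain the whole set $\{y_1,\ldots,y_n\}$. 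Hence $\min_{j \in [n]} v_i(\pi(j)) \leq v_i(z^i_n)$. Taking the max over all $\pi \in \Gamma((1,0)\mh\mathtt{com})$ gives $\mms_i \leq v_i(z^i_n)$. Combined with the lower bound, $\mms_i = v_i(z^i_n)$.

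No step here is a serious obstacle; the only mild subtlety is being careful that $(1,0)$-compactness really does force $|\pi(j)| \leq 1$ (which follows from the paper's observation that a graph is $(\alpha,0)$-compact iff it has at most $\alpha$ vertices), and that the pigeonhole/counting arguments use distinctness of the items across different agents' bundles, which follows from $\pi$ being an allocation (pairwise disjoint bundles).
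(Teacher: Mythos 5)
Your proof is correct and follows essentially the same route as the paper's: the same pigeonhole argument for $m < n$, the same explicit allocation $\pi^\star(j) = \{z^i_j\}$ for the lower bound, and the same counting argument (an allocation giving every agent one item must allocate some $z^i_r$ with $r \geq n$) for the upper bound. The only cosmetic difference is that you bound $\min_j v_i(\pi(j))$ for an arbitrary $\pi \in \Gamma((1,0)\mh\mathtt{com})$ and then take the maximum, whereas the paper fixes a maximising allocation $\phi$ first; these are equivalent.
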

\begin{proof}
Fix $i \in [n]$. Suppose first that $m < n$. Then, for any $(1, 0)$-compact allocation $\fn{\pi'}{[n]}{2^{V(G)}}$, there exists an agent $j' \in [n]$ such that $\pi'(j') = \emptyset$. Hence $v_i(\pi'(j')) = 0$, and in particular, $\min_{j \in N}v_i(\pi'(j)) = v_i(\pi'(j')) = 0$. Thus, by the definition of $\mms_i$, we have $\mms_i = 0$. Suppose now that $m \geq n$. Consider the ordering $\sigma^i = (z^i_1, z^i_2, \ldots, z^i_m)$. Recall that we have $v_i(z^i_1) \geq v_i(z^i_{2}) \geq \cdots \geq v_{i}(z^i_m)$. We will show that $\mms_i = v_i(z^i_n)$. Let us first see that $\mms_i \geq v_i(z^i_n)$. Consider the allocation $\pi$ defined by $\pi(j) = z^i_j$ for every $j \in [n]$. 
First, the allocation $\pi$ is $(1, 0)$-compact as each agent receives at most one item. Second, we have $v_i(z^i_n) \leq v_i(z^i_j)$ for every $j \in [n]$, and therefore, $\min_{j \in [n]} v_i(\pi(j)) = v_i(\pi(n)) = v_i(z^i_n)$. Thus, by the definition of $\mms_i$, we have $\mms_i \geq \min_{j \in [n]} v_i(\pi(j)) = v_i(z^i_n)$. Let us now see that $\mms_i \leq v_i(z^i_n)$. Let $\phi$ be a $(1, 0)$-compact allocation that maximises $\min_{j \in [n]} v_i(\phi(j))$, and we thus have $\mms_i = \min_{j \in [n]} v_i(\phi(j))$. If there exists an agent, say $j' \in [n]$, such that $j'$ does not receive any item under $\phi$, i.e., $\phi(j') = \emptyset$, then $v_i(\phi(j')) = 0$, which implies that $\mms_i = \min_{j \in [n]} v_i(\phi(j)) = 0 \leq v_i(z^i_n)$. So, assume that $\phi(j) \neq \emptyset$ for every $j \in [n]$. Since there are $n$ agents, and since $\phi$ is $(1, 0)$-compact, $\phi$ allocates exactly $n$ items. In particular, $\phi$ allocates $z^i_r$ for some $r \geq n$, which implies that $\mms_i = \min_{j \in [n]} v_i(\phi(j)) \leq v_i(z^i_r) \leq v_i(z^i_n)$. We have thus shown that $\mms_i = v_i(z^i_n)$. 
\end{proof}
\end{toappendix}

As for envy-freeness, we impose the additional constraint that each agent must receive exactly one item. Then the corresponding problem is  polynomial time solvable, as shown by~\cite{DBLP:journals/mss/GanSV19}. Their algorithm is also a matching algorithm, which works as follows: We construct the graph where each agent is adjacent to her most valued items and look for a matching that saturates all agents or a Hall's violator. If we find a matching, then we have an envy-free allocation; else, we remove a Hall's violator and recurse. 

We thus have the following result.

\begin{theorem}
\xcompact{1}{0} is polynomial time solvable, where $[\mathtt{X}]$ is a placeholder for proportionality, envy-freeness or maximin fairness. 
\end{theorem}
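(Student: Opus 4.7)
The plan is to handle each of the three fairness notions separately, reducing each to a bipartite matching problem. The crucial observation is that a graph $G$ is $(1,0)$-compact iff $\card{V(G)} \leq 1$, so any $(1,0)$-compact allocation simply assigns at most one item per agent, and the graph structure otherwise plays no role.

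For proportionality, I would build the agents-items bipartite graph $H$ with vertex classes $N$ and $V(G)$, adding the edge $iz$ whenever $v_i(z) \geq (1/n) \cdot v_i(V(G))$. Then $(G, N, \ca{V})$ admits a proportional $(1,0)$-compact allocation iff $H$ has a matching saturating $N$: from such a matching read off the allocation and note that the remaining unmatched items can simply be discarded, and conversely any proportional $(1,0)$-compact allocation sends each agent to an edge of $H$. Bipartite matching can be decided in polynomial time, so this case is done.

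The maximin case follows the same template once the shares $\mms_i$ are in hand. The key auxiliary claim I would prove is: if $m < n$ then $\mms_i = 0$; and if $m \geq n$ then $\mms_i = v_i(z^i_n)$, where $z^i_1, \ldots, z^i_m$ is an ordering of items with $v_i(z^i_1) \geq \cdots \geq v_i(z^i_m)$. The case $m < n$ is immediate because at least one agent receives the empty bundle under any $(1,0)$-compact allocation. For the lower bound when $m \geq n$, the allocation $j \mapsto z^i_j$ is $(1,0)$-compact and satisfies $\min_{j \in N} v_i(\pi(j)) = v_i(z^i_n)$. For the upper bound, any $(1,0)$-compact allocation that assigns every agent an item distributes $n$ distinct items, so at least one of them lies in the tail $\set{z^i_n, z^i_{n+1}, \ldots, z^i_m}$ and therefore has value at most $v_i(z^i_n)$ to agent $i$; allocations that leave some agent empty-handed give $\min_j v_i(\pi(j)) = 0 \leq v_i(z^i_n)$. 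After computing each $\mms_i$ by sorting in $\cO(m \log m)$ time, I would build $H$ with edges $iz$ whenever $v_i(z) \geq \mms_i$ and once more look for a matching saturating $N$.

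For envy-freeness, the empty allocation is trivially envy-free, so the relevant variant (as the surrounding text indicates) requires that every agent receive exactly one item. This is precisely the house-allocation envy-free matching problem, for which I would invoke the polynomial-time algorithm of~\cite{DBLP:journals/mss/GanSV19} as a black box. The only mildly substantive step in the entire argument is the $\mms$ characterisation above; everything else is a clean reduction to bipartite matching, so I do not anticipate any real obstacle.
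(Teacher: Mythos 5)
Your proposal is correct and follows essentially the same route as the paper: the same reduction to bipartite matching for proportionality and maximin fairness (with the identical characterisation of $\mms_i$ as zero when $m < n$ and the $n$th most valued item otherwise, proved by the same pigeonhole argument), and the same invocation of the algorithm of \cite{DBLP:journals/mss/GanSV19} for the envy-free case under the exactly-one-item constraint.
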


\subsection{Diameter of a Connected \texorpdfstring{$\bm{(\alpha, \beta)}$}{(alpha, beta)}-Compact Graph}
\label{sec:diameter}
Observe that $(\alpha, \beta)$-compactness is not a subgraph-closed property. That is, if $G$ is an $(\alpha, \beta)$-compact graph and $H$ is a subgraph of $G$ then $H$ need not necessarily be $(\alpha, \beta)$-compact. But if $H$ is a connected component of $G$, then $H$ necessarily is $(\alpha, \beta)$-compact. Moreover, the diameter of $H$ is bounded by $\cO(\alpha \beta)$. We record this fact below. 

\begin{lemma}\label{lem:diam}
Let $G$ be an $(\alpha, \beta)$-compact graph. Then for every connected component $H$ of $G$, $H$ is $(\alpha, \beta)$-compact and $\diam(H) = \cO(\alpha \beta)$. 
\end{lemma}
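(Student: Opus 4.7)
The plan is to establish the two claims in turn. For the first, I would fix centres $z_1, \ldots, z_\alpha \in V(G)$ witnessing the $(\alpha, \beta)$-compactness of $G$, and observe that whenever $v \in V(H)$ lies in $B_G(z_i, \beta)$, the corresponding shortest $G$-path from $v$ to $z_i$ is confined to the connected component $H$. In particular, the relevant centre lies in $V(H)$ and $\dist_H(v, z_i) = \dist_G(v, z_i) \leq \beta$, so the centres $\{z_i : z_i \in V(H)\}$, padded by arbitrary repetitions to have exactly $\alpha$ entries, witness the $(\alpha, \beta)$-compactness of $H$ itself.

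For the diameter bound I would take centres $z_1, \ldots, z_\alpha \in V(H)$ of $H$ as given by the previous step and introduce an auxiliary graph $F$ on the vertex set $\{z_1, \ldots, z_\alpha\}$ in which $z_i$ and $z_j$ are adjacent iff $\dist_H(z_i, z_j) \leq 2\beta + 1$. The heart of the argument will be to show that $F$ is connected. Assuming for contradiction a non-trivial vertex partition $S_1 \sqcup S_2$ of $F$ with no cross edges, define $V_1 = \bigcup_{i \in S_1} B_H(z_i, \beta)$ and $V_2 = \bigcup_{j \in S_2} B_H(z_j, \beta)$. A vertex in $V_1 \cap V_2$ would witness two centres from opposite sides at $H$-distance at most $2\beta$, and an $H$-edge with one endpoint in $V_1$ and the other in $V_2$ would similarly witness two such centres at $H$-distance at most $2\beta + 1$; either situation contradicts the assumption that there are no $F$-edges between $S_1$ and $S_2$. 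Hence $V_1$ and $V_2$ would be a non-trivial partition of $V(H)$ with no edges between them, contradicting connectedness of $H$.

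Once $F$ is known to be connected on $\alpha$ vertices, its graph-diameter is at most $\alpha - 1$, and each $F$-edge contributes at most $2\beta + 1$ to $H$-distance, so $\dist_H(z_i, z_j) \leq (\alpha - 1)(2\beta + 1)$ for any two centres. For arbitrary $u, v \in V(H)$, choosing centres $z_i, z_j$ with $\dist_H(u, z_i), \dist_H(v, z_j) \leq \beta$ and applying the triangle inequality will yield $\dist_H(u, v) \leq 2\beta + (\alpha - 1)(2\beta + 1) = \cO(\alpha \beta)$, completing the proof. The only delicate step is the connectedness of $F$, but it essentially reduces to the observation that intersecting balls, or balls joined by an $H$-edge, must have nearby centres.
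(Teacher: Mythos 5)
Your proposal is correct and takes essentially the same approach as the paper: the first claim is handled identically (restricting the covering balls to the component), and the diameter bound in both cases goes through an auxiliary graph on the centres — your adjacency relation $\dist_H(z_i,z_j) \leq 2\beta+1$ coincides with the paper's ``balls intersect or are joined by an $H$-edge'' relation, followed by the same path-expansion/triangle-inequality bound of roughly $(\alpha-1)(2\beta+1) + 2\beta$. The only difference is that you prove connectivity of the auxiliary graph via an explicit cut argument, a detail the paper asserts without proof; that is a welcome addition rather than a deviation.
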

\begin{proof}
Since $G$ is $(\alpha, \beta)$-compact, there exist vertices $z_1, z_2,\ldots, z_{\alpha} \in V(G)$ such that $V(G) = \bigcup_{j \in [\alpha]} B_G(z_j, \beta)$. Consider a connected component $H$ of $G$. Notice that for every $j \in [\alpha]$ such that $H$ contains the vertex $z_j$, we have $B_G(z_j, \beta) = B_H(z_j, \beta)$. Let $J = \set{j \in [\alpha] ~|~ z_j \in V(H)}$. Then $V(H) = \bigcup_{j \in J} B_H(z_j, \beta)$. As $\card{J} \leq \alpha$, $H$ is $(\alpha, \beta)$-compact. 

To see that $\diam(H) = \cO(\alpha \beta)$, observe first that for every $j \in J$ and $z, z' \in B_H(z_j, \beta)$, we have $\dist_H(z, z') \leq \dist_H(z, z_j) + \dist_H(z_j, z') \leq 2\beta$. Now, consider an auxiliary graph $H'$ on $\card{J}$ vertices, defined as follows. For each $j \in J$, $H'$ contains a vertex $y_j$. For $j, j' \in J$, $y_j y_{j'} \in E(H')$ if and only if either $B_H(z_j, \beta) \cap B_H(z_{j'}, \beta) \neq \emptyset$ or there exists $z \in B_H(z_j, \beta)$ and $z' \in B_H(z_{j'}, \beta)$ such that $zz' \in E(H)$. (Informally, $H'$ is obtained from $H$ by ``contracting'' each $B_H(z_j, \beta)$ into a single vertex). Notice that $H'$ has $\card{J} \leq \alpha$ vertices; and since $H$ is connected, $H'$ is connected as well. Now, consider $z, z' \in V(H)$. Let us observe that $H$ contains a $z$-$z'$ path of length $\cO(\alpha \beta)$. Let $j, j' \in J$ be such that $z \in B_H(z_j, \beta)$ and $z' \in B_H(z_{j'}, \beta)$. Let $Q$ be a path in $H'$ from $y_j$ to $y_{j'}$; the length of $Q$ is at most $\card{J} - 1 \leq \alpha - 1$. Observe that we can ``expand out'' each vertex of $Q$ to a path in $H$ of length at most $2\beta$; and thus obtain a $z$-$z'$ path of length $\cO(\alpha \beta)$. Hence $\dist_H(z, z') = \cO(\alpha \beta)$ for every $z, z' \in V(H)$, and the lemma follows.  
\end{proof}

\begin{observation}\label{obs:alpha1-diam}
Let us observe an immediate consequence of Lemma~\ref{lem:alpha1}. Consider a graph $G$ and an $(\alpha, \beta)$-compact allocation $\fn{\pi}{[n]}{2^{V(G)}}$. {\bf Then the subgraph of $\bm{G}$ induced by the set of allocated vertices, i.e., $\bm{G[\pi([n])]}$, is $\bm{(n \alpha, \beta)}$ compact.} To see this, notice that for each $i \in [n]$, there exist $z_{1}^{i}, z_2^i,\ldots, z_{\alpha}^i$ such that $\pi(i) = \bigcup_{j = 1}^{\alpha} B_{G[\pi(i)]}(z_j^i, \beta)$. 
Notice that $B_{G[\pi(i)]}(z_j^i, \beta) \subseteq B_{G[\pi([n])]}(z_j^i, \beta)$. 
And we have $\pi([n]) = \bigcup_{i = 1}^n \pi(i) = \bigcup_{i = 1}^n \bigcup_{j = 1}^{\alpha} B_{G[\pi(i)]}(z_j^i, \beta) \subseteq \bigcup_{i = 1}^n \bigcup_{j = 1}^{\alpha} B_{G[\pi([n])]}(z_j^i, \beta)$. That is, $G[\pi([n])]$ can be covered by at most $n \alpha$ balls, each of radius at most $\beta$, which implies that $G[\pi(n)]$ is $(n \alpha, \beta)$-compact. But then, {\bf Lemma~\ref{lem:diam} implies that $\bm{\diam(H) = \cO(n \alpha \beta)}$ for every connected component $\bm{H}$ of $\bm{G[\pi([n])]}$}. 
\end{observation}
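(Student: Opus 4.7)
The plan is to verify the two claims in the observation in order: first, that $G[\pi([n])]$ is $(n\alpha, \beta)$-compact, and second, that each of its connected components has diameter $\cO(n\alpha\beta)$. The strategy for the first claim is to explicitly exhibit a distance-$\beta$ dominating set of size at most $n\alpha$ in $G[\pi([n])]$, and for the second claim, to appeal directly to Lemma~\ref{lem:diam}.

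For the first claim, I would proceed as follows. By hypothesis, for each $i \in [n]$ the induced subgraph $G[\pi(i)]$ is $(\alpha, \beta)$-compact, so I can pick witnesses $z_1^i, z_2^i, \ldots, z_\alpha^i \in \pi(i)$ with $\pi(i) = \bigcup_{j=1}^{\alpha} B_{G[\pi(i)]}(z_j^i, \beta)$. I would then claim that $S := \{z_j^i : i \in [n], j \in [\alpha]\}$, which has size at most $n\alpha$, distance-$\beta$-dominates the whole allocated subgraph. The key step is a monotonicity-of-balls observation: since $G[\pi(i)]$ is a subgraph of $G[\pi([n])]$, every $z$-$z'$ path in the former is still available in the latter, so $\dist_{G[\pi([n])]}(z, z') \leq \dist_{G[\pi(i)]}(z, z')$ for all $z, z' \in \pi(i)$, which immediately yields $B_{G[\pi(i)]}(z_j^i, \beta) \subseteq B_{G[\pi([n])]}(z_j^i, \beta)$. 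Taking unions over $i$ and $j$ then gives $\pi([n]) = \bigcup_{i, j} B_{G[\pi(i)]}(z_j^i, \beta) \subseteq \bigcup_{i, j} B_{G[\pi([n])]}(z_j^i, \beta)$, witnessing that $G[\pi([n])]$ is $(n\alpha, \beta)$-compact.

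For the second claim, I would simply invoke Lemma~\ref{lem:diam} on $G[\pi([n])]$ with parameters $(n\alpha, \beta)$ in place of $(\alpha, \beta)$; that lemma directly delivers $\diam(H) = \cO((n\alpha) \cdot \beta) = \cO(n\alpha\beta)$ for every connected component $H$ of $G[\pi([n])]$. There is essentially no obstacle in this proof: the only substantive point is the monotonicity-of-balls step, which is a one-line consequence of the definition of shortest-path distance in induced subgraphs, after which the diameter bound is a black-box application of the preceding lemma.
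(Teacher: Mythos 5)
Your proposal is correct and follows essentially the same argument as the paper: both pick the per-agent witnesses $z_j^i$, use the monotonicity of balls under passing from $G[\pi(i)]$ to the supergraph $G[\pi([n])]$, take unions to get an $(n\alpha,\beta)$-compact cover, and then invoke Lemma~\ref{lem:diam} as a black box for the diameter bound. The only difference is that you spell out the one-line justification for ball monotonicity (shortest paths in an induced subgraph survive in the larger induced subgraph), which the paper states without proof.
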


\subsection{Annotated Allocations}
\label{sec:annotated} 
Consider a graph $G$. Suppose that $G$ is a $(\alpha, \beta)$-compact. Then there exist vertices $z_1, z_2,\ldots, z_{\alpha} \in V(G)$ such that $V(G) = \bigcup_{j \in [\alpha]} B_{G}(z_j, \beta)$. Let $G'$ be the graph obtained from $G$ by adding a new vertex $\hat z$ and making $\hat z$ adjacent to $z_1, z_2,\ldots z_{\alpha}$. We then have $V(G') = B_{G'}(\hat z, \beta + 1)$. That is, $G'$ is $(1, \beta + 1)$-compact. To exploit this idea in our algorithms, we introduce an annotated variant of \compact{\alpha}{\beta}. 
For a graph $G$, a vertex $\hat z \in V(G)$ and a non-negative integer $\beta$, we say that $G$ is a $(\hat z; \beta)$-annotated graph if $z \in B_G(\hat z, \beta)$ for every $z \in V(G)$, i.e., every vertex in $G$ is within a distance of at most $\beta$ from $\hat z$. Consider a graph $G$, a set of agents $N = [n]$ and an ordered sequence $(\hat z_i)_{i \in N}$ of $n$ distinct vertices $\hat z_i \in V(G)$. We say that an allocation $\fn{\pi}{N}{2^{V(G)}}$ is $((\hat z_i)_{i \in N}; \beta)$-annotated if for every $i \in [n]$, (i) $\hat z_i \in \pi(i)$ and (ii) $G[\pi(i)]$ is a $(\hat z_i; \beta)$-annotated graph. And we define the associated computational problem \anno\ as follows: Given $(G, N, \ca{V})$, $(\hat z_i)_{i \in N}$ and $\beta$ as input, does $(G, N, \ca{V})$ admit a $((\hat z_i)_{i \in N}; \beta)$-annotated allocation? We can augment the definition of the problem to include any of the fairness concepts as well. With respect to the fairness concept $[\mathtt{X}]$, we thus define \xanno: Given $(G, N, \ca{V})$, $(\hat z_i)_{i \in N}$ and $\beta$ as input, does $(G, N, \ca{V})$ admit an allocation that is $[\mathtt{X}]$-fair and $((\hat z_i)_{i \in N}; \beta)$-annotated?\footnote{To be precise, when it comes to annotated allocations, while the definitions of proportional and envy-free allocations remain unchanged, we must adapt the definition of maximin-fair annotated allocations appropriately. For $(\hat z_i)_{i \in N}$, we define $\Gamma(((\hat z_i)_{i \in N}; \beta) \mh \mathtt{anno})$ to be the set of all $((\hat z_i)_{i \in N}; \beta)$-annotated allocations. And an allocation $\pi$ is  $\Gamma(((\hat z_i)_{i \in N}; \beta) \mh \mathtt{anno})$-maximin fair if $v_i(\pi(i)) \geq  \Gamma(((\hat z_i)_{i \in N}; \beta) \mh \mathtt{anno}) \mh \mms_i(G, N, \ca{V})$ for every $i \in N$.} The following lemma says that to solve an instance of \xcompact{\alpha}{\beta}, it is enough to solve $m^{\cO(\alpha n)}$ many instances of \xanno. 

\begin{lemma}\label{lem:alpha1}
Consider an instance $(G, [n], \ca{V})$ of \xcompact{\alpha}{\beta}. Then there exist $m^{\cO(\alpha n)}$ many instances of \xanno\ such that $(G, [n], \ca{V})$ is a yes-instance of \xcompact{\alpha}{\beta} if and only if at least one of the $m^{\cO(\alpha n)}$ instances of \xanno\ is a yes-instances. Moreover, these $m^{\cO(\alpha n)}$ instances of \xanno\ can be constructed in time $m^{\cO(\alpha n)}$. 
\end{lemma}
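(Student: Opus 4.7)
The plan is to reduce the $(\alpha, \beta)$-compact problem to the annotated problem by enumerating, for each agent $i$, a tuple of $\alpha$ candidate ``centers'' drawn from $V(G) \cup \{\star\}$, where $\star$ marks an unused slot. For each tuple $t = (z_i^j)_{i \in [n], j \in [\alpha]}$, I would construct an augmented graph $G_t$ by attaching $n$ new dummy vertices $\hat{z}_1, \ldots, \hat{z}_n$ to $G$ and, for every pair $(i, j)$ with $z_i^j \in V(G)$, adding the edge $\hat{z}_i z_i^j$. I would then extend every valuation $v_i$ to the dummies by setting $v_i(\hat{z}_k) = 0$ for all $i, k$, and emit the annotated instance $((G_t, [n], \ca{V}_t), (\hat{z}_i)_{i \in [n]}, \beta+1)$. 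The number of tuples is $(m+1)^{\alpha n} = m^{\cO(\alpha n)}$, and each instance is constructible in polynomial time.

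For the forward direction, I would start with a fair $(\alpha, \beta)$-compact allocation $\pi$ of $(G, [n], \ca{V})$, pick for each $i$ a witness tuple $z_i^1, \ldots, z_i^\alpha \in \pi(i)$ for the compactness of $G[\pi(i)]$ (padding with $\star$ when fewer than $\alpha$ centers suffice, and using all $\star$'s when $\pi(i) = \emptyset$), and set $\pi'(i) = \pi(i) \cup \{\hat{z}_i\}$ in the annotated instance corresponding to this tuple. Any $z \in \pi(i)$ lies within distance $\beta$ of some $z_i^j$ in $G[\pi(i)]$, and $z_i^j$ is adjacent to $\hat{z}_i$ in $G_t$, so $\dist_{G_t[\pi'(i)]}(z, \hat{z}_i) \leq \beta + 1$; hence $\pi'$ is $((\hat{z}_i); \beta+1)$-annotated. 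Since dummies carry zero value, $v_i(\pi'(i)) = v_i(\pi(i))$, so proportionality and envy-freeness transfer directly; maximin fairness transfers because the annotated maximin share (a max taken over a subclass of compact allocations) never exceeds the compact maximin share.

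For the backward direction, given a fair annotated allocation $\pi'$ in some constructed instance, I would set $\pi(i) = \pi'(i) \setminus \{\hat{z}_i\}$. The crucial verification is $(\alpha, \beta)$-compactness of each $G[\pi(i)]$: for any $z \in \pi(i)$, a shortest $z$-to-$\hat{z}_i$ path in $G_t[\pi'(i)]$ has length at most $\beta+1$, and its penultimate vertex must be one of the $z_i^j \in \pi'(i) \cap V(G)$ because $\hat{z}_i$'s only neighbors in $G_t$ are these chosen centers. Since the shortest path is simple, its prefix from $z$ to $z_i^j$ avoids $\hat{z}_i$ and hence lies entirely in $G[\pi(i)]$, certifying $\dist_{G[\pi(i)]}(z, z_i^j) \leq \beta$. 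Thus $\pi(i)$ is covered by at most $\alpha$ balls of radius $\beta$ in $G[\pi(i)]$, so it is $(\alpha, \beta)$-compact, and the zero valuations of dummies transfer fairness back to $\pi$.

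The main obstacle I expect is the shortest-path-extraction argument in the backward direction, specifically verifying that the penultimate vertex is one of the designated $z_i^j$'s and that the prefix avoids the dummy. Everything else is bookkeeping: enumerating the $(m+1)^{\alpha n}$ tuples, constructing each annotated instance in polynomial time, and confirming fairness preservation, all of which become routine once the dummy-vertex trick is set up with zero valuations and the annotated radius is bumped from $\beta$ to $\beta + 1$.
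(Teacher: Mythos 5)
Your proposal is correct and takes essentially the same approach as the paper: the same enumeration of candidate centers per agent, the same dummy-vertex construction with zero valuations and radius bumped to $\beta+1$, and the same penultimate-vertex/simple-path argument in the backward direction. The only cosmetic difference is that you enumerate ordered tuples over $V(G) \cup \{\star\}$ whereas the paper enumerates pairwise-disjoint subsets $C_i \subseteq V(G)$ of size at most $\alpha$; both yield $m^{\cO(\alpha n)}$ instances and the distinction does not affect correctness.
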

\begin{proof}
Let $T$ be the set of all tuples $(C_1, C_2,\ldots, C_n)$ such that for every $i \in [n]$, $C_i \subseteq V(G)$, $\card{C_i} \leq \alpha$,  and $C_i \cap C_j = \emptyset$ for distinct $i, j \in [n]$. Corresponding to each tuple $\tau = (C_1, C_2,\ldots, C_n) \in T$, we construct an instance $(G^{\tau}, [n], \ca{V}^{\tau}; (\hat z_i)_{i \in [n]}, \beta^{\tau})$ of \xanno\ as follows. To construct $G^{\tau}$, we add $n$ new vertices $\hat z_1, \hat z_2,\ldots, \hat z_n$ to $G$; and for each $i \in [n]$, we make $\hat z_i$ adjacent to all the vertices in $C_i$. We set $v^{\tau}_i(\hat z_j) = 0$ and $v^{\tau}_i(z) = v_i(z)$ if $z \in V(G^{\tau}) \setminus \set{\hat z_j ~|~ j \in [n]}$. And we take $\beta^{\tau} = \beta + 1$. 

First, notice that $\card{T} = m^{\cO(\alpha n)}$, as the number of subsets of $V(G)$ of size at most $\alpha$ is $\sum_{j = 0}^{\alpha}\binom{m}{j} = m^{\cO(\alpha)}$; and hence there are $m^{\cO(\alpha n)}$ tuples $(C_1, C_2,\ldots, C_n)$ with $\card{C_i} \leq \alpha$ for every $i \in [n]$. And notice that for each tuple $\tau \in T$, we can construct the corresponding instance $(G^{\tau}, [n], \ca{V}^{\tau}; (\hat z_i)_{i \in [n]}, \beta^{\tau})$ of \anno\ in time $m^{\cO(1)}$. We can also verify that $(G, [n], \ca{V})$ is a yes-instance of \xcompact{\alpha}{\beta} if and only if $(G^{\tau}, [n], \ca{V}^{\tau}; (\hat z_i)_{i \in [n]}, \beta^{\tau})$ is a yes-instance of \xanno\ for some $\tau \in T$. 

Consider an $(\alpha, \beta)$-compact allocation $\fn{\pi}{[n]}{2^{V(G)}}$. Then for each $i \in [n]$, there exist $z^i_1, z^i_2,\ldots, z^i_{\alpha}$ such that $\pi(i) = \bigcup_{j \in [\alpha]} B_{G[\pi(i)]}(z^i_j, \beta)$. Consider the tuple $\tau = (C_i)_{i \in [n]}$, where $C_i = \set{z^i_1, z^i_2,\ldots, z^i_{\alpha}}$. Then $\tau \in T$. Consider the allocation $\fn{\pi^{\tau}}{[n]}{2^{V(G^{\tau})}}$, where $\pi^{\tau}(i) = \pi(i) \cup \set{\hat z_i}$. Notice that $\pi^{\tau}$ is a $((\hat z_i)_{i \in [n]}; \beta + 1)$-annotated allocation. This follows from the fact that for $i \in [n]$, and $z \in \pi(i)$, there exists $j \in [\alpha]$ such that $\dist_{G[\pi(i)]}(z^i_j, z) \leq \beta$, which implies that $\dist_{G^{\tau}[\pi^{\tau}(i)]}(\hat z_i, z) \leq \beta + 1$. 
Notice also that for every $i, j \in [n]$, as $v^{\tau}_i(\hat z_j) = 0$, we have $v_i(\pi(j)) = v^{\tau}_i(\pi^{\tau}(j))$; and hence $\pi$ is $[\mathtt{X}]$-fair if and only if $\pi^{\tau}$ is $[\mathtt{X}]$-fair. 

Conversely, suppose there exists $\tau = (C_i)_{i \in [n]} \in T$ such that $\fn{\phi}{[n]}{2^{V(G^{\tau})}}$ is a $((\hat z_i)_{i \in [n]}; \beta + 1)$-annotated allocation. Then the allocation $\fn{\phi'}{[n]}{2^{V(G)}}$ defined by $\phi'(i) = \phi(i) \setminus \set{\hat z_i}$ for every $i \in [n]$ is an $(\alpha, \beta)$-compact allocation. To see this, fix $i \in [n]$ and consider the set $D_i = \phi(i) \cap N_{G^{\tau}}(\hat z_i)$. Notice that $D_i \subseteq C_i$, and hence $\card{D_i} \leq \alpha$. 
Now consider $z \in \phi(i) \setminus \set{\hat z_i}$,  there exists a $z$-$\hat z_i$ path of length at most $\beta + 1$ in $G^{\tau}[\phi(i)]$. Let $P$ be such a path, and let $y_i$ be the unique neighbor of $\hat z_i$ in $P$. Then $y_i \in D_i$. Let $P'_i$ denote the $z$-$y_i$ subpath of $P$. Notice that $P'_i$ is a path in $G[\phi'(i)]$ and that $P'_i$ has length at most $\beta$. Thus $z \in B_{G[\phi'(i)]}(y_i, \beta)$. Since $z$ was an arbitrary element of $\phi'(i) = \phi(i) \setminus \set{\hat z_i}$, we can conclude that $\phi(i) = \bigcup_{y \in D_i} B_{G[\phi(i)]}(y, \beta)$, which implies that $G[\phi(i)]$ is $(\alpha, \beta)$-compact. Also, as $v^{\tau}_i(\phi(j)) = v_i(\phi'(j))$, $\phi$ is $[\mathtt{X}]$-fair if and only if $\phi'$ is $[\mathtt{X}]$-fair. 
\end{proof}

\begin{remark}
\label{rem:rrule}
Consider a tuple $\tau = (C_1, C_2,\ldots, C_n)$ as defined in the proof of Lemma~\ref{lem:alpha1}. 
And consider our construction of the instance $(G^{\tau}, [n], \ca{V}^{\tau}; (\hat z_i)_{i \in [n]}, \beta^{\tau})$ of \xanno. Recall that $\beta^{\tau} = \beta + 1$. Consider $z \in V(G)$ such that $z \notin \bigcup_{i \in [n]}\bigcup_{z' \in C_i} B_G(z', \beta)$. Notice that $z \notin \bigcup_{i \in [n]} B_{G^{\tau}}(\hat z_i, \beta + 1)$. Hence $z$ remains un-allocated under \emph{every} $((\hat z_i)_{i \in [n]}; \beta + 1)$-annotated allocation. We can therefore safely delete $z$ from $V(G)$. Therefore, we can assume without loss of generality that $V(G) = \bigcup_{i \in [n]}\bigcup_{z' \in C_i} B_G(z', \beta)$, which implies that $G$ is $(\alpha n, \beta)$-compact, as $\card{C_i} \leq \alpha$ for every $i \in [n]$. Then, by Observation~\ref{obs:alpha1-diam}, we have $\diam(H) = \cO(n \alpha \beta)$ for every connected component $H$ of $G$. 
\end{remark}

\subsection{The Complexity of Recognising (Strongly) Compact Graphs}
\label{sec:recognition}
Consider the recognition problem for (strongly) $(\alpha, \beta)$-compact graphs. That is, the problem of testing whether a given a graph $G$ is (strongly) $(\alpha, \beta)$-compact (for fixed $\alpha$ and $\beta$). We can recognise $(\alpha, \beta)$-compact graphs in polynomial time. Given an $m$-vertex graph $G$, for every choice of $\alpha$ vertices $z_1, z_2,\ldots, z_{\alpha}$, we can check if $V(G) = \bigcup_{j \in [\alpha]} B_G(z_j, \beta)$, and we can do this in time $m^{\cO(1)}$. Since the number of choices for $z_1, z_2,\ldots, z_{\alpha}$ is $m^{\cO(\alpha)}$, we conclude that we can check if $G$ is $(\alpha, \beta)$-compact in time $m^{\cO(\alpha)}$. As $\alpha$ is a constant, we can indeed check if $G$ is $(\alpha, \beta)$-compact in polynomial time. For future reference, we note this fact. 
\begin{remark}
\label{rem:recognition}
For fixed $\alpha$ and $\beta$, we can recognise $(\alpha, \beta)$-compact graphs in polynomial time. 
\end{remark}

But the case of strong compactness is quite different. For $\alpha \leq 2$, the problem is polynomial time solvable and for $\alpha \geq 3$, the problem is \NPH. If $\alpha = 1$, then the problem is the same as testing whether $\diam(G) \leq \beta$, which can be done in polynomial time. If $\alpha = 2$, then the problem can be reduced to {\sc 2-SAT}, and hence it can still be solved in polynomial time. For $\alpha \geq 3$, the problem is \NPH\ even when $\beta = 1$. Notice that $G$ is strongly $(\alpha, 1)$-compact if and only if there exist $V_1, V_2\ldots, V_{\alpha} \subseteq V(G)$ such that $G[V_i]$ is a clique for every $i \in [\alpha]$. So checking whether $G$ is $(\alpha, 1)$-compact is the same as checking whether $V(G)$ can be covered by $\alpha$ cliques. This problem is called the $\alpha$-{\sc Clique Cover} problem, which is known to be \NPC~\citep{karp1972reducibility}. In fact, notice that $\alpha$-{\sc Clique Cover} is equivalent to the well known $\alpha$-{\sc Colouring} problem. For a graph $G$, $V(G)$ can be covered using $\alpha$ cliques if and only if $V(\overline G)$ can be covered using $\alpha$ independent sets, where $\overline G$ is the complement of $G$; and $V(\overline G)$ can be covered using $\alpha$ independent sets if and only if the vertices of $\overline G$ can be properly coloured using $\alpha$ colours.

\section{Strong NP-Hardness of (Strongly) Compact Fair Division}\label{sec:nph}

In this section, we establish the strong \NPH ness of (strongly) compact fair division. 
We consider three fairness and efficiency combinations (i) proportionality, (ii) envy-freeness plus  completeness and (iii) envy-freeness plus Pareto-optimality. For each of these three choices,
we show that the corresponding problem of checking whether there exists a fair (strongly) $(\alpha, \beta)$-compact allocation is strongly \NP-hard for every fixed $\alpha \geq 3$ and $\beta \geq 0$.\footnote{Notice that even though Theorem~\ref{thm:lipton} already establishes the \NPH ness of these problems for $\alpha, \beta \geq 1$, Theorem~\ref{thm:lipton} only implies \emph{weak} \NPH ness, as we reduced from {\sc Partition}. Besides Theorem~\ref{thm:lipton} only works for $\beta \geq 1$.}  
These results hold even the graph $G$ is edgeless and thus has treewidth $0$. And notice that if $G$ is an edgeless graph, then for any $S \subseteq V(G)$, $G[S]$ is (strongly) $(\alpha, \beta)$-compact if and only if $\card{S} \leq \alpha$. 
For proportionality and envy-freeness + completeness, we also show that the strongly compact variants of the corresponding problems are \NPH\ for every fixed $\alpha, \beta \geq 1$. All the hardness results hold even when the valuations are additive and polynomially bounded. We prove these results one by one. 

\subsection{Strong NP-Hardness of \texorpdfstring{$\bm{(\alpha, \beta)}$}{(alpha, beta)}-Compact FD for \texorpdfstring{$\bm{\alpha \geq 3}$}{alpha >= 3}}

\begin{theorem}\label{thm:prop30}
For every $\alpha \geq 3$ and $\beta \geq 0$, \propn{\alpha}{\beta}\ is strongly \NPC. 
\end{theorem}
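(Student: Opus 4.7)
Membership in \NP\ is immediate: given a proposed allocation $\pi$, one can check in polynomial time that the bundles are pairwise disjoint, that each $G[\pi(i)]$ is $(\alpha,\beta)$-compact (via Remark~\ref{rem:recognition}), and that $v_i(\pi(i)) \geq (1/n)\,v_i(V(G))$ for every agent $i$. The bulk of the work is therefore the strong \NPH ness, which I would establish by a reduction from the strongly \NPH\ \textsc{3-Partition} problem: given $3k$ positive integers $a_1,\ldots,a_{3k}$ with $\sum_j a_j = kB$ and $B/4 < a_j < B/2$ for every $j$, decide whether $[3k]$ admits a partition into $k$ triples each of sum exactly $B$.

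Given such a \textsc{3-Partition} instance, the plan is to construct $(G, N, \ca{V})$ as follows. Take $N = [k]$ and let $G$ be the edgeless graph on vertices $z_1,\ldots,z_{3k}$. For each agent $i \in N$, set $v_i(z_j) = a_j$ (identical additive valuations). This gives every agent proportional share exactly $B$, and since $G$ is edgeless, every ball $B_G(z,\beta)$ is the singleton $\{z\}$ for \emph{every} $\beta \geq 0$; hence for any $S \subseteq V(G)$ the induced subgraph $G[S]$ is $(\alpha,\beta)$-compact if and only if $|S| \leq \alpha$. Thus the compactness condition degenerates, for any $\beta \geq 0$, to a pure cardinality constraint $|\pi(i)| \leq \alpha$.

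For correctness, suppose the \textsc{3-Partition} instance is a \yes-instance, witnessed by triples $T_1,\ldots,T_k \subseteq [3k]$. Setting $\pi(i) = \{z_j : j \in T_i\}$ yields each agent value exactly $B$ (hence proportional) and $|\pi(i)| = 3 \leq \alpha$, giving an $(\alpha,\beta)$-compact proportional allocation. Conversely, given any $(\alpha,\beta)$-compact proportional allocation $\pi$, proportionality forces $v_i(\pi(i)) \geq B$ for every $i$, so $\sum_i v_i(\pi(i)) \geq kB = v_1(V(G))$; disjointness of bundles then forces equality throughout and hence $v_i(\pi(i)) = B$ for every $i$ and moreover that all items are allocated. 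The bundles $\pi(1),\ldots,\pi(k)$ therefore partition $V(G)$ into $k$ subsets each of value $B$. Using $B/4 < a_j < B/2$, any subset of $[3k]$ of value $B$ has size strictly greater than $2$ and strictly less than $4$, so every $\pi(i)$ has exactly $3$ items, recovering a solution to \textsc{3-Partition}.

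All numerical values appearing in the constructed instance are just the $a_j$'s from the \textsc{3-Partition} input, so the reduction is polynomial even under unary encoding, and strong \NPH ness transfers. The construction does not use $\alpha$ or $\beta$ beyond the condition $\alpha \geq 3$, so the argument covers all pairs $(\alpha,\beta)$ with $\alpha \geq 3$ and $\beta \geq 0$ simultaneously. The main thing to be careful about is the possibility that an optimal allocation leaves items unallocated; but the value-counting argument above rules this out as soon as proportionality holds, which is what makes the edgeless reduction clean. No genuine obstacle arises beyond picking the right source problem and observing that compactness collapses to cardinality on edgeless graphs.
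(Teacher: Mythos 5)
Your proof is correct, but it takes a genuinely different route from the paper. The paper also works over an edgeless graph and exploits the same collapse of $(\alpha,\beta)$-compactness to the cardinality constraint $\card{S}\leq\alpha$, but it reduces from {\sc Exact Cover by $\alpha$-Sets}, which requires a separate lemma extending Karp's X$3$C hardness to all $\alpha\geq 3$, and a construction with one agent per set, a dummy agent, $r-s$ auxiliary items, a special item, and carefully tuned \emph{non-identical} valuations; the backward direction then argues that proportionality pins the special item on the dummy agent and forces $s$ agents to receive exactly their own $\alpha$-sets. Your reduction from {\sc 3-Partition} is leaner: identical valuations, no auxiliary gadgets, and the value-counting argument (sum of proportional shares equals the total value, so every bundle has value exactly $B$ and hence, by $B/4<a_j<B/2$, exactly three items) makes the compactness constraint irrelevant in the backward direction while trivially satisfied in the forward one since $3\leq\alpha$. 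Notably, your version is \emph{stronger} in one respect: it shows \propn{\alpha}{\beta} is strongly \NPH\ even for \emph{identical} additive valuations, whereas the paper's Discussion explicitly remarks that all of its strong \NPH ness results rely on non-identical valuations and poses hardness under further restrictions as an open direction. What the paper's heavier X$\alpha$C machinery buys is reusability: essentially the same gadget set is perturbed to handle envy-freeness plus completeness (Theorem~\ref{thm:cef30}) and envy-freeness plus Pareto-optimality (Theorem~\ref{thm:poef30}), where one must control inter-agent comparisons and welfare, not just each agent's own share; your identical-valuation reduction would need a separate (though plausible) argument for those variants. Both proofs handle \NP-membership the same way, via Remark~\ref{rem:recognition}, and your observation that the constructed values are the $a_j$'s themselves correctly transfers strong \NPH ness.
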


\begin{proof}
Fix $\alpha \geq 3$ and $\beta \geq 0$. Membership of \propn{\alpha}{\beta} in \NP\ is trivial: Given an allocation, we can verify in polynomial time whether it is proportional, and by Remark~\ref{rem:recognition}, we can also verify whether the allocation is $(\alpha, \beta)$-compact. We now show \NPH ness by a reduction from {\sc Exact Cover by ${\alpha}$-Sets (X$\alpha$C)}, which is known to be \NPC for every $\alpha \geq 3$~\citep{karp1972reducibility}.\footnote{Exact Cover by ${3}$-Sets (X$3$C) was, in fact, one of Karp's original 21 \NPC\ problems~\citep{karp1972reducibility}. The original paper of Karp only shows the \NP\ hardness of X3C. But it is straightforward to extend that result to show \NP-hardness of X$\alpha$C for every $\alpha > 3$ as well. For the sake of completeness, we sketch a proof of this fact in the appendix.} In X$\alpha$C, the input consists of a set $X$ such that $\card{X} = \alpha s$ and a family $\ca{F} \subseteq 2^X$ of subsets of $X$ such that $\card{F} = \alpha$ for every $F \in \ca{F}$; and the question is to decide if $(X, \ca{F})$ has an exact cover, i.e., a sub-family $\ca{F}' \subseteq \ca{F}$ such that $\card{\ca{F}'} = s$ and $\bigcup_{F \in \ca{F}'} F = X$. Consider an instance $(X, \ca{F})$ of X$\alpha$C. 
Let $ \card{F} = r$ and $\ca{F} = \set{F_1, F_2,\ldots, F_r}$.  Notice first that if $r = \card{\ca{F}} < s$, then $(X, \ca{F})$ is a no-instance of X$\alpha$C, and in this case, we return a trivial no-instance of \propn{\alpha}{\beta}. If $r = s$, then $(X, \ca{F})$ is a yes-instance if and only if $\bigcup_{F \in \ca{F}} = X$, and we can verify this in polynomial time. In this case, we return a trivial yes-instance or no-instance of \propn{\alpha}{\beta}\ accordingly. So from now on, we assume without loss of generality that $r > s$. Given $(X, \ca{F})$, we now construct an instance $(G, N, \ca{V})$ of \propn{\alpha}{\beta} as follows. Corresponding to each $F_i \in \ca{F}$, we introduce an agent $i$; in addition we introduce a ``dummy'' agent $r + 1$. Thus $N = [r + 1]$. Corresponding to each $x \in X$, we introduce an item $w_x$; in addition, we introduce $r - s$ ``auxiliary'' items $y_1, y_2,\ldots, y_{r - s}$ and a ``special'' item $y_*$. And we define the graph $G$ to be the edge-less graph with vertex set $V(G) = \set{y_*} \cup \set{y_j ~|~ j \in [r -s]} \cup \set{w_x ~|~ x \in X}$. Finally, we define the valuation functions as follows. Consider $i \in [r]$. For each $x \in X$, we set
\[
v_i(w_x) = \begin{cases} 1/(\alpha r + \alpha), & \text{ if } x \in F_i \\ 0, & \text{ if } x \notin F_i; \end{cases}
\] 
$v_i(y_j) = 1/(r + 1)$ for every $j \in [r - s]$; and $v_i(y_*) = s/(r + 1)$. Finally, $v_{r + 1}(w_x) = 0$ for every $x \in X$; $v_{r + 1}(y_j) = 0$ for every $j \in [r - s]$; and $v_{r + 1}(y_*) = 1$. 
Notice that we have $v_i(V(G)) = \sum_{z \in V(G)} v_i(z) = 1$ for every $i \in N$. And notice also that since $G$ is edge-less, for every non-empty subset $S \subseteq V(G)$, the subgraph $G[S]$ is $(\alpha, \beta)$-compact if and only if $\card{S} \leq \alpha$. 

Assume that $(X, \ca{F})$ is a yes-instance, and let $\ca{F}' \subseteq \ca{F}$ be an exact cover for $(X, \ca{F})$. Then $\card{\ca{F}} = s$ and $\card{\ca{F} \setminus \ca{F}'} = r - s$. Assume without loss of generality that $\ca{F} \setminus \ca{F}' = \set{F_1, F_2,\ldots, F_{r - s}}$.  
We define an allocation $\fn{\pi}{N}{2^{V(G)}}$ as follows. Consider $i \in [r]$. If $F_i \in \ca{F}'$, then we set $\pi(i) = \set{w_x ~|~ x \in F_i}$ so that $v_i(\pi(i)) = \sum_{x \in F_i} v_i(w_x) = \card{F_i}(1/(\alpha r + \alpha)) = 1/(r + 1)$; and if $F_i \in \ca{F} \setminus \ca{F}'$, then $i \in [r - s]$, and we set $\pi(i) = \set{y_{i}}$ so that $v_i(\pi(i)) = 1/(r + 1)$. And we set $\pi(r + 1) = \set{y_*}$ so that $v_{r + 1}(\pi(r + 1)) = 1$. Notice that $\pi$ is a complete allocation; and as $v_i(\pi(i)) \geq 1/(r +1)$ for every $i \in N$, $\pi$ is proportional. As $\card{\pi(i)} \leq \alpha$ for every $i \in N$, $\pi$ is $(\alpha, \beta)$-compact as well. 

Conversely, assume that $(G, N, \ca{V})$ admits an $(\alpha, \beta)$-compact, proportional allocation, say $\fn{\phi}{N}{2^{V(G)}}$. First, as $\phi$ is proportional, we have $v_i(\phi(i)) \geq (1/\card{N}) \cdot v_i(V(G)) = 1/(r + 1)$ for every $i \in N$. In particular, $v_{r + 1}(\phi(r + 1)) \geq 1/(r + 1)$, which implies that $y_* \in \phi(r + 1)$. Thus $y_* \notin \phi(i)$ for every $i \in [r]$. Let $R = \{i \in [r] ~|~ \phi(i) \subseteq \{w_x ~|~ x \in X \} \}$. Observe that $\card{R} \geq s$. To see this, notice that there exist at most $r -s$ agents $i \in [r]$ such that $y_j \in \phi(i)$ for some $j \in [r - s]$. Therefore there exist at least $s$ agents $i \in [r]$ such that $\phi(i) \subseteq \set{w_x ~|~ x \in X}$. Now, for every $i \in R$, we must have $w_x \in \phi(i)$ for every $x \in F_i$, for otherwise $v_i(\phi(i)) < 1/(r + 1)$. As $\phi$ is $(\alpha, \beta)$-compact and $G$ is edge-less, we have $\card{\phi(i)} \leq \alpha$, and hence $\phi(i) = \set{w_x ~|~ x \in F_i}$ for every $i \in R$. Also, since $\phi$ is an allocation, for distinct $i, i' \in R$, we have $\phi(i) \cap \phi(i') = \emptyset$ and hence $F_i \cap F_{i'} = \emptyset$. Then $\card{R} \geq s$ implies that $\card{\bigcup_{i \in R} F_i} = \alpha \card{R} \geq \alpha s$. Since $\card{X} = \alpha s$, we can conclude that $\card{R} = s$ and $\bigcup_{i \in R} F_i = X$, and hence $\set{F_i ~|~ i \in R}$ is an exact cover for $(X, \ca{F})$. 
\end{proof}

By slightly modifying the reduction in the proof of Theorem~\ref{thm:prop30}, we can show that \cefn{\alpha}{\beta} is \NPC\ as well. We only sketch the proof below as the arguments are identical to the ones used to prove Theorem~\ref{thm:prop30}. 

\begin{theorem}\label{thm:cef30}
For every $\alpha \geq 3$ and $\beta \geq 0$, \cefn{\alpha}{\beta} is strongly \NPC.
\end{theorem}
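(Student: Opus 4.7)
The plan is to adapt the reduction from {\sc X$\alpha$C} used in the proof of Theorem~\ref{thm:prop30} with a single modification to the valuations. Given an instance $(X, \ca{F})$ of {\sc X$\alpha$C}, we keep the same graph $G$, the same agent set $N = [r+1]$, the same items $\set{w_x ~|~ x \in X} \cup \set{y_j ~|~ j \in [r-s]} \cup \set{y_*}$, and the same values $v_i(w_x)$ and $v_i(y_j)$ for every $i \in [r]$, but we set $v_i(y_*) = 0$ for every $i \in [r]$ instead of $s/(r+1)$. The dummy agent $r+1$ retains $v_{r+1}(y_*) = 1$ and value $0$ on everything else.

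For the forward direction, given an exact cover $\ca{F}'$ of $(X, \ca{F})$, we use the same allocation $\pi$ as in the proof of Theorem~\ref{thm:prop30}. This is clearly complete and $(\alpha, \beta)$-compact since $G$ is edgeless and every bundle has size at most $\alpha$. The modification $v_i(y_*) = 0$ is precisely what is needed to make $\pi$ envy-free: every non-dummy agent values her own bundle at exactly $1/(r+1)$, she values any other non-dummy bundle at either $|F_i \cap F_j|/(\alpha(r+1)) \leq 1/(r+1)$ (for a $w$-bundle, using $|F_i \cap F_j| \leq \alpha$) or at $1/(r+1)$ (for a $y_j$-bundle), and she values the dummy's bundle at $0$. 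The dummy agent is trivially envy-free since every other bundle has value $0$ to her.

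For the reverse direction, the argument $y_* \in \pi(r+1)$ carries over verbatim from Theorem~\ref{thm:prop30}. The key new step is to re-establish the bound $v_i(\pi(i)) \geq 1/(r+1)$ for every $i \in [r]$, because after our modification the proportional share $v_i(V(G))/(r+1) = (r-s+1)/(r+1)^2$ derivable from envy-freeness and completeness is strictly weaker than $1/(r+1)$. We get around this by invoking envy-freeness against a specific bundle: either some agent $j \neq i$ has some $y_{j'} \in \pi(j)$, in which case $v_i(\pi(i)) \geq v_i(\pi(j)) \geq 1/(r+1)$; or no such $j$ exists, in which case completeness forces all $r-s$ items $y_{j'}$ into $\pi(i)$ itself, so $v_i(\pi(i)) \geq (r-s)/(r+1) \geq 1/(r+1)$ as $r > s$. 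Once this lower bound is in place, the remainder of the proof is essentially identical to that of Theorem~\ref{thm:prop30}: letting $R \subseteq [r]$ be the set of agents who receive no $y_{j'}$, we have $|R| \geq s$, compactness together with the lower bound $1/(r+1)$ forces $\pi(i) = \set{w_x ~|~ x \in F_i}$ for every $i \in R$, and pairwise disjointness of the allocated bundles yields $|R| = s$ and $\bigcup_{i \in R} F_i = X$, so $\set{F_i ~|~ i \in R}$ is the desired exact cover.

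The main obstacle, and the only substantive deviation from the proof of Theorem~\ref{thm:prop30}, is the case split above used to recover the proportional-style lower bound $1/(r+1)$ on $v_i(\pi(i))$ directly from envy-freeness after we have zeroed out $v_i(y_*)$; everything else, including membership in \NP\ and the arguments about disjointness, transfers unchanged.
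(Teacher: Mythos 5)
Your proposal is correct and follows essentially the same route as the paper's proof: both adapt the X$\alpha$C reduction of Theorem~\ref{thm:prop30}, and both recover the crucial per-agent lower bound from envy-freeness plus completeness via the items $y_1,\ldots,y_{r-s}$, after which the exact-cover extraction (the set $R$, compactness forcing $\pi(i) = \set{w_x ~|~ x \in F_i}$, disjointness) is identical. The only structural difference is cosmetic: the paper deletes the dummy agent $r+1$ and the special item $y_*$ outright and rescales the remaining valuations so that each agent's total is $1$ (making the threshold $1/(r-s+1)$), whereas you retain both and neutralize $y_*$ by setting $v_i(y_*) = 0$ for $i \in [r]$; the two modifications serve the same purpose, and your forward-direction envy computation is correct. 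One small caution: your claim that ``$y_* \in \pi(r+1)$ carries over verbatim'' is not literally right, since in Theorem~\ref{thm:prop30} that step is derived from \emph{proportionality}, which is no longer available; you need the one-line substitute that completeness forces $y_*$ to be allocated and the dummy's envy-freeness forces it into $\pi(r+1)$ (otherwise she values her own bundle at $0$ and the bundle containing $y_*$ at $1$). Alternatively, the step can be skipped entirely: if some $i \in R$ held $y_*$, then compactness would leave at most $\alpha - 1$ items of $F_i$ in $\pi(i)$, giving $v_i(\pi(i)) \leq (\alpha-1)/(\alpha(r+1)) < 1/(r+1)$, contradicting your lower bound. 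Either way this is a presentational slip, not a gap.
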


\begin{proof}[Proof Sketch]
Membership in \NP\ is again trivial. To show hardness, we reduce from (X$\alpha$C). Given an instance $(X, \ca{F})$, we proceed exactly as in the proof of Theorem~\ref{thm:prop30} except for the following three differences. (1) We no longer have the agent $r + 1$. That is, $N = [r]$ and $i \in N$ corresponds to the set $F_i \in \ca{F}$.  (2) We no longer have the item $y_*$. Thus the graph $G$ is the edge-less graph with vertex set $V(G) = \set{y_j ~|~ j \in [r -s]} \cup \set{w_x ~|~ x \in X}$. (3) We modify the valuation functions to ensure that (a) $v_i(\{w_ x ~|~ x \in X \}) = v_i(\{y_j \})$ for every $i \in [r]$ and $j \in [r - s]$ and (b) $\sum_{z \in V(G)} v_i(z) = 1$ for every $i \in N$. Specifically, we define the valuation functions as follows. Consider $i \in [r]$.  For $x \in X$, we have  
\[
v_i(w_x) = \begin{cases} 1/(\alpha r - \alpha s + \alpha), & \text{ if } x \in F_i \\ 0, & \text{ if } x \notin F_i; \end{cases}
\] 
and $v_i(y_j) = 1/(r - s + 1)$ for every $j \in [r - s]$. 

Now, if $(X, \ca{F})$ is a yes-instance, then we have an exact cover $\ca{F}' \subseteq \ca{F}$ for $(X, \ca{F})$; and we define an allocation $\fn{\pi}{N}{2^{V(G)}}$ exactly as in the proof of Theorem~\ref{thm:prop30}. We thus have $\card{\pi(i)} \leq \alpha$ for every $i \in [r]$, and hence $\pi$ is $(\alpha, \beta)$-compact.  
And for every pair of distinct $i, i' \in [r]$, we can verify that $v_i(\pi(i)) = 1/(r - s + 1)$ and $v_i(\pi(i')) \leq 1/(r - s + 1)$. Thus every agent is envy-free under $\pi$. 

Conversely, consider a complete, envy-free, $(\alpha, \beta)$-compact allocation $\fn{\phi}{N}{2^{V(G)}}$. Since $\phi$ is complete, every item is allocated; in particular, $y_j$ is allocated for every $j \in [r - s]$. Fix $ j \in [r - s]$ and let $i' \in [r]$ be such that $y_j \in \phi(i')$. Then for every agent $i \in [r]$, we have $v_{i}(\phi(i')) \geq 1/(r - s + 1)$. But then, since $\phi$ is envy-free, we must have $v_i(\phi(i)) \geq 1/(r - s + 1)$ as well. Now we consider the set $R = \{i \in [r] ~|~ \phi(i) \subseteq \{w_x ~|~ x \in X \} \}$. In particular, we must have $v_i(\phi(i)) \geq 1/(r - s + 1)$ for every $i \in R$. And using exactly the same arguments that we used in Theorem~\ref{thm:prop30}, we can prove that $\card{R} = s$ and $\set{F_i ~|~ i \in R}$ is an exact cover for $(X, \ca{F})$. 
\end{proof}

In Theorem~\ref{thm:cef30}, we can replace completeness with Pareto-optimality, and the reduction would still hold. We thus have the following result. 
\begin{theorem}\label{thm:poef30}
For every $\alpha \geq 3$ and $\beta \geq 0$, \poefn{\alpha}{\beta} is strongly \NPH. 
\end{theorem}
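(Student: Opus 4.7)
The plan is to re-use the reduction from $\alpha$-{\sc Exact Cover by $\alpha$-Sets} constructed in the proof of Theorem~\ref{thm:cef30} verbatim --- same agent set $N=[r]$, same edgeless graph $G$ on $V(G) = \set{w_x : x \in X} \cup \set{y_j : j \in [r-s]}$, same valuations, and the same WLOG assumption that $r > s$ and that every $x \in X$ lies in some $F_i$ (else we return a trivial no-instance). I would then re-establish both directions of the equivalence with Pareto-optimality taking the place of completeness.

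For the forward direction, given an exact cover $\ca{F}' \subseteq \ca{F}$, I would take $\pi$ to be the same allocation as in Theorem~\ref{thm:cef30}; envy-freeness and $(\alpha,\beta)$-compactness then transfer without change. The new content is Pareto-optimality of $\pi$, which I would establish with a utilitarian-welfare ceiling. For \emph{every} allocation $\pi'$,
\[
\sum_{i \in N} v_i(\pi'(i)) \;\le\; \alpha s \cdot \frac{1}{\alpha(r-s+1)} \;+\; (r-s) \cdot \frac{1}{r-s+1} \;=\; \frac{r}{r-s+1},
\]
since each $w_x$ contributes at most $\max_i v_i(w_x) = 1/(\alpha(r-s+1))$ and each $y_j$ contributes exactly $1/(r-s+1)$ regardless of the recipient. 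Because $\sum_{i} v_i(\pi(i)) = r/(r-s+1)$ already matches this ceiling, any $\pi'$ that weakly Pareto-dominates $\pi$ must achieve the same total welfare, so no allocation can strictly improve any single agent without hurting another; hence $\pi$ is Pareto-optimal.

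For the backward direction, given an envy-free, Pareto-optimal, $(\alpha,\beta)$-compact allocation $\phi$, the main extra task is to show that every auxiliary item $y_j$ is allocated by $\phi$; once that is done, the argument of Theorem~\ref{thm:cef30} applies unchanged. Pareto-optimality handles this cleanly: if some $y_j$ were unallocated, the pigeonhole count $\card{V(G)} = \alpha s + (r-s) < r\alpha$ (using $r > s$ and $\alpha \geq 3$) guarantees the existence of an agent $i^*$ with $\card{\phi(i^*)} < \alpha$, so adding $y_j$ to $\phi(i^*)$ strictly improves $i^*$, leaves every other agent unaffected, and preserves $(\alpha,\beta)$-compactness (since $G$ is edgeless and the bundle size remains at most $\alpha$), contradicting Pareto-optimality of $\phi$. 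With every $y_j$ allocated, the argument from Theorem~\ref{thm:cef30} goes through: some $y_j$ sits in $\phi(i_0)$, envy-freeness forces $v_i(\phi(i)) \ge 1/(r-s+1)$ for all $i$, the set $R = \set{i \in [r] : \phi(i) \subseteq \set{w_x : x \in X}}$ has $\card{R} \geq s$ by counting, and compactness together with the value lower bound pin down $\phi(i) = \set{w_x : x \in F_i}$ for each $i \in R$, yielding the exact cover $\set{F_i : i \in R\}}$.

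The subtlety I anticipate is in the forward direction: Pareto-optimality is compared against \emph{all} allocations, not merely $(\alpha,\beta)$-compact ones, so the compactness constraint cannot be invoked to forbid potential Pareto improvements. The utilitarian-welfare upper bound sidesteps this by operating purely with per-item value ceilings, and it is the only place where the precise numerical calibration of the valuations in the reduction becomes essential.
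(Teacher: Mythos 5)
Your proposal is correct and follows essentially the same route as the paper: the forward direction uses the identical utilitarian-welfare-ceiling argument (every allocation has welfare at most $\sum_{z \in V(G)} \max_{i} v_i(z) = r/(r-s+1)$, which $\pi$ attains, so no Pareto improvement is possible), and your backward-direction pigeonhole argument for getting the $y_j$'s allocated is precisely the argument the paper gives in a footnote. The only cosmetic difference is that the paper's main text uses an even shorter observation for the backward direction---every Pareto-optimal allocation must be complete, since each item has positive value to some agent and Pareto-optimality is defined over all (not just compact) allocations---but this is the same idea, and your version has the added virtue of surviving even a reading of Pareto-optimality restricted to compact allocations.
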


\begin{proof}[Proof Sketch]
We use the same reduction as in the proof of Theorem~\ref{thm:cef30}, and construct the instance $(G, N, \ca{V})$ of \poefn{\alpha}{\beta} from a given instance $(X, \ca{F})$ of X$\alpha$C. 

Notice that the allocation $\pi$ that we defined in the forward direction of the proof of Theorem~\ref{thm:cef30} is indeed Pareto-optimal. To prove this, it is sufficient to prove that $\pi$ achieves the maximum utilitarian social welfare, i.e., $\sum_{i \in N} v_i(\pi(i)) = \max_{\pi'} \sum_{i \in N}{v_i(\pi'(i))}$, where the maximum is over all allocations $\fn{\pi'}{N}{2^{V(G)}}$. Notice that for each $z \in V(G)$, if $\pi$ allocates $z$ to agent $i$, then for every agent $i'$, we have $v_i(z) \geq v_{i'}(z)$. This fact, along with the completeness of $\pi$ and the additivity of $v_i$s, implies that $\pi$ achieves the maximum utilitarian social welfare. To prove this formally, we can first verify that $\sum_{i \in N} v_i(\pi(i)) = (\alpha s)/(\alpha r - \alpha s + 1) + (r - s )/(r - s + 1) = r/(r - s + 1)$. 

Now, for any allocation $\fn{\pi'}{N}{2^{V(G)}}$, we have, 
\begin{align*}
\sum_{i \in N} v_i(\pi'(i)) &\leq \sum_{z \in V(G)} \max_{i \in N} v_i(z)  \\
&=\sum_{x \in X} \max_{i \in N} v_i(w_x) + \sum_{j \in [r - s]} \max_{i \in N} v_i(y_j) \\
&= \sum_{x \in X} \frac{1}{\alpha r - \alpha s + \alpha} +  \sum_{j \in [r - s]} \frac{1}{r - s + 1} \\
&= \frac{\alpha s}{\alpha r - \alpha s + \alpha}  + \frac{r - s}{r - s + 1}  \\
&= \frac{r}{r - s + 1} \\
&= \sum_{i \in N} v_i(\pi(i)),  
\end{align*} 
which shows that $\pi$ is Pareto-optimal. 

For the backward direction, consider a Pareto-optimal, envy-free, $(\alpha, \beta)$-compact allocation $\fn{\phi}{N}{2^{V(G)}}$. (Notice that in the proof of Theorem~\ref{thm:cef30}, we used completeness only to ensure that $y_j$ is allocated (for at least one $j \in [r -s]$) under the allocation $\phi$.  And for this, Pareto-optimality is sufficient.) Every Pareto-optimal allocation must necessarily be complete, because for every $z \in V(G)$, we have $v_i(z) > 0$ for some $i \in [r]$.\footnote{We can, in fact, argue that every Pareto-optimal, $(\alpha, \beta)$-compact allocation $\phi$ must necessarily allocate $y_j$ for some $j \in [r]$. Suppose $\phi$ does not allocate $y_j$ for every $j \in [r - s]$. Then $\phi([r]) \subseteq \{w_x ~|~ x \in X \} \}$, which implies that $\card{\phi([r])} \leq \alpha s$. Recall that we are under the assumption that $r > s$. Since $\card{\phi(i)} \leq \alpha$ for every $i \in [r]$, there exists an agent $p \in [r]$ such that $\card{\phi(p)} < \alpha$. We can thus allocate $y_1$ to $p$, (while still ensuring $(\alpha, \beta)$-compactness), which will increase $p$'s utility by $v_p(y_1) = 1/(r - s + 1)$. And this contradicts the assumption that $\phi$ is Pareto-optimal.}
The rest of the arguments are identical to that in the proof of Theorem~\ref{thm:cef30}. 
\end{proof}

Observe that in each of the reductions in the proofs of Theorems~\ref{thm:prop30}, \ref{thm:cef30} and \ref{thm:poef30}, the graph $G$ that we constructed is edge-less. Thus for $S \subseteq V(G)$, the subgraph $G[S]$ is $(\alpha, \beta)$-compact if and only if $G[S]$ is \emph{strongly} $(\alpha, \beta)$-compact. Therefore, even if we replace compactness with strong compactness in Theorems ~\ref{thm:prop30}, \ref{thm:cef30} and \ref{thm:poef30}, the hardness results would still hold. We thus have the following result.

\begin{theorem}
\label{thm:NPH-strong-compact}
For every $\alpha \geq 3$ and $\beta \geq 0$, \spropn{\alpha}{\beta}, \scefn{\alpha}{\beta} and \spoefn{\alpha}{\beta} are all strongly \NPH. 
\end{theorem}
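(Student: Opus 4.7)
The approach I would take is to observe that the three reductions already constructed—in the proofs of Theorems~\ref{thm:prop30}, \ref{thm:cef30}, and~\ref{thm:poef30}—all produce instances in which the underlying graph $G$ is edge-less, and that on edge-less graphs the notions of $(\alpha, \beta)$-compactness and strong $(\alpha, \beta)$-compactness coincide. So the plan is not to design anything new, but to re-read each of the three existing reductions as a reduction to the strongly compact variant.

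First I would record the small graph-theoretic fact that makes this work. If $G$ is edge-less and $S \subseteq V(G)$, then the induced subgraph $G[S]$ is itself edge-less, so $\dist_{G[S]}(z, z') = \infty$ for every pair of distinct $z, z' \in S$, and consequently $B_{G[S]}(z, \beta) = \set{z}$ for every $z \in S$ and every $\beta \geq 0$. Plugging this into the definition of $(\alpha, \beta)$-compactness gives that $G[S]$ is $(\alpha, \beta)$-compact if and only if $\card{S} \leq \alpha$ (the $\alpha$ centres must cover $S$ as singletons); and plugging it into the definition of strong $(\alpha, \beta)$-compactness gives that $G[S]$ is strongly $(\alpha, \beta)$-compact if and only if $\card{S} \leq \alpha$ as well (any part containing two distinct vertices would have infinite diameter in $G[S]$, so every part of the covering must be a singleton). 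Thus both notions collapse to the cardinality constraint $\card{S} \leq \alpha$ on edge-less graphs.

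Next, for each of the three fairness/efficiency combinations, I would observe that the allocation constructed in the forward direction of the corresponding reduction satisfies $\card{\pi(i)} \leq \alpha$ for every $i$, and is therefore simultaneously $(\alpha, \beta)$-compact and strongly $(\alpha, \beta)$-compact. For the backward direction, any strongly $(\alpha, \beta)$-compact, fair allocation is in particular $(\alpha, \beta)$-compact (since strong compactness always implies compactness), and the existing argument then returns an exact cover for the given X$\alpha$C instance. So, verbatim, the reductions in Theorems~\ref{thm:prop30}, \ref{thm:cef30}, and~\ref{thm:poef30} also witness polynomial-time many-one reductions from X$\alpha$C to \spropn{\alpha}{\beta}, \scefn{\alpha}{\beta}, and \spoefn{\alpha}{\beta}, respectively. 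Combined with the trivial containment of each of these problems in \NP\ (via the same certificate-verification routine, noting that strong $(\alpha, \beta)$-compactness of $G[S]$ is checkable in polynomial time by computing pairwise distances in $G[S]$), we obtain strong \NPH ness for all $\alpha \geq 3$ and $\beta \geq 0$.

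I do not expect any real obstacle here; the theorem is essentially a one-line corollary of the three preceding theorems once the edge-less collapse is recorded. The only thing worth double-checking carefully is that the reductions really do produce edge-less graphs across the entire range $\alpha \geq 3,\ \beta \geq 0$ (they do, by direct inspection of the construction of $G$ in each proof), and that the valuations remain additive and polynomially bounded, which is preserved automatically since the reductions are unchanged.
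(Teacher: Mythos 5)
Your proposal is correct and is essentially identical to the paper's own proof: the paper likewise observes that the reductions of Theorems~\ref{thm:prop30}, \ref{thm:cef30} and \ref{thm:poef30} produce edge-less graphs, on which $(\alpha,\beta)$-compactness and strong $(\alpha,\beta)$-compactness both collapse to the cardinality constraint $\card{S}\leq\alpha$, so the same reductions witness hardness of the strongly compact variants. One minor quibble: your closing remark about membership in \NP\ is unnecessary (and for \spoefn{\alpha}{\beta} not obviously true, since verifying Pareto-optimality is not clearly polynomial-time), but the theorem asserts only strong \NPH ness, so nothing is affected.
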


Theorems~\ref{thm:prop30}-\ref{thm:NPH-strong-compact}, however, do not cover the case when $\alpha \leq 2$. We provide a partial answer for this case in Section~\ref{sec:strong-NP-strong-compact}. In particular, for the strongly compact variants, we show that the problems of proportional as well as envy-free + complete allocations are still \NPH for every $\alpha \geq 1$ and $\beta \geq 1$. 

\begin{toappendix}
\begin{lemma}
For every $\alpha \geq 3$, {\sc Exact Cover by $\alpha$-Sets (X$\alpha$C)} is \NPC. 
\end{lemma}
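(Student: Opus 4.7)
My plan is to prove the lemma by induction on $\alpha$, with the base case $\alpha = 3$ being exactly Karp's result~\citep{karp1972reducibility}. Membership in \NP\ is immediate for every $\alpha \geq 3$: given a candidate subfamily $\ca{F}' \subseteq \ca{F}$ as a certificate, we can verify in polynomial time that $\card{\ca{F}'} = s$ and that $\bigcup_{F \in \ca{F}'} F = X$, which also witnesses pairwise disjointness by a counting argument.

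For the inductive step, I would reduce X$(\alpha-1)$C to X$\alpha$C by a simple padding construction. Given an instance $(X, \ca{F})$ of X$(\alpha-1)$C with $\card{X} = (\alpha - 1) s$ and $\card{F} = \alpha - 1$ for every $F \in \ca{F}$, I introduce $s$ fresh ``padding'' elements $y_1, y_2, \ldots, y_s$, set $X' = X \cup \set{y_1, y_2, \ldots, y_s}$ (so $\card{X'} = \alpha s$), and define $\ca{F}' = \set{F \cup \set{y_j} ~|~ F \in \ca{F},\ j \in [s]}$. Every set in $\ca{F}'$ has exactly $\alpha$ elements, and the construction is clearly polynomial in the size of $(X, \ca{F})$.

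The correctness argument is short. In the forward direction, if $\set{F_1, F_2, \ldots, F_s} \subseteq \ca{F}$ is an exact cover of $X$, then $\set{F_i \cup \set{y_i} ~|~ i \in [s]} \subseteq \ca{F}'$ is an exact cover of $X'$, since the $F_i$'s partition $X$ and each $y_j$ is covered exactly once. In the backward direction, any exact cover $\set{G_1, G_2, \ldots, G_s} \subseteq \ca{F}'$ of $X'$ has $G_i = F_i \cup \set{y_{j_i}}$ for some $F_i \in \ca{F}$ and some $j_i \in [s]$; because the $s$ padding elements must each be covered exactly once and each $G_i$ contributes exactly one of them, the map $i \mapsto j_i$ is a bijection on $[s]$, and the corresponding $F_i$'s partition $X$, yielding an exact cover of $(X, \ca{F})$.

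I do not anticipate a serious obstacle. The only points worth double-checking are (i) that the $F_i$'s recovered in the backward direction are bona fide (pairwise distinct) members of $\ca{F}$, which is automatic once we know they are non-empty and pairwise disjoint as subsets of $X$, and (ii) that the size constraints $\card{X'} = \alpha s$ and $\card{G} = \alpha$ for all $G \in \ca{F}'$ are preserved, both of which hold by construction. Since X$3$C is \NPC\ by~\citep{karp1972reducibility} and the inductive step both preserves \NP\ membership and transfers \NPH ness from X$(\alpha-1)$C to X$\alpha$C, the lemma follows by induction.
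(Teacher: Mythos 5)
Your proof is correct, but it takes a genuinely different route from the paper's. You argue by induction on $\alpha$, reducing X$(\alpha-1)$C to X$\alpha$C with an incremental padding: add $s$ fresh elements $y_1,\ldots,y_s$ and take the cross-product family $\set{F \cup \set{y_j} ~|~ F \in \ca{F},\ j \in [s]}$, so that the bijection between cover-sets and padding elements does all the work in the backward direction. The paper instead gives a one-shot reduction from X$3$C for every fixed $\alpha > 3$: each original set $F_i$ is padded with its own dedicated block $Y_i$ of $\alpha-3$ new elements, and---crucially---the family is augmented with \emph{all} $\alpha$-sized subsets of the new elements, which serve as filler sets to absorb the new elements left uncovered (those $Y_i$ with $F_i$ outside the chosen cover, plus the surplus). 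The paper's construction therefore needs careful counting to ensure the leftover new elements number a multiple of $\alpha$, and its family has size $(\alpha s r)^{\cO(\alpha)}$; your construction avoids both the filler sets and the divisibility bookkeeping entirely, at the cost of chaining $\alpha - 3$ reductions (harmless, since $\alpha$ is a fixed constant, so the family size $\card{\ca{F}} \cdot s^{\alpha - 3}$ stays polynomial). Both arguments are sound; yours is the more elementary of the two, while the paper's has the minor advantage of being a single reduction whose source problem is exactly Karp's X$3$C.
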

\begin{proof}
Membership in \NP\ is trivial. When $\alpha = 3$, the problem is known to be \NPC~\citep{karp1972reducibility}. Fix $\alpha > 3$. 
To prove \NP-hardness, we reduce from X3C. Consider an instance $(X, \ca{F})$ of X3C, where $X = \set{x_1, x_2,\ldots, x_{3s}}$ and $\ca{F} = \set{F_1, F_2,\ldots, F_r}$. Recall that $\card{F_i} = 3$ for every $i \in [r]$. We construct an instance $(U, \ca{A})$ of X$\alpha$C by (i) adding a carefully chosen number of new elements to $X$, (b) adding $\alpha - 3$ of the newly introduced elements to each $F_i \in \ca{F}$ and (c) adding all possible $\alpha$-sized subsets of the newly introduced elements to $\ca{F}$. To construct $U$, we add $3sr\alpha(\alpha - 3) - 3s$ new elements to $X$. Formally, let $X' = \set{y_1, y_2,\ldots, y_{3sr\alpha(\alpha - 3)}}$ such that $X' \cap X = \emptyset$; and we take $U = X \cup X'$. Thus $\card{U} = 3s + 3sr\alpha(\alpha - 3) - 3s = \alpha(\alpha - 3)(3sr)$. Let $\ca{F}'$ be the family of sets obtained from $\ca{F}$ as follows. For each $i \in [r]$, $Y_i = \{y_{i(\alpha - 3)}, y_{i(\alpha - 3) - 1}, y_{i(\alpha - 3) - (\alpha - 4)}\}$. That is, $Y_1 = \set{y_1, y_2,\ldots, y_{\alpha - 3}}$, $Y_2 = \set{y_{(\alpha - 3) + 1}, y_{(\alpha - 3) + 2},\ldots, y_{2(\alpha - 3)}}$ and so on. Now, let $F'_i = F_i \cup Y_i$ for every $i \in [r]$. Notice that $F'_i \subseteq U$ and $\card{F'_i} = \alpha$ for every $i \in [r]$ and $F'_i \cap F'_j = F_i \cap F_j$ for every distinct $i, j \in [r]$. Also, let $\ca{F}'$ be the family of all $\alpha$-sized subsets of $X'$, i.e., $\ca{F}' = \set{Y \subseteq X' ~|~ \card{Y} = \alpha}$. Notice that $\card{\ca{F}'} = \binom{\card{X'}}{\alpha} = \binom{3sr\alpha(\alpha - 3) - 3s}{\alpha} = (\alpha sr)^{\cO(\alpha)}$. And we take $\ca{A} = \set{F'_i ~|~ i \in [r]} \cup \ca{F}'$. Notice that $(U, \ca{A})$ can be constructed in polynomial time. We claim that $(X, \ca{F})$ is a yes-instance of X$3$C if and only if $(U, \ca{A})$ is a yes-instance of X$\alpha$C. 

Let $\hat{\ca{F}} \subseteq \ca{F}$ be an exact cover for $(X, \ca{F})$.  
Let $\hat{\ca{F}_1} = \{F'_i ~|~ F_i \in \hat{\ca{F}}\}$. Notice that $\card{\hat{\ca{F}_1}} = \card{\hat{\ca{F}}} = s$ and that $F'_i \cap F'_j = F_i \cap F_j = \emptyset $ for every distinct $F_i, F_j \in \hat{\ca{F}}$.  
Now, let $Y' = \bigcup_{F_i \in \hat{\ca{F}}} Y_i$. Then $\card{Y'} = \card{\hat{\ca{F}}} \cdot (\alpha - 3) = s(\alpha - 3)$. And consider the set $X' \setminus Y'$. Notice that $\card{X' \setminus Y'} = 3sr\alpha(\alpha - 3) - 3s - s(\alpha - 3) = s(3r\alpha(\alpha - 3) - 3 -\alpha + 3) = \alpha s(3r(\alpha - 3) - 1)$. Fix a partition $\hat{\ca{F}_2}$ of $X' \setminus Y'$ such that every set in $\hat{\ca{F}_2}$ has size exactly $\alpha$. Such a partition $\hat{\ca{F}_2}$ has exists as $\card{X' \setminus Y'}$ is a multiple of $\alpha$. Notice that $\card{\hat{\ca{F}_2}} = s(3r(\alpha - 3) - 1)$ and that that the sets in $\hat{\ca{F}_2}$ are pairwise-disjoint. We now define $\hat{A} = \hat{\ca{F}_1} \cup \hat{\ca{F}_2}$. And we have $\card{\hat{\ca{A}}} = \card{\hat{\ca{F}_1}} + \card{\hat{\ca{F}_2}} = s + s(3r(\alpha - 3) - 1) = (\alpha - 3)(3sr) = (1/{\alpha})\card{U}$, and $\bigcup_{F \in \hat{\ca{A}}} F = \left( \bigcup_{F \in \hat{\ca{F}_1}} F \right) \cup \left( \bigcup_{F \in \hat{\ca{F}_2}} F \right) = (X \cup Y') \cup (X' \setminus Y') = U$. Thus $\hat{\ca{A}}$ is an exact cover for $(U, \ca{A})$.  

Conversely, let $\ca{A}' \subseteq \ca{A}$ be an exact cover for $(U, \ca{A})$. Notice that for every set $F \in \ca{A}'$, either $\card{F \cap X} = 0$ or $\card{F \cap X} = 3$; and if $\card{F \cap X} = 3$. Since $\ca{A}'$ is an exact cover and since $\card{X} = 3s$, there exist exactly $s$ sets $F \in \ca{A}'$ such that $\card{F \cap X} = 3$. Also, notice that if $\card{F \cap X} = 3$, then $F = F'_i = F_i \cup Y_i$ for some $i \in [r]$. Let $\ca{F}_{\ca{A}'} = \set{F_i \in \ca{F} ~|~ F'_i \in \ca{A}'}$. Then $\ca{F}_{\ca{A}'}$ is an exact cover for $(X, \ca{F})$. 
\end{proof}
\end{toappendix}

\subsection{Strong NP-Hardness of Strongly \texorpdfstring{$\bm{(\alpha, \beta)}$}{(alpha, beta)}-Compact FD for \texorpdfstring{$\bm{\alpha, \beta \geq 1}$}{alpha, beta >= 1}}
\label{sec:strong-NP-strong-compact}

\begin{theorem}\label{thm:prop11}
For every fixed $\alpha, \beta \geq 1$, \spropn{\alpha}{\beta} is strongly \NPH.
\end{theorem}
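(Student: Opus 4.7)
The plan is to give a single reduction from \textsc{3-Partition}, which is strongly \NP-hard, that works uniformly for all $\alpha,\beta \geq 1$ by choosing a graph on which the strong compactness constraint is vacuous. Given an instance $(a_1,\ldots,a_{3n})$ with $\sum_j a_j = nB$ and $B/4 < a_j < B/2$ for every $j$, I would introduce $3n$ items $z_1,\ldots,z_{3n}$, take $G$ to be the complete graph on $\set{z_j ~|~ j \in [3n]}$, set $N=[n]$, and assign every agent $i$ the identical additive valuation $v_i(z_j) = a_j$, so that $v_i(V(G)) = nB$.

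The key observation is that every induced subgraph of $G$ is a clique. Hence, for every non-empty $S \subseteq V(G)$, $G[S]$ has diameter at most $1$, so $G[S]$ is strongly $(1,1)$-compact, and the witness $V_1 = S,\; V_2 = \cdots = V_{\alpha} = \emptyset$ shows that $G[S]$ is strongly $(\alpha,\beta)$-compact for every $\alpha,\beta \geq 1$. Consequently every allocation of $(G,N,\ca{V})$ is automatically strongly $(\alpha,\beta)$-compact, so the existence of a proportional strongly $(\alpha,\beta)$-compact allocation collapses to the classical question of whether a proportional allocation exists.

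The correctness argument is then standard. If $\pi$ is proportional, then $v_i(\pi(i)) \geq B$ for every $i$; since the bundles are pairwise disjoint and the total value of all items is $nB$, equality must hold, and each bundle has value exactly $B$. Combined with $B/4 < a_j < B/2$, this forces $\card{\pi(i)} = 3$ for every $i$, so $\pi$ partitions the integers into $n$ triples summing to $B$, i.e., a valid $3$-partition. The converse is immediate. Because the $a_j$'s are polynomially bounded in the \textsc{3-Partition} input, strong \NP-hardness follows even for identical, polynomially-bounded additive valuations, matching the constraints we want to claim in the statement.

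I do not foresee a real obstacle. The only subtlety is verifying that the paper's definition of strongly $(\alpha,\beta)$-compact permits the cover sets $V_i$ to be empty---which it does, since the diameter condition is vacuous on the empty set. The clique trick thereby sidesteps the need to engineer a graph whose strongly compact induced subgraphs encode $3$-partition structure, and this is precisely what lets a single, extremely simple reduction cover the cases $\alpha \in \set{1,2}$ that Theorem~\ref{thm:NPH-strong-compact} leaves open.
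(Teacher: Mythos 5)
Your reduction is correct, and it takes a genuinely different route from the paper. The paper proves Theorem~\ref{thm:prop11} by reducing from \textsc{$\beta$-Club}: its item graph is the club instance $H$ plus isolated padding vertices, and it engineers three classes of agents (special, regular, dummy) with non-identical valuations so that any proportional, strongly $(\alpha,\beta)$-compact allocation forces the special agent's bundle to be a $\beta$-club of size exactly $k$ in $H$; there the intractability is located in the graph-structural constraint itself. You instead choose the graph (a clique) precisely so that the strong-compactness constraint becomes vacuous---every induced subgraph has diameter at most $1 \leq \beta$, and the cover witness is immediate (even if one distrusts empty cover sets, taking $V_1 = \cdots = V_\alpha = S$ works, and the paper itself records that cliques are strongly $(1,1)$-compact)---and you import strong hardness purely from the numbers via \textsc{3-Partition}. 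Your counting argument is sound: proportionality gives each of the $n$ disjoint bundles value at least $B$, the total value $nB$ forces equality, the promise $B/4 < a_j < B/2$ forces every bundle to have exactly three items, hence all $3n$ items are allocated and the bundles form a valid $3$-partition; since the $a_j$ may be assumed polynomially bounded, this yields strong rather than weak \NPH ness.

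As for what each approach buys: your argument is more elementary and uniform in $\alpha,\beta$, it uses \emph{identical} valuations---which is notable, since the paper's Discussion explicitly flags that all of its strong hardness results rely on non-identical valuations and asks whether more restricted hardness can be shown---and it applies verbatim to the ordinary compact variant \propn{\alpha}{\beta} for every $\alpha, \beta \geq 1$, strengthening the weak-\NPH\ entries of Table~\ref{table:results} (the paper's strong hardness for the compact variant, Theorem~\ref{thm:prop30}, covers only $\alpha \geq 3$). What the paper's \textsc{$\beta$-Club} reduction buys in exchange is information your clique instance cannot provide: it shows the problem stays hard because of the strong-compactness requirement itself, i.e., it is already hard to decide whether a single distinguished agent can receive a sufficiently valuable bundle inducing a low-diameter subgraph, so the hardness does not hinge on making the compactness constraint degenerate.
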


\begin{proof}
Fix $\alpha, \beta \geq 1$.  
We show \NPH ness by a reduction from the {\sc $\beta$-Club} problem, in which the input consists of a graph $H$ and a non-negative integer $k$, and the question is to decide if $H$ has a $\beta$-club of size exactly $k$. A $\beta$-club in $H$ is a set of vertices $S \subseteq V(H)$ such that $\diam(H[S]) \leq \beta$. So the {\sc $\beta$-Club} problem asks if $H$ contains a set of exactly $k$ vertices that induces a subgraph of diameter at most $k$. This problem is \NPC~\citep{DBLP:journals/eor/BourjollyLP02}.\footnote{We must note that there are two variants of the {\sc $\beta$-Club} problem: the ``exact'' and the ``at least'' variants. In the former, the task is to check if $H$ contains a $\beta$-club of size exactly $k$, whereas in the latter variant, the task is to check if $H$ contains a $\beta$-club of size at least $k$. Notice that these two problems are not equivalent. A graph $H$ may contain a $\beta$-club of size $k + 1$, yet no $\beta$-club of size exactly $k$. For example, take $H$ to be a cycle on $2\beta + 1$ vertices and  $k = 2\beta$. \cite{DBLP:journals/eor/BourjollyLP02} shows that the ``at least'' variant is \NPH. But their reduction, in fact, proves the \NPH ness of the ``exact'' variant as well.}

Consider an instance $(H, k)$ of {\sc $\beta$-Club}. Let $V(H) = \set{z_1, z_2, \ldots, z_s}$. We assume without loss of generality that $1 < k < s$, and construct an instance $(G, N, \ca{V})$ of \spropn{\alpha}{\beta} as follows. First, we take $G$ to be the disjoint union  of $H$ and $(\alpha - 1) + (s + k - 1)$ isolated vertices $x_1, x_2,\ldots, x_{\alpha - 1}$ and $y_1, y_2,\ldots, y_{ s +  k - 1}$. (Just to be clear, if $\alpha = 1$, then we do not have the vertices $x_1, x_2,\ldots, x_{\alpha - 1}$.) That is, $V(G) = V(H) \cup \set{x_j ~|~ j \in [\alpha - 1]} \cup \set{y_j ~|~ j \in [s + k -1]}$, and $E(G) = E(H)$. 
And we introduce $2s$ agents: a ``special'' agent $a^*$, $s - k$  ``regular'' agents $a^{1}, a^{2},\ldots, a^{s - k}$ and $s + k - 1$ ``dummy'' agents $b^1, b^2, \ldots, b^{s + k - 1}$. Thus $N = \set{a^*} \cup \ttsup{N}{reg} \cup \ttsup{N}{dum}$, where $\ttsup{N}{reg} = \set{a^{j} ~|~ j \in [s - k]}$ and $\ttsup{N}{dum} = \set{b^j ~|~ j \in [s + k - 1]}$. For every agent $c \in N$, we now define the valuations $\fn{v_c}{V(G)}{\mathbb{Q}_{\geq 0}}$ as follows. For every $i \in [s]$, $j \in [s + k - 1]$ and $j' \in [\alpha - 1]$, we have
\[
\begin{matrix}
{v_c(z_{i}) = \begin{cases} \frac{1}{2\alpha ks},  \text{ if } c = a^* \\ \frac{1}{2s}, \text{    if } c \in \ttsup{N}{reg} \\ 0, \text{   if } c \in \ttsup{N}{dum};\end{cases}} & {v_c(y_j) = \begin{cases} \frac{1}{s + k - 1} - \frac{s + \alpha k - k}{2\alpha k s (s + k - 1)},  \text{ if } c = a^* \\ \frac{1}{2(s + k - 1)},  \text{ if } c \in \ttsup{N}{reg} \\ 1,  \text{   if } c = b^j \\ 0,  \text{   otherwise;} \end{cases}} & {v_c(x_{j'}) = \begin{cases} \frac{1}{2 \alpha s}, \text{ if } c = a^* \\ 0, \text{ if } c \in N \setminus \set{a^*}. \end{cases}} 
\end{matrix}
\]

We can verify that $v_c(V(G)) = \sum_{u \in V(G)} v_c(u) = 1$ for every agent $c \in N$. 

Assume that $(H, k)$ is a yes-instance of {\sc $\beta$-Club}. Let $Q \subseteq V(H)$ be such that $\card{Q} = k$ and $Q$ is a $\beta$-club. Thus $\diam(H[Q]) \leq \beta$. Let us assume without loss of generality that $Q = \set{z_{s - k + 1}, z_{s - k + 2}, \ldots, z_s}$. And we define an allocation $\fn{\pi}{N}{2^{V(G)}}$ as follows: $\pi(a^*) = Q \cup \set{x_j ~|~ j \in [\alpha - 1]}$, $\pi(a^i) = z_i$ for every $i \in [s - k]$ and $\pi(b^j) = y_j$ for every $j \in [s + k -1]$. Observe that $\card{\pi(c)} = 1$ for each agent $c \in \ttsup{N}{reg} \cup \ttsup{N}{dum}$, and hence $\pi(c)$ is strongly $(\alpha, \beta)$-compact. Now, as $\pi(a^*) = Q \cup \set{x_j ~|~ j \in [\alpha - 1]}$, and $\diam(G[Q]) = \diam(H[Q]) \leq \beta$ and $\diam(G[\set{x_j}]) = 0 \leq \beta$ for every $j \in [\alpha - 1]$, we can conclude that $\pi(a^*)$ is strongly $(\alpha, \beta)$-compact as well. Finally, we have $v_{a^*}(\pi(a^*)) = \sum_{j = 1}^k v_{a^*}(z_{s - k + j}) + \sum_{j' = 1}^{\alpha - 1} v_{a^*}(x_{j'}) = k \cdot (1/(2\alpha ks)) + (\alpha - 1) \cdot (1/ (2\alpha s))= 1/2s$. Also, for every $i \in [s - k]$, $v_{a^i}(\pi(a^i)) = v_{a^i}(z_i) = 1/2s$; and for every $j \in [s - k + 1]$, $v_{b^j}(\pi(b^j)) = v_{b^j}(\pi(y_j)) = 1$. Since $v_c(\pi(c)) \geq 1/2s = 1/\card{N}$ for every $c \in N$, $\pi$ is proportional. 

Conversely, consider a proportional strongly $(\alpha, \beta)$-compact allocation $\fn{\phi}{N}{2^{V(G)}}$. Then for every agent $c \in N$, we have $v_c(\phi(c)) \geq 1/(2s)$. Consider $b^j \in \ttsup{N}{dum}$. Since $y_j$ is the only item for which $b^j$ has a non-zero valuation, we can conclude that $y_j \in \phi(b^j)$, which implies that $\phi(\set{a^*} \cup \set{a^i ~|~ i \in [s - k]}) \subseteq \set{z_1, z_2,\ldots, z_s} \cup \set{x_1, x_2,\ldots, x_{\alpha - 1}}$. We can assume without loss of generality that $x_{j'} \notin \phi(a^i)$ for every $i \in [s - k]$ and $j' \in [\alpha - 1]$; if $x_{j'} \in \phi(a^i)$ for some $i, j'$, then we can simply remove $x_{j'}$ from $\phi(a^i)$. Since $v_{a^i}(x_{j'}) = 0$, the resulting allocation would still be proportional; and since $x_{j'}$ is an isolated vertex, the resulting allocation would still be $(\alpha, \beta)$-compact.  Thus $\phi(a^i) \subseteq \set{z_1, z_2,\ldots, z_s}$ for every $i \in [s - k]$. Therefore, since $\phi$ is proportional and hence $\phi(a^i) \neq \emptyset$ for every $i \in [s - k]$, we can conclude that every agent in $\set{a^i ~|~ i \in [s - k]}$ receives at least one item from $\set{z_1, z_2,\ldots, z_s}$. In other words, at least $s - k$ items from $\set{z_1, z_2,\ldots, z_s}$ are allocated to agents from $\set{a^i ~|~ i \in [s - k]}$. Let $J = \set{j \in [s] ~|~ z_j \in \phi(a^i) \text{ for some } i \in [s - k]}$ and let $J^* = [s] \setminus J$. Then $\card{J} \geq s - k$, and hence $\card{J^*} \leq k$. 

We claim that $\card{J^*} = k$. Suppose not. 
Notice that $\phi(a^*) \subseteq {z_j ~|~ j \in J^*} \cup \set{x_{j'} ~|~ j' \in [\alpha - 1]}$.  
Hence $v_{a^*}(\phi(a^*)) \leq \card{J^*} \cdot (1/(2 \alpha k s)) + (\alpha - 1) \cdot (1/(2 \alpha s)) < k/(2\alpha k s) + (\alpha - 1)/(2 \alpha s) = 1/(2s)$. That is, $v_{a^*}(\phi(a^*)) < 1/(2s) = 1/\card{N}$, a contradiction to the fact that $\pi$ is proportional. Notice also that $\set{x_{j'} ~|~ j' \in [\alpha - 1]} \subseteq \phi(a^*)$, for otherwise, we would still have $\phi(a^*) < 1/(2s)$. Thus $\phi(a^*) = \set{z_j ~|~ j \in J^*} \cup \bigcup_{j' = 1}^{\alpha - 1} \set{x_{j'}}$. Since $\phi$ is $(\alpha, \beta)$-compact and since each $x_{j'}$ is an isolated vertex, we must have $\dist_{G[\set{z_j ~|~ j \in J^*}]}(z, z') \leq \beta$, for every $z, z' \in \set{z_j ~|~ j \in J^*}$. Thus $\diam(G[\set{z_j ~|~ j \in J^*}]) \leq \beta$, which implies that $\diam(H[\set{z_j ~|~ j \in J^*}]) \leq \beta$. That is, $\set{z_j ~|~ j \in J^*}$ is a $\beta$-club in $H$, and since $\card{J^*} = k$, we can conclude that $(H, k)$ is a yes-instance of {\sc $\beta$-Club}.

\end{proof}

A few modifications to the reduction in the proof of Theorem~\ref{thm:prop11} yields the following result. 

\begin{theorem}\label{thm:cef11}
For every fixed $\alpha, \beta \geq 1$, \scefn{\alpha}{\beta} is strongly \NPH.
\end{theorem}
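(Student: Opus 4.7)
The plan is to adapt the reduction in the proof of Theorem~\ref{thm:prop11} from {\sc $\beta$-Club} to the envy-free-plus-complete variant. Given an instance $(H, k)$ of {\sc $\beta$-Club} with $V(H) = \set{z_1, \ldots, z_s}$, I would construct essentially the same graph $G$: the disjoint union of $H$ with $\alpha - 1$ isolated filler vertices $x_1, \ldots, x_{\alpha - 1}$ and $s + k - 1$ isolated $y$-vertices, augmented by $s - k$ additional reservation vertices $u_1, \ldots, u_{s - k}$; to preserve strong $(\alpha, \beta)$-compactness of the intended bundle $\set{z_{f(i)}, u_i}$ when $\alpha = 1$, I would also make each $u_i$ adjacent to every $z$-vertex. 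The agent set is unchanged: one special agent $a^*$, $s - k$ regular agents $a^1, \ldots, a^{s - k}$, and $s + k - 1$ dummy agents $b^1, \ldots, b^{s + k - 1}$.

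The valuations from the proof of Theorem~\ref{thm:prop11} need two adjustments. First, to eliminate $a^*$'s envy of the dummies, which in the earlier proof arose from $v_{a^*}(y_j) > v_{a^*}(\pi(a^*))$, I would set $v_{a^*}(y_j) = v_{a^*}(u_i) = 0$ for all $i, j$. Second, to eliminate each regular agent's envy of $a^*$, which arose because $a^*$ holds $k$ $z$-vertices while $a^i$ holds only one, I would redesign $v_{a^i}$ to be supported entirely on $u_i$ with some large value, so that $a^i$'s valuation is nonzero only on items that $a^i$ itself holds in the intended allocation. As a consequence, $a^i$ is never envious of anyone, and envy-freeness forces $u_i \in \phi(a^i)$ in every fair allocation $\phi$. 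The dummies $b^j$ retain their original role: each values only $y_j$, forcing $y_j \in \phi(b^j)$ by the same argument.

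For the forward direction, given a $\beta$-club $Q$ of size $k$ in $H$, I would set $\phi(a^*) = Q \cup \set{x_1, \ldots, x_{\alpha - 1}}$, $\phi(a^i) = \set{u_i, z_{f(i)}}$ for any bijection $f$ from $[s - k]$ to $V(H) \setminus Q$, and $\phi(b^j) = \set{y_j}$. Completeness is immediate, envy-freeness for each $a^i$ and $b^j$ follows from the single-item support of their valuations, and envy-freeness for $a^*$ reduces to checking that $v_{a^*}(Q \cup \set{x\text{'s}})$ dominates the $v_{a^*}$-value of every other bundle, which needs the right choice of $v_{a^*}$ on $z$'s and $x$'s. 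Strong $(\alpha, \beta)$-compactness of $\phi(a^*)$ comes from $Q$ being one $\beta$-club plus $\alpha - 1$ singleton $x$-clusters; strong compactness of each $\phi(a^i)$ uses the added edge $u_i z_{f(i)}$ for $\alpha = 1$ and the trivial two-singleton cover for $\alpha \geq 2$. For the reverse direction, in any envy-free, complete, strongly $(\alpha, \beta)$-compact allocation $\phi$, the forced placements $u_i \in \phi(a^i)$ and $y_j \in \phi(b^j)$, combined with completeness and $a^*$'s envy inequalities, should pin down that $\phi(a^*)$ consumes all $\alpha - 1$ of the $x$-vertices and at least $k$ $z$-vertices; strong compactness of $\phi(a^*)$, together with the fact that the $\alpha - 1$ isolated $x$-singletons exhaust $\alpha - 1$ of its covering clubs, then leaves a single club for the $z$-part and produces a $\beta$-club of size $\geq k$ in $H$ (hence, by monotonicity, of size exactly $k$).

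The step I expect to be hardest is calibrating $v_{a^*}$, together with the counts of $x$-, $y$-, and $u$-vertices, so that envy-freeness both admits the intended allocation and rules out redistributions in which $z$- and $x$-vertices are scattered across regular and dummy agents (who are indifferent to these items and can absorb them without creating internal envy). The force binding $a^*$ to all $\alpha - 1$ $x$-vertices and to at least $k$ $z$-vertices has to come purely from $a^*$'s own pairwise envy inequalities combined with the cover cap $\alpha$ from strong compactness. This is the envy-based analogue of the $1/(2s)$-proportionality threshold in Theorem~\ref{thm:prop11}, but more delicate, because envy-freeness is a pairwise rather than aggregate condition, so unintended allocations must be ruled out bundle by bundle rather than in aggregate.
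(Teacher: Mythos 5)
Your proposal has a genuine gap, and it is exactly at the step you flagged as "hardest": with your valuation design, no calibration of $v_{a^*}$ can make the backward direction work. The difficulty is that you have removed \emph{both} of the forces that the argument needs. First, by setting $v_{a^*}(y_j) = 0$, you destroy the exogenous \emph{floor} on $v_{a^*}(\phi(a^*))$: in an envy-free allocation, the only lower bound on $a^*$'s value is the maximum $v_{a^*}$-value of the \emph{other} agents' bundles, and that quantity is controlled by the adversary. Since there are $2s-2$ non-special agents and only $s+\alpha-1$ items with positive $v_{a^*}$-value, an adversarial allocation can spread the $z$- and $x$-vertices so thinly that $a^*$ only needs a tiny bundle (e.g., two adjacent $z$'s) to be envy-free. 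Second, by supporting each regular agent's valuation entirely on $u_i$, you destroy the \emph{ceiling}: in the paper's proof of Theorem~\ref{thm:prop11} and its adaptation here, the bound $\card{R} \leq k$ on the $z$-vertices available to $a^*$ comes from the regular agents' \emph{own} fairness constraints forcing them to hold $z$-vertices; your regular agents demand nothing beyond $u_i$, so nothing locks up $s-k$ of the $z$'s. Your added edges $u_i z$ make this worse: for $\beta \geq 2$, a bundle $\set{u_i} \cup Z'$ has diameter $2$ for \emph{any} $Z' \subseteq V(H)$, so regular agents can absorb arbitrary sets of $z$-vertices while staying strongly $(1,\beta)$-compact, and such absorption creates no envy from anyone. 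Consequently a complete, envy-free, strongly compact allocation exists whether or not $H$ has a $\beta$-club of size $k$, and the reduction fails.

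The paper's actual proof of Theorem~\ref{thm:cef11} resolves both problems with a different design. It uses $s$ regular agents (one per vertex of $H$), $p$ dummy agents where $s+\alpha-1 = p(k+\alpha-1)$, and $k$ extra shared items $y'_1,\ldots,y'_k$ that \emph{every} regular agent values at $1/(k+1)$, the same value each $a^i$ assigns to its own $z_i$. Completeness plus envy-freeness then forces every regular agent to hold either $z_i$ or some $y'_\ell$, so at most $k$ of the $z$-vertices can reach $a^*$ (the ceiling); and $a^*$ values each $y_j$ at exactly $1/(2p)$, which via the dummies' forced bundles gives the floor $v_{a^*}(\phi(a^*)) \geq 1/(2p)$, achievable with at most $k$ $z$'s only by taking exactly $k$ of them \emph{and} all $\alpha-1$ isolated $x$'s, whereupon strong compactness pins the $k$ $z$'s into a single $\beta$-club. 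A further error in your write-up: the claim that a club of size $\geq k$ yields one of size exactly $k$ ``by monotonicity'' is false --- $\beta$-clubs are not closed under taking subsets (the paper's footnote gives a $(2\beta+1)$-cycle as a counterexample), which is precisely why its backward direction works to establish $\card{R} = k$ exactly rather than $\card{R} \geq k$.
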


\begin{proof}
We again reduce from {\sc $\beta$-Club}. Given an instance $(H, k)$ of {\sc $\beta$-Club}, where $V(H) = \set{z_1, z_2, \ldots, z_s}$, we proceed as in the proof of Theorem~\ref{thm:prop11}, but with the following differences. (1) We assume without loss of generality that $s + \alpha - 1$ is a multiple of $k + \alpha - 1$, i.e., $s + \alpha - 1 = p(k + \alpha - 1)$ for some integer $p$. (Notice that we can always add isolated vertices to $H$ to ensure this property.) (2) We take $G$ to be the union of $H$ and $p + k + \alpha - 1$ isolated vertices $y_1, y_2,\ldots, y_{p}$, $y'_1, y'_2,\ldots, y'_{k}$ and $x_1, x_2,\ldots, x_{\alpha - 1}$. (3) We have one special agent $a^*$ and a set of $s$ regular agents $\ttsup{N}{reg} = \{a^1, a^2,\ldots, a^s \}$ and a set of $p$ dummy agents $\ttsup{N}{dum} = \{b^1, b^2,\ldots, b^p \}$.  (4) Finally, we define the valuation functions. For $c \in N$, $i \in [s]$, $j' \in [\alpha - 1]$, we have 
\[
\begin{matrix}
{v_c(z_i) = \begin{cases} 1/(2(s + \alpha - 1)), & \text{ if } c = a^* \\ 1/(k + 1), & \text{ if } c = a^i  \\ 0, & \text{ otherwise; } \end{cases}} & 
{v_c(x_{j'}) = \begin{cases} 1/(2(s + \alpha - 1)), & \text{ if } c = a^* \\ 0, & \text{ otherwise. } \end{cases}}
\end{matrix}
\]

For $c \in N$, $j \in [p]$ and $\ell \in [k]$, we have 
\[
\begin{matrix}
{v_c(y_j) = \begin{cases} 1/(2p),  & \text{ if } c = a^* \\ 1, & \text{   if } c = b^j  \\ 0, & \text{    otherwise;} \end{cases}} & {v_c(y'_{\ell}) = \begin{cases} {1}/{(k + 1)},  & \text{ if } c \in \ttsup{N}{reg} \\0,  & \text{   otherwise.} \end{cases}} 
\end{matrix}
\]

We can verify that  $\sum_{x \in V(G)} v_c(x) = 1$ for every agent $c \in N$. (Recall that $s + \alpha - 1 = p(k + \alpha - 1)$.)

Now, suppose that $H$ contains a $\beta$-club $Q \subseteq V(H)$ of size exactly $k$. Assume without loss of generality that $Q = \set{z_{1}, z_{2}, \ldots, z_k}$. We define an allocation $\fn{\pi}{N}{2^{V(G)}}$, where $\pi(a^*) = Q \cup \set{x_{j'} ~|~ j' \in [\alpha - 1]}$; for $i \in [s]$, $\pi(a^i) = z_i$ if $i > k$ and $\pi(a^i) = y'_{i}$ if $i \leq k$; and for $j \in [p]$, $\pi(b^j) = y_j$. 
Since $Q$ is a $\beta$-club in $H$ and hence in $G$ as $G[Q] = H[Q]$, we have, $\diam(G[Q]) \leq \beta$; and since $\pi(a^*)$ is the disjoint of $Q$ and $\alpha - 1$ vertices, $G[\pi(a^*)]$ is strongly $(\alpha, \beta)$-compact. Also, since $\card{\pi(c)} = 1$ for every agent $c \in N \setminus \set{a^*}$, $G[\pi(c)]$ is strongly $(\alpha, \beta)$-compact as well. 
To see that $\pi$ is an envy-free allocation, observe the following facts. 
\begin{enumerate}
\item The agent $a^*$ is envy-free. 
\begin{enumerate} 
\item We have $v_{a^*}(\pi(a^*)) = \sum_{i = 1}^{k} v_{a^*}(z_i) + \sum_{j' = 1}^{\alpha - 1} = k \cdot (1/(2(s + \alpha - 1))) + (\alpha - 1) \cdot (1/(2(s + \alpha - 1)))= (k + \alpha - 1)/(2(s + \alpha - 1)) = 1/(2p)$. The last equality follows from the fact that $s + \alpha - 1  = p(k + \alpha - 1)$. 
\item For $i \in [s]$ with $i > k$, we have $v_{a^*}(\pi(a^i)) = v_{a^*}(z_i) = 1/(2(s + \alpha - 1)) < 1/(2p)$. 
\item And for $i \in [s]$ with $i \leq k$, we have $v_{a^*}(\pi(a^i)) = v_{a^*}(y'_i) = 0$ if $i \leq k$.  
\item For $j \in [p]$, $v_{a^*}(\pi(b^j)) = v_{a^*}(y_j) = 1/(2p)$. 
\end{enumerate}
\item For every $j \in [p]$, $b^j$ is envy-free as we have $v_{b^j}(\pi(b^j)) = v_{b^j}(y_j) = 1$ and $v_{b^j}(\pi(c)) \leq 1$ for every agent $c \in N$.  \item For every $i \in [s]$, $a^i$ is envy-free. 
\begin{enumerate} 
\item First, $v_{a^i}(\pi(a^i)) = v_{a^i}(\pi(z_i)) = 1/(k + 1)$ if $i > k$ and $v_{a^i}(\pi(a^i)) = v_{a^i}(\pi(y'_i)) = 1/(k + 1)$ if $i \leq k$. 
\item And $v_{a^i}(\pi(a^*)) = v_{a^i}(Q \cup \set{x_{j'} ~|~ j' \in [\alpha - 1]}) = \sum_{j = 1}^k v_{a^i}(z_j) + \sum_{j' = 1}^{\alpha - 1} v_{a^i}(x_{j'})$ Now, $v_{a^i}(x_{j'}) = 0$ for every $j' \in [\alpha - 1]$. Notice also that  $\sum_{j = 1}^k v_{a^i}(z_j) = 1/(k + 1)$ if $i \leq [k]$ (as $v_{a^i}(z_i) = 1/(k + 1)$) and $ \sum_{j = 1}^k v_{a^i}(z_j) = 0$ if $i > k$. In either case, $v_{a^i}(\pi(a^*)) \leq 1/(k + 1) = v_{a^i}(\pi(a^i))$. 
\item For distinct $i, j \in [s]$, $v_{a^i}(\pi(a^j)) \leq 1/(k + 1)$. 
\item For $i \in [s]$ and $j \in [p]$, $v_{a^i}(\pi(b^j)) = v_{a^i}(y_j) = 0$. 
\end{enumerate}
\end{enumerate}

Conversely, consider a complete, envy-free and strongly $(\alpha, \beta)$-compact allocation $\fn{\phi}{N}{2^{V(G)}}$. 
Since $\phi$ is complete, for every $\ell \in [k]$, the vertex $y'_{\ell}$ is allocated. But notice that $v_{a^i}(y'_{\ell}) = 1/(k + 1)$ for every $i \in [s]$. Then, since $\phi$ is envy-free, we must have $v_{a^i}(\phi(a^i)) \geq 1/(k + 1)$ for every $i \in [s]$. But notice that the agent $a^i$ has non-zero valuations only for the vertex $z_i$ and the $k$ vertices $y'_1,\ldots,y'_{k}$. Therefore, for every $i \in [s]$, either $z_i \in \phi(a^i)$ or $y'_{\ell} \in \phi(a^i)$ for some $\ell \in [k]$. 
Let $S = \set{z_i \in V(G) ~|~ z_i \in \phi(a^i)}$. Notice that $\card{S} \geq s - k$ as there exist at most $k$ agents $a^i$ with $y'_{\ell} \in \phi(a^i)$ for some $\ell \in [k]$. 
Now, since $\phi$ is complete and envy-free and since $v_{b^j}(y_j) = 1$ for every $j \in [p]$, we must have $y_j \in \phi(b^j)$. But notice that $v_{a^*}(\phi(b^j)) \geq v_{a^*}(y_j) = 1/(2p)$. Since $\phi$ is envy-free, we must have $v_{a^*}(\phi(a^*)) \geq v_{a^*}(\phi(b^j)) \geq 1/(2p)$. 
Let $R = \set{z_i |~ i \in [s] \text{ and } z_i \in \phi(a^*)}$. First, notice that $R \subseteq \set{z_1, z_2,\ldots, z_s} \setminus S$. Then, since $\card{S} \geq s - k$, we have $\card{R} \leq k$. 
Let us first observe that $\card{R} = k$. Suppose not. Then, $v_{a^*}(\phi(a^*)) \leq v_{a^*}(R)  + v_{a^*}(\set{x_{j'} ~|~ j' \in [\alpha - 1]}) = \card{R} \cdot (1/(2(s + \alpha - 1))) + (\alpha - 1) \cdot (1/(s + \alpha - 1)) < (k + \alpha - 1)/(2(s + \alpha - 1)) = 1/(2p)$. That is, $v_{a^*}(\phi(a^*)) < 1/(2p)$, which is not possible as the agent $a^*$ is envy-free. Thus $\card{R} = k$. Now, notice also that $x_{j'} \in \phi(a^*)$ for every $j' \in [\alpha - 1]$, for otherwise, we would still have $v_{a^*}(\phi(a^*)) < 1/(2p)$. Thus $\phi(a^*)$ contains $R$ and $\alpha -1$ isolated vertices $x_1, x_2,\ldots, x_{\alpha - 1}$. Then, since $\phi$ is $(\alpha, \beta)$-compact, we can conclude that $\dist_{G[R]}(z, z') \leq \beta$ for every $z, z' \in R$. Therefore, $\diam(G[R]) \leq \beta$, which implies that $\diam(H[R]) \leq \beta$. Thus $R$ is a $\beta$-club of size $k$ in $H$. 
\end{proof}

%%%%PATHS DP
\section{Proportional (Strongly) \texorpdfstring{$(1, \beta)$}{(1, beta)}-Compact Allocations on Paths}
In this section, we consider we consider \propn{1}{\beta} and \spropn{1}{\beta} in a highly restricted input setting: The graph $G$ on the set of items is a path. 
We show that these problems admit \FPT\ algorithms when parameterized by the number of agents, and \XP\ algorithms when parameterized by the number of types of agents. We first observe the following fact. 
\begin{observation}\label{obs:paths}
Fix $\beta \geq 0$. 
A path $P$ is $(1, \beta)$-compact if and only if $P$ has at most $2\beta + 1$ vertices. And $P$ is strongly $(\alpha, \beta)$-compact if and only if $P$ has at most $\beta + 1$ vertices. 
\end{observation}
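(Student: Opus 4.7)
The plan is to reduce both equivalences to the elementary fact that in a path $P$ on $n$ vertices with vertices $v_1,\ldots,v_n$ (ordered along the path), the graph distance is $\dist_P(v_i,v_j)=|i-j|$. Hence for any $v_i$, the ball $B_P(v_i,\beta)$ is the set of vertices $v_j$ with $|i-j|\leq\beta$, and $\diam(P)=n-1$.

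For the first equivalence, $(1,\beta)$-compactness of $P$ means $V(P)=B_P(z,\beta)$ for some $z\in V(P)$. If $n\leq 2\beta+1$, picking $z=v_{\lceil n/2\rceil}$ places every other vertex within distance $\beta$ of $z$, so $P$ is $(1,\beta)$-compact. Conversely, if $n\geq 2\beta+2$, then for any candidate centre $v_i$, $\max(i-1,n-i)\geq\lceil(n-1)/2\rceil\geq\beta+1$, so at least one endpoint lies outside $B_P(v_i,\beta)$; no ball of radius $\beta$ covers $V(P)$.

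For the second equivalence I will read the statement in the context of this section, which is devoted to $(1,\beta)$-compact and strongly $(1,\beta)$-compact allocations on paths: the ``$\alpha$'' in the observation is the same $\alpha=1$ appearing in the section heading. (With $\alpha$ taken as a free parameter the claim would fail, since, e.g., a path on $2(\beta+1)$ vertices is strongly $(2,\beta)$-compact via the split into two consecutive blocks of length $\beta+1$; the genuine characterisation for general $\alpha$ would be $n\leq \alpha(\beta+1)$.) Under the $\alpha=1$ reading, strongly $(1,\beta)$-compactness requires a single set $V_1=V(P)$ with $\dist_P(z,z')\leq\beta$ for all $z,z'\in V_1$, i.e., $\diam(P)\leq\beta$. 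Since $\diam(P)=n-1$, this is equivalent to $n\leq\beta+1$.

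The ``hard part'' is essentially interpretive rather than technical: once the intended $\alpha$ is fixed, each direction of each equivalence is a one-line argument using the exact formula for distances in a path. I will therefore spend the bulk of the write-up pinning down the parameter reading, and then dispatch the four implications (two per equivalence) in a single short paragraph each, appealing directly to the ball/diameter computation above.
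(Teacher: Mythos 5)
Your proof is correct. The paper states this as a bare observation with no proof, so there is nothing to compare line-by-line; your argument via the exact distance formula $\dist_P(v_i, v_j) = |i - j|$ (centre $v_{\lceil n/2 \rceil}$ for the forward direction, an endpoint as the obstruction for the converse, and $\diam(P) = n - 1$ for the strong case) is precisely the reasoning the paper implicitly relies on — indeed the introduction already records that a graph is strongly $(1, \beta)$-compact if and only if its diameter is at most $\beta$, from which the second equivalence is immediate. Your interpretive point is also well taken: the ``$\alpha$'' in the second claim must be read as $1$ (a typo in the statement), since for $\alpha \geq 2$ the ``only if'' direction fails — your counterexample is valid, because in the strong definition distances are measured in $G$ itself, so splitting a path on $2(\beta + 1)$ vertices into two consecutive blocks of $\beta + 1$ vertices witnesses strong $(2, \beta)$-compactness — and the correct general characterisation is $n \leq \alpha(\beta + 1)$, as you say. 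That $\alpha = 1$ is intended is confirmed by how the observation is used downstream: the remark after the DP replaces the condition $\card{V_{j' + 1, i'}} \leq 2\beta + 1$ by $\card{V_{j' + 1, i'}} \leq \beta + 1$ precisely to solve \spropn{1}{\beta}. The only cosmetic omission is the empty path, where your choice of centre presupposes $n \geq 1$; the paper disposes of this case by the stated convention that the empty graph is (strongly) $(\alpha, \beta)$-compact for all $\alpha, \beta \geq 0$.
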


We now design a dynamic programming algorithm for \propn{1}{\beta}. The algorithm for \spropn{1}{\beta} only requires a slight adaptation. 
In the remaining part of this section, we assume that we are given a instance $(G, N, \ca{V})$ of \propn{1}{\beta}, where $G$ is a path  with $V(G) = \set{z_1, z_2,\ldots, z_m}$ and $E(G) = \{z_i z_{i + 1} ~|~ i \in [m - 1]\}$. For $p, q \in [m]$ with $p \leq q$, let $V_{p, q} = \set{z_p, z_{p + 1}, \ldots, z_{q}}$ and $G_{p, q} = G[V_{p, q}]$. 

We now describe our DP. For each $i, j \in [m]$ with $j \leq i$ and for each non-empty $N' \subseteq N$, we define $\aco[i, j, N'] = 1$ if there exists a $(1, \beta)$-compact, proportional allocation $\fn{\pi}{N'}{2^{V(G_{1, i})}}$ such that no item that appears after $z_j$ in the ordering $z_1, z_2,\ldots, z_i$  is allocated under $\pi$; and $\aco[i, j, N'] = 0$, otherwise. 
That is, $\aco[i, j, N'] = 1$ if each agent in $N'$ receives a proportional, $(1, \beta)$-compact bundle and no $z_{j'}$, where $j' > j$, gets allocated. Notice then that $(G, N, \ca{V})$ is a yes-instance of \propn{1}{\beta} if and only if $\aco[m, j, N] = 1$ for some $j \in [m]$.  
To compute the entries, we use the recurrence
\[
\aco[i, j, N'] = \bigvee_{\substack{{(i', j', a)} \\ {i' \leq j} \\ {j' \leq i'}\\ {a \in N'} \\ {\card{V_{j' + 1, i'}} \leq 2\beta + 1} \\ {v_a(V_{j' + 1, i'}) \geq (1/ {n}) \cdot v_a(V(G))}}} \aco[i', j', N'\setminus \set{a}]. \tag{\textsl{Eq. Agents-Path-$(1, \beta)$-Com}}\label{eq:com}
\]

Notice that the recurrence \ref{eq:com} simply checks if it is feasible to allocate all the vertices from $z_{j' + 1}$ to $z_{i'}$ to some agent $a \in N'$, while ensuring compactness and proportionality for every agent in $N'$. The condition $\card{V_{j' + 1, i'}} \leq 2\beta + 1$ ensures that the allocation remains $(1, \beta)$-compact, and the condition $v_a(V_{j' + 1, i'}) \geq 1/ {n} \cdot v_a(V(G))$ ensures proportionality. 
The base case of the recurrence is when $i = j = \card{N'} = 1$. And for this case, notice that we have $A[1, 1, \set{a}] = 1$ if and only if the allocation $\pi$ that allocates the vertex $z_1$ to agent $a$ is proportional, i.e., if and only if $v_a(z_1) \geq (1/n) \cdot v_a(V(G))$. So for every $a \in N$, we set  $A[1, 1, \set{a}] = 1$ if $v_a(z_1) \geq 1/n$, and set $A[1, 1, \set{a}] = 0$ otherwise. Finally, notice that the number of choices for $(i, j, N')$ is $m^2 \cdot 2^n$, and that the time taken to compute each $A[i, j, N]$ is bounded by the number of choices for $(i', j', a)$, which is at most $m^2 \cdot n$. Thus the total time taken by the algorithm is $\cO(2^n \cdot n \cdot m^4)$. 

\begin{remark}
Observe the following two facts. (1) The DP does not require that the valuations be additive. (2) By slightly modifying the definition of $\aco$, we can solve \spropn{1}{\beta} as well. We define $\asc[i, j, N'] = 1$ if there exists a strongly $(1, \beta)$-compact, proportional allocation $\fn{\pi}{N'}{2^{V(G_{1, i})}}$ such that no item that appears after $z_j$ in the ordering $z_1, z_2,\ldots, z_i$  is allocated under $\pi$; and $\asc[i, j, N'] = 0$, otherwise. And we can compute $\asc[i, j, N']$ using a similar recurrence relation. In light of Observation~\ref{obs:paths}, we  just need to replace the condition $\card{V_{j' + 1, i'}} \leq 2\beta + 1$ in \ref{eq:com} with $\card{V_{j' + 1, i'}} \leq \beta + 1$. 
\end{remark}

We summarise the discussion so far in this section in the following theorem. 

\begin{theorem}\label{thm:paths}
\propn{1}{\beta} and \spropn{1}{\beta} are FPT when parameterized by the number of agents. These results hold even when the valuations are not additive. 
\end{theorem}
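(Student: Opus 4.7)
My plan is a left-to-right dynamic programming on the path, and the first step is to pin down the structure of compact bundles. The key structural fact is that, on a path, a vertex subset $S \subseteq V(G)$ induces a $(1,\beta)$-compact subgraph if and only if $S$ is a contiguous subpath $V_{p,q}$ with $q-p+1 \leq 2\beta+1$: the existence of a center $z \in S$ with $S = B_{G[S]}(z,\beta)$ forces $G[S]$ to be connected, an induced connected subgraph of a path is a contiguous subpath, and its length is then capped by Observation~\ref{obs:paths}. The strongly $(1,\beta)$-compact case is identical with bound $\beta+1$. Hence every valid allocation is described by an ordered sequence of pairwise-disjoint contiguous subpaths of bounded length, each assigned to a distinct agent.

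Next I would set up the DP table exactly as hinted in the excerpt: let $\aco[i,j,N']=1$ iff the prefix $V_{1,i}$ admits a proportional $(1,\beta)$-compact partial allocation to the agents in $N' \subseteq N$ under which no vertex beyond $z_j$ is allocated (so the allocated vertices lie in $V_{1,j}$, while $z_{j+1},\ldots,z_i$ are left unallocated). The instance is a \yesinstance\ of \propn{1}{\beta} iff $\aco[m,j,N]=1$ for some $j \in [m]$. The recurrence peels off the ``last'' agent $a \in N'$, i.e., the one whose contiguous subpath $V_{j'+1,i'}$ lies farthest to the right, and reduces to the subproblem $\aco[i',j',N'\setminus\set{a}]$. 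The conditions $i' \leq j$, $\card{V_{j'+1,i'}} \leq 2\beta+1$, and $v_a(V_{j'+1,i'}) \geq (1/n)\cdot v_a(V(G))$ encode, respectively, the ``last interval'' property, $(1,\beta)$-compactness of $a$'s bundle, and $a$'s proportionality. The base case $\aco[1,1,\set{a}]=1$ iff $v_a(z_1) \geq (1/n)\cdot v_a(V(G))$ is immediate, as is the empty-set case.

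Correctness is a routine induction on $\card{N'}$: any valid allocation of $V_{1,i}$ admits a unique decomposition obtained by stripping off its rightmost occupied interval, and conversely any recursive subsolution extends canonically by appending that interval. For the runtime, the table has $\cO(m^2 \cdot 2^n)$ entries and each is filled in $\cO(n m^2)$ time by enumerating the triple $(i',j',a)$, giving $\cO(2^n \cdot n \cdot m^4)$ overall, which is FPT in $n$. The strongly compact variant needs only the replacement $2\beta+1 \to \beta+1$ in the compactness check, yielding the same runtime. I do not anticipate a serious obstacle; the one delicate point to verify is that additivity is never used---the DP only queries $v_a$ on the single bundle $V_{j'+1,i'}$ that would be assigned to $a$, and never decomposes it across subsets, so the algorithm and analysis go through verbatim for arbitrary valuations accessed as oracles, giving the final clause of the theorem.
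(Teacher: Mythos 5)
Your proposal is correct and follows essentially the same route as the paper's proof: the same table $\aco[i,j,N']$ indexed by prefix, last-allocated position, and satisfied agent set, the same peel-off-the-rightmost-interval recurrence with the checks $\card{V_{j'+1,i'}} \leq 2\beta+1$ (resp.\ $\beta+1$ for the strongly compact variant) and $v_a(V_{j'+1,i'}) \geq (1/n)\cdot v_a(V(G))$, and the same $\cO(2^n \cdot n \cdot m^4)$ runtime and observation that only whole-bundle valuations are queried, so additivity is never needed. Your only addition is making explicit the structural fact that $(1,\beta)$-compact bundles on a path are contiguous subpaths of bounded length, which the paper leaves implicit in Observation~\ref{obs:paths}.
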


Now, notice that if the agents' valuations are identical, then we need not keep track of the set $N'$ of agents who have already received a proportional and compact bundle of items. We simply need to remember \emph{how many} agents have received bundles so far. That is, instead of going over all tuples $(i, j, N')$, we only need to consider tuples $(i, j, n')$, where $0 \leq n' \leq n$. And we can define and compute $\aco[i, j, n']$ and $\asc[i, j, n']$ accordingly. Thus instead of exponential dependence on the number of agents, the algorithm now will only have a polynomial dependence on the number of agents. 

We can generalise the above idea further to design an \XP\ algorithm parameterized by the number of types of agents. We say that two agents $a, a' \in N$ are of the same type if their valuations are identical. Let $p$ be the number of distinct types of agents, and let the types be numbered from $1$ to $p$. Also for each $q \in [p]$, let $n_q$ be the number of agents of type $q$ and let $v_q$ denote the valuation function corresponding to an agent of type $j$. To design our algorithm, as before, we process the graph from $z_1$ to $z_m$, and at every stage and for each type $q \in [p]$, we keep track of the number of agents of type $q$ who received bundles. Formally, for each $i,j \in [m]$, and each tuple $(\ell_1, \ell_2,\ldots,\ell_{p})$, where $0 \leq \ell_q \leq n_q$ for every $q \in [p]$, we define $\bco[i, j, (\ell_1, \ell_2,\ldots,\ell{p})] = 1$ if there exists a $(1, \beta)$-compact, proportional allocation $\fn{\pi}{N}{2^{V(G_{1, i})}}$ such that no item that appears after $z_j$ in the ordering $z_1, z_2,\ldots, z_i$  is allocated under $\pi$, and for each $q \in [p]$, exactly $\ell_q$ agents of type $q$ have received bundles under $\pi$; and $\bco[i, j, (\ell_1, \ell_2,\ldots,\ell{p})] = 0$, otherwise. Notice that $(G, N, \ca{V})$ admits a proportional, $(1, \beta)$-compact allocation if and only if $\bco[m, j, (n_1, n_2,\ldots, n_p)] = 1$ for some $j \in [m]$. 

For each $p$-tuple $(\ell_1, \ell_2,\ldots,\ell{p})$ and for each $q \in [p]$ with $\ell_q > 0$, let $\ell^q_r = \ell_r$ if $r \in [p] \setminus \set{q}$ and $\ell^q_q = \ell_q - 1$. Now, to compute $\bco[i, j, (\ell_1, \ell_2,\ldots,\ell_{p})]$, we have the recurrence
\[
\bco[i, j, (\ell_1, \ell_2,\ldots,\ell_{p})] = \bigvee_{\substack{{(i', j', q)} \\ {i' \leq j} \\ {j' \leq i'}\\ {q \in [p]} \\ \ell_q > 0 \\ {\card{V_{j' + 1, i'}} \leq 2\beta + 1} \\ {v_q(V_{j' + 1, i'}) \geq (1/ {n}) \cdot v_q(V(G))}}} \bco[i, j, (\ell^q_1, \ell^q_2,\ldots,\ell^q_{p})]. \tag{\textsl{Eq. Types-Path-$(1, \beta)$-Com}}\label{eq:com-xp}
\]
For the base case of the recurrence, for every $i, j \in [m]$, we set $\bco[i, j, (0, 0,\ldots,0)] = 1$; notice that in this case, the empty allocation trivially satisfies all the required conditions. Also, for each $q \in [p]$, let $\overline 1_q$ denote the $p$-tuple $(\ell_1, \ell_2,\ldots,\ell_p)$, where $\ell_r = 0$ if $r \in [p] \setminus \set{q}$ and $\ell_q = 1$. And we set $\bco[1,1, \overline 1_q] = 1$ if $v_q(z_1) \geq (1/n) \cdot v_q(V(G))$. 

Notice that the number of choices for the tuple $(i, j, (\ell_1, \ell_2,\ldots, \ell_p))$ is at most $m^2 \cdot n^p$. And notice that to  compute each $\bco[i, j, (\ell_1, \ell_2,\ldots,\ell_{p})]$, we only need to go over the  entries $\bco[i', j', (\ell^q_1, \ell^q_2,\ldots,\ell^q_{p})]$ for all triplets $(i', j', q)$, and the number of choices for such triplets is at most $m^2 \cdot p$. Hence the total runtime of the algorithm is bounded by $\cO(p \cdot n^p \cdot m^4)$. Again, we can solve the strongly compact variant by slightly modifying the DP. We thus have the following result. 

\begin{theorem}\label{thm:pathsid}
\propn{1}{\beta} and \spropn{1}{\beta} admit \XP\ algorithms parameterized by the number of types of agents. These results hold even when the valuations are not additive. 
\end{theorem}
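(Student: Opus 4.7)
The plan is to extend the dynamic programming algorithm behind Theorem~\ref{thm:paths} by observing that, when two agents share the same valuation function, it does not matter to the algorithm \emph{which} one of them is assigned a given interval of the path---only \emph{how many} agents of each type have so far received a bundle. This lets me replace the set $N' \subseteq N$ in the DP state by a tuple of counts, one per type, and thereby bring the dependence on the number of agents down from exponential to polynomial.

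More concretely, I would proceed as follows. First, partition the agents into $p$ types and record, for each type $q \in [p]$, the size $n_q$ and the common valuation $v_q$. Second, set up the DP table $\bco[i, j, (\ell_1, \ldots, \ell_p)]$ indexed by a prefix endpoint $i$ of the path, a ``last allocated vertex'' position $j \leq i$ and a tuple $(\ell_1, \ldots, \ell_p)$ with $0 \leq \ell_q \leq n_q$; the entry equals $1$ iff there is a proportional, $(1,\beta)$-compact partial allocation of $V_{1,i}$ using no vertex beyond $z_j$ and assigning a bundle to exactly $\ell_q$ agents of each type $q$. Third, write down the recurrence~\eqref{eq:com-xp} together with its base cases: the all-zero tuple is trivially feasible, and a singleton tuple $\overline{1}_q$ at $(i,j)=(1,1)$ is feasible iff $v_q(z_1) \geq (1/n)\cdot v_q(V(G))$. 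Fourth, read off the answer: $(G, N, \ca{V})$ is a yes-instance iff $\bco[m, j, (n_1, \ldots, n_p)] = 1$ for some $j \in [m]$.

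Correctness reduces to two symmetric claims. For soundness, given a witnessing partial allocation one can peel off the last-allocated agent $a$ of some type $q$: her bundle is a contiguous block $V_{j'+1, i'}$ satisfying the compactness bound $\card{V_{j'+1,i'}} \leq 2\beta+1$ and the proportionality bound $v_q(V_{j'+1,i'}) \geq (1/n)v_q(V(G))$, and removing her produces a witness for $\bco[i', j', (\ell^q_1, \ldots, \ell^q_p)]$. For completeness, one glues an optimal sub-witness together with a fresh agent of type $q$ (any un-used one will do, which is why counts suffice). For the strongly compact variant one simply replaces $2\beta+1$ by $\beta+1$ by invoking Observation~\ref{obs:paths}. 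No additivity is used anywhere: the valuations $v_q$ are only ever evaluated on complete contiguous bundles $V_{j'+1, i'}$, exactly as in Theorem~\ref{thm:paths}.

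Finally, for the runtime: the number of DP cells is bounded by $m^2 \cdot \prod_{q \in [p]}(n_q+1) \leq m^2 \cdot (n+1)^p$, and each cell is filled by scanning at most $m^2 \cdot p$ transitions $(i', j', q)$, giving total time $\cO(p \cdot n^p \cdot m^4)$, which is \XP\ in $p$. The main subtlety I expect---and it is really only a bookkeeping subtlety rather than a genuine obstacle---is making sure that the ``type-count'' abstraction is faithful to the original per-agent proportionality guarantee; this is where additivity could have been a red herring, and it is reassuring that the recurrence only queries $v_q$ on a single contiguous block at a time, so the argument carries over verbatim to non-additive valuations just as in Theorem~\ref{thm:paths}.
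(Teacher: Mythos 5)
Your proposal is correct and follows essentially the same route as the paper: the identical DP table $\bco[i, j, (\ell_1,\ldots,\ell_p)]$ indexed by path positions and per-type counts, the same recurrence with the $\card{V_{j'+1,i'}} \leq 2\beta+1$ (resp.\ $\beta+1$ for the strongly compact variant) and $v_q(V_{j'+1,i'}) \geq (1/n)\cdot v_q(V(G))$ conditions, the same base cases and answer extraction, and the same $\cO(p \cdot n^p \cdot m^4)$ runtime. The interchangeability argument you give for why counts suffice is exactly the paper's justification, and your observation that valuations are only queried on whole contiguous blocks (so additivity is never used) matches the paper as well.
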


\begin{remark}
The fact that $\alpha = 1$ is crucial for the results in this section. We cannot generalise our strategy here to $\alpha > 1$. Suppose $\alpha > 1$. It might be tempting to try and remember which subset of agents $N' \subseteq N$ have received bundles so far, and how many connected pieces each of them received. So, along with $i, j$, we would keep track of $(N_0, N_1, N_2,\ldots, N_{\alpha})$, where for each $\ell \in [\alpha]$, $N_{\ell}$ is the set of agents who received exactly $\ell$ many connected pieces of the path. But this information alone is inadequate. We would also need to remember the valuation of each agent for the set of items she has received so far. 
\end{remark}

%%%TREEWIDTH DP
\section{Fair Division of Graphs of Bounded Treewidth}\label{sec:treedp}
In this section, we show that for each of the three fairness concepts---proportionality, envy-freeness and maximin fairness---the corresponding problem of checking if there exists an allocation that is fair and $(\alpha, \beta)$ compact admits a pseudo-\XP\ algorithm, when parameterized by the treewidth of the item graph and the number of agents. We design a single dynamic programming procedure that works for all three fairness concepts. {\bf We assume in this section that the valuations are \emph{integer-valued}.} Throughout this section, we will use terminology and results from Sections~\ref{sec:prelims}, \ref{sec:diameter} and \ref{sec:annotated}. We encourage the reader to have a quick look at these sections before reading any further. 
The following theorem is the main contribution of this section. 

\begin{theorem}\label{thm:treedp}
For every $\alpha$ and $\beta$, \propn{\alpha}{\beta}, \efn{\alpha}{\beta} and \mmsn{\alpha}{\beta} admit algorithms that run in time $(\tw + n)^{\cO(\tw + n)} \cdot \beta^{\tw} \cdot m^{\cO(\alpha n)} \cdot \maxval^{\cO(n^2)}$, where $\tw$ is the treewidth of the input graph $G$ and $\maxval$ is the maximum valuation of an agent for $G$, i.e., $\maxval = \max_{i \in N} v_i(V(G))$. 
\end{theorem}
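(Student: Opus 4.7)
The plan is to reduce via Lemma~\ref{lem:alpha1} to $m^{\cO(\alpha n)}$ instances of $[\mathtt{X}]$-\anno\ with annotation radius $\beta+1$ and designated centres $\hat z_1,\ldots,\hat z_n$, and then solve each annotated instance by a single bottom-up dynamic program on a nice tree decomposition of $G$ of width $\cO(\tw)$. In the annotated regime, the compactness demand on agent $i$ collapses to: $\hat z_i \in \pi(i)$ and every $z \in \pi(i)$ satisfies $\dist_{G[\pi(i)]}(\hat z_i, z) \leq \beta+1$. By Remark~\ref{rem:rrule} we may further assume $\diam(H) = \cO(n\alpha\beta)$ for every connected component $H$ of the (possibly pruned) graph, so the only distances that ever matter are bounded by $\beta+1$.

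For each bag $X_t$ of the nice tree decomposition, a DP state records: (i) an assignment $\fn{\phi}{X_t}{[n]\cup\set{\bot}}$ labelling each bag vertex with its agent or marking it unallocated; (ii) a capped distance function $\fn{d}{X_t}{\set{0,1,\ldots,\beta+1,\infty}}$ where $d(v) = \dist_{G_t[\pi_t(\phi(v))]}(\hat z_{\phi(v)},v)$ whenever $\hat z_{\phi(v)} \in V(G_t)$ and $d(v)=\infty$ otherwise; (iii) a rooted partition $(\ca{P},R)$ of $\set{v \in X_t ~|~ \phi(v)\neq \bot} \cup \set{\tau_i ~|~ i \in [n]}$, where the token $\tau_i$ is a placeholder for the centre $\hat z_i$ once that vertex has been forgotten, and two elements share a block iff they already lie in the same connected component of $G_t[\pi_t(\phi(\cdot))]$; and (iv) an integer matrix $W=(w_{ij})_{i,j\in[n]}$ with $w_{ij} = v_i(\pi_t(j) \cap V(G_t)) \in [0,\maxval]$. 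The number of states per node is at most $(n+1)^{\tw+1} \cdot (\beta+2)^{\tw+1} \cdot (\tw+n)^{\cO(\tw+n)} \cdot \maxval^{\cO(n^2)}$, which, multiplied by the $m^{\cO(\alpha n)}$ outer loop, matches the claimed running time.

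Transitions follow the standard introduce/forget/join template. Introducing a vertex $z$ guesses $\phi(z)$, seeds $d(z)=0$ if $z=\hat z_{\phi(z)}$ and $d(z)=\infty$ otherwise, and inserts $z$ as a singleton block or merges it with the token $\tau_{\phi(z)}$ should the corresponding centre already have been forgotten. Introducing an edge $zz'$ with $\phi(z)=\phi(z')$ merges the two blocks containing $z,z'$ and relaxes distances inside that block via $d(v) \gets \min(d(v),\, d(v')+\delta)$ for $v,v'$ in the block, capping at $\beta+1$; any larger value kills the state. Forgetting a vertex $z$ requires that $z$ and $\tau_{\phi(z)}$ share a block and that $d(z)\leq \beta+1$, after which $z$ is deleted and a new root for its block is chosen. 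A join node with child states of identical $\phi$ and $W$-matrices that sum consistently (subtracting each bag vertex's doubly-counted contribution) is combined via the acyclic-join of rooted partitions from Section~\ref{sec:prelims}, testable in $\card{X_t}^{\cO(1)}$ time by Observation~\ref{obs:acyclic}, together with pointwise minima of the two distance functions. At the root, the bag is empty and a surviving state yields a valid allocation iff its matrix $W$ satisfies the target fairness condition: $w_{ii} \geq v_i(V(G))/n$ for proportionality; $w_{ii} \geq w_{ij}$ for all $i,j$ for envy-freeness; and for maximin, a first pass of the same DP with no fairness cut extracts $\Gamma(((\hat z_i)_{i\in[n]};\beta+1)\mh\mathtt{anno})\mh\mms_i$ by maximising $\min_j w_{ij}$ over all states, after which a second pass checks $w_{ii}\geq\mms_i$ for every $i$.

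The hardest step is arguing correctness of the join and of the capped-distance updates. The danger is that a vertex forgotten early might seem to fix a distance that a later edge would shorten, or that two partial bundles combined at $t$ could form a near-miss rather than a genuine connected radius-$(\beta+1)$ ball around $\hat z_i$. Insisting that every forgotten vertex share a block with its token $\tau_{\phi(z)}$ guarantees it is already connected to its centre through $G_t$ before it leaves the bag; capping distances and only relaxing them inside a common block makes $d$ a monotone underestimate of the true distance in the final graph, automatically tight once the centre's own token has become real; and the acyclic-join condition forbids spurious cycles that would let two child states agree on $\phi$ yet disagree on partition structure. A standard exchange argument then matches each $((\hat z_i)_{i\in[n]};\beta+1)$-annotated $[\mathtt{X}]$-fair allocation with a sequence of surviving DP states and conversely, completing the proof.
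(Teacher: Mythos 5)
Your overall skeleton matches the paper's: reduce via Lemma~\ref{lem:alpha1} to $m^{\cO(\alpha n)}$ annotated instances, then run a treewidth DP whose per-bag states record an agent assignment, a rooted partition tracking connectivity, distance information toward the centres, and a valuation matrix, with fairness checked at the root (and a first pass to extract maximin shares). However, your DP invariant has a genuine gap, in two related places. First, your forget rule --- a forgotten vertex $z$ must \emph{already} share a block with $\tau_{\phi(z)}$, i.e.\ lie in the same connected component of $G_t[\pi_t(\phi(z))]$ as its centre, with $d(z)\leq\beta+1$ --- is too strong and makes the DP incomplete. Consider a single agent, the path $u-x-\hat z_1$ with centre $\hat z_1$ and annotation radius $2$, and the nice tree decomposition that introduces $x$, introduces $u$, introduces the edge $ux$, \emph{forgets} $u$, and only then introduces $\hat z_1$ and the edge $x\hat z_1$. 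The allocation $\pi(1)=\set{u,x,\hat z_1}$ is a valid annotated allocation, but when $u$ is forgotten the centre is not even in $G_t$, so $u$ shares no block with $\tau_1$ and $d(u)=\infty$; every state dies and your DP wrongly answers no. Connectivity of a forgotten vertex to its centre can be established arbitrarily high above its forget node, so any invariant insisting on current connectivity (or a finite current distance) at forget time cannot be complete.

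Second, your distance semantics is not maintainable from the stored data. You define $d(v)$ as the exact distance in $G_t[\pi_t(\phi(v))]$ and update it by relaxation $d(v)\gets\min(d(v),d(v')+\delta)$ when an introduced edge merges two blocks; but $\delta$ --- the distance from $v'$ to $v$ inside the merged component --- runs through forgotten vertices and is recorded nowhere, and the same objection applies to the ``pointwise minima'' at join nodes. The paper circumvents both problems at once: it adds every centre $\hat z_i$ to every bag (paying $+n$ in width), and it replaces actual distances by \emph{guessed} final distance labels $f_i$ certified by a spanning-forest witness $H_i$ (the restriction of a BFS tree of the final bundle). In that scheme a forgotten vertex need not reach its centre: it only needs to be within distance $\beta-f_i(y)$, in the witness forest, of some root $y$ of its block (condition~\ref{span}\ref{beta}), where the label $f_i(y)$ anticipates the root's eventual distance to the centre; and every witness component is forced to intersect the bag (Lemma~\ref{lem:root}) by forbidding a vertex from being forgotten as a singleton block. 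Witness-forest distances never need recomputation --- the introduce-edge transition merely checks the local consistency $|f_i(z)-f_i(z')|=1$ --- which is what makes the bookkeeping fit in treewidth-sized states. As written, your DP is both incomplete and ill-defined at edge-introduction and join nodes; repairing it essentially requires switching to an invariant of the paper's kind.
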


As an immediate corollary of Theorem~\ref{thm:treedp}, we derive the following result. 
\begin{theorem}\label{thm:planar}
For every $\alpha$ and $\beta$, \propn{\alpha}{\beta}, \efn{\alpha}{\beta} and \mmsn{\alpha}{\beta} admit algorithms that run in time $(\alpha \beta n)^{\cO(\alpha \beta n)} \cdot \beta^{\alpha \beta n} \cdot m^{\cO(\alpha n)} \cdot \maxval^{\cO(n^2)}$ when the item graph $G$ is planar.  
\end{theorem}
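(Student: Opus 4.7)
The plan is to apply the algorithm of Theorem~\ref{thm:treedp} after observing that for a planar input $G$, the treewidth parameter that governs each intermediate subproblem is always $\cO(\alpha\beta n)$. I would open up the two preprocessing steps that Theorem~\ref{thm:treedp} relies on and argue that together with planarity they keep the effective treewidth small.

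First I would invoke Lemma~\ref{lem:alpha1} to replace the input instance $(G, N, \ca{V})$ by $m^{\cO(\alpha n)}$ annotated instances; each is parameterised by a choice $\tau = (C_1, \ldots, C_n)$ of ball centers (with $|C_i| \leq \alpha$), and is defined on the graph $G^{\tau}$ obtained from $G$ by adding $n$ new vertices $\hat z_i$, each adjacent to $C_i$. Next, by Remark~\ref{rem:rrule}, I may safely delete from $G^{\tau}$ every vertex of $G$ not lying in $\bigcup_{i \in [n], z' \in C_i} B_G(z', \beta)$; call the resulting induced subgraph of $G$ (before the $\hat z_i$'s are re-attached) $G'$. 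By construction $G'$ is $(\alpha n, \beta)$-compact, and being an induced subgraph of the planar graph $G$, it is itself planar.

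Now I would bound $\tw(G^{\tau})$. By Lemma~\ref{lem:diam} applied component-wise, every connected component $H$ of $G'$ satisfies $\diam(H) = \cO(\alpha\beta n)$. Eppstein's diameter-treewidth theorem~\citep{DBLP:journals/algorithmica/Eppstein00} then yields $\tw(H) = \cO(\alpha\beta n)$ for each such $H$, and since the treewidth of a disconnected graph is the maximum of the treewidths of its connected components, $\tw(G') = \cO(\alpha\beta n)$. Re-attaching the $n$ vertices $\hat z_1, \ldots, \hat z_n$ to form $G^{\tau}$ increases treewidth by at most $n$ (add every $\hat z_i$ to every bag of a tree decomposition of $G'$), yielding $\tw(G^{\tau}) = \cO(\alpha\beta n)$. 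Finally, I would run the dynamic program underlying Theorem~\ref{thm:treedp} on each of the $m^{\cO(\alpha n)}$ annotated subproblems, substituting $\tw = \cO(\alpha\beta n)$ into the per-instance runtime $(\tw + n)^{\cO(\tw + n)} \cdot \beta^{\tw} \cdot \maxval^{\cO(n^2)}$; multiplying by the outer $m^{\cO(\alpha n)}$ enumeration produces exactly the claimed bound.

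The main obstacle I anticipate is largely bookkeeping: the statement of Theorem~\ref{thm:treedp} ties $\tw$ to the original input graph, so I must verify that its algorithm genuinely decomposes the work into independent per-$\tau$ subproblems on the preprocessed $G^{\tau}$, so that the smaller width may legitimately be substituted. I also need a width-$\cO(\alpha\beta n)$ tree decomposition of $G^{\tau}$ to be computable within the claimed budget, which is exactly what the $2^{\cO(\tw)} m^{\cO(1)}$ algorithm of~\citep{DBLP:journals/siamcomp/BodlaenderDDFLP16} cited in the preliminaries provides. A minor care point is that Eppstein's theorem is stated for connected planar graphs, so it must be applied component-wise before combining the bounds.
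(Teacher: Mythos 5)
Your proposal is correct and follows essentially the same route as the paper: preprocess via Lemma~\ref{lem:alpha1} and Remark~\ref{rem:rrule} so that every connected component has diameter $\cO(\alpha\beta n)$, apply Eppstein's diameter--treewidth bound component-wise to get $\tw = \cO(\alpha\beta n)$, and plug this into Theorem~\ref{thm:treedp}. In fact you are somewhat more careful than the paper's own two-line argument, since you explicitly note that the annotated graphs $G^{\tau}$ need not be planar and handle this by observing that re-attaching the $n$ vertices $\hat z_i$ raises treewidth by at most $n$.
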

\begin{proof}
To prove the theorem, we use the following facts. (1) For a planar graph $G$, $\tw(G) = \cO(D)$, where $D$ is diameter of $G$~(see, for example, \cite{DBLP:journals/algorithmica/Eppstein00}). (2) For a graph $G$, $\tw(G) = \max_{H}\tw(H)$, where the maximum is over all connected components $H$ of $G$. (3) Given an instance $(G, N, \ca{V})$ of \xcompact{\alpha}{\beta}, by Remark~\ref{rem:rrule}, we can assume without loss of generality that $\diam(H) = \cO(\alpha \beta n)$ for every connected component $H$ of $G$. Facts (1), (2) and (3) together imply that for an instance $(G, N, \ca{V})$ of \xcompact{\alpha}{\beta}, we have $\tw(G) = \cO(n \alpha \beta)$ if $G$ is planar. The theorem then follows immediately from Theorem~\ref{thm:treedp}.  
\end{proof}

We now prove Theorem~\ref{thm:treedp}. In light of Lemma~\ref{lem:alpha1}, to solve \xcompact{\alpha}{\beta}, it is enough to solve \xanno, for which we design an algorithm. 
More precisely, we prove the following lemma. 
\begin{lemma}\label{lem:treedp}
There is an algorithm that, given an instance $(G, [n], \ca{V}, \beta, (\hat z_i)_{i \in [n]})$ of \anno, an $n^2$ tuple $(\hat w_{ij})_{i, j \in [n]}$, where $0 \leq w_{ij} \leq \maxval$ for every $i, j \in [n]$ and a nice tree decomposition $(T, \{X_t ~|~ t \in V(T)\})$ of $G$ as input, runs in time $(\tw + n)^{\cO(\tw + n)} \cdot \beta^{\cO(\tw + n)} \cdot m^{\cO(1)} \cdot \maxval^{\cO(n^2)}$, and correctly decides if $(G, [n], \ca{V}, \beta, (\hat z_i)_{i \in [n]}))$ admits a $((\hat z_i)_{i \in [n]}; \beta)$-annotated allocation $\fn{\pi}{[n]}{2^{V(G)}}$ such that $v_i(\pi(j)) = \hat w_{ij}$ for each $i, j \in [n]$.
\end{lemma}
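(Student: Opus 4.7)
The plan is to run a bottom-up dynamic programming on the given nice tree decomposition $(T, \set{X_t ~|~ t \in V(T)})$, replacing the distance constraint by a purely local \emph{depth-labelling} condition that simultaneously captures connectivity and the $\beta$-radius bound. Concretely, a bundle $S \subseteq V(G)$ with $\hat z_i \in S$ is a $(\hat z_i; \beta)$-annotated subgraph (in the sense of Section~\ref{sec:annotated}) iff there exists $\fn{d}{S}{\set{0, 1, \ldots, \beta}}$ with $d(\hat z_i) = 0$, $d(z) \geq 1$ for every $z \in S \setminus \set{\hat z_i}$, and every $z \in S \setminus \set{\hat z_i}$ has a neighbour $z' \in S$ with $d(z') = d(z) - 1$. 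The forward direction sets $d(z) = \dist_{G[S]}(z, \hat z_i)$; conversely, following the parent pointer from any $z$ produces a path of length $d(z) \leq \beta$ inside $G[S]$ from $z$ to $\hat z_i$, which in particular also forces $G[S]$ to be connected. This sidesteps the Steiner-tree-style partition bookkeeping (rooted partitions, acyclic joins) that would otherwise be needed to track connectivity through the decomposition.

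For each node $t$, the DP state is a tuple $(c, d, r, U)$ where $c : X_t \to [n] \cup \set{*}$ is a tentative assignment of bag vertices to agents (or to ``unallocated''); $d$ records the guessed depth in $\set{0, 1, \ldots, \beta}$ of each allocated bag vertex inside its bundle; $r : c^{-1}([n]) \to \set{0, 1}$ is a ``ready'' flag with $r(z) = 1$ iff $z$ already has a processed neighbour $z' \in V(G_t)$ with $c(z') = c(z)$ and $d(z') = d(z) - 1$; and $U = (u_{ij})_{i,j \in [n]} \in \set{0, 1, \ldots, \maxval}^{n \times n}$ stores the partial utility $u_{ij} = v_i(\pi(j) \cap V(G_t))$. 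We hard-wire $c(\hat z_i) = i$ and $d(\hat z_i) = 0$ whenever $\hat z_i \in X_t$, and forbid $d(z) = 0$ for any $z \neq \hat z_i$ in the bag, so that following the parent chain is forced to terminate at $\hat z_i$. The entry $A[t, (c, d, r, U)]$ is set to $1$ iff some allocation of $V(G_t)$ realising this bag information and these partial utilities exists, subject to every already-forgotten allocated vertex having completed its depth-chain.

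The transitions follow the standard recipe. Introduce-vertex: enumerate $c$ (and $d$ if allocated) for the new vertex, initialise $r$ to $0$, and add $v_i(z)$ to $u_{i, c(z)}$ for each $i$. Introduce-edge $zz'$: if $c(z) = c(z')$ and the two depths differ by exactly $1$, promote $r$ to $1$ on the deeper endpoint; otherwise the state is unchanged. Forget vertex $z$: reject unless $d(z) = 0$ (so $z = \hat z_{c(z)}$) or $r(z) = 1$; since all of $z$'s neighbours lie in $V(G_t)$ by the tree-decomposition property, $z$ has no further chance to acquire a parent, so this check is precisely what finalises the chain. Join: pair states that agree on $(c, d)$, take $r = r_1 \vee r_2$, and set $u_{ij} = u^1_{ij} + u^2_{ij} - \sum_{z \in X_t,\, c(z) = j} v_i(z)$ to correct for the bag being counted twice. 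The final answer is read off at the root ($X_t = \emptyset$) by checking whether any surviving state satisfies $U = (\hat w_{ij})_{i,j \in [n]}$.

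The number of states per node is bounded by $(n+1)^{\tw+1} (\beta+1)^{\tw+1} 2^{\tw+1} (\maxval+1)^{n^2}$, and a join transition contributes, for each output state, an extra $2^{\tw+1} (\maxval+1)^{n^2}$ factor for the split of $(r, U)$ across the two children. Multiplying by the $\cO(\tw \cdot m)$ nodes of the nice tree decomposition yields the claimed total of $(\tw + n)^{\cO(\tw+n)} \cdot \beta^{\cO(\tw+n)} \cdot m^{\cO(1)} \cdot \maxval^{\cO(n^2)}$. The main obstacle I expect is verifying that the depth-labelling really interacts well with join nodes, where a forgotten vertex's parent chain may alternate between $V(G_{t_1})$ and $V(G_{t_2})$ several times; requiring bit-for-bit agreement of $c$ and $d$ on $X_t$ across both children is what keeps such chains well-defined, since at every bag transit the depth tag remains consistent, and it is also what makes the state-count bound above come out correctly.
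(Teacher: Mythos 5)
Your proof is correct, and its key mechanism is genuinely different from the paper's. The paper certifies the radius constraint by fixing, for each bundle, a spanning-forest (BFS-tree) witness $H_i$ and tracking how that witness meets each bag: besides the bag assignment, the depth function and the partial utilities (your $c$, $d$, $U$, which coincide with the paper's $S_i$, $f_i$, $(w_{ij})$ almost verbatim), each state carries a rooted partition $(\ca{P}_i, R_i)$ of the bag recording the witness's connected components and their roots; join nodes are then processed via acyclic joins of these partitions, the bulk of the correctness proof (merges of graphical representations, flip vertices, Lemma~\ref{lem:root} guaranteeing that every witness component meets the bag) serves this component bookkeeping, and the construction must add every $\hat z_i$ to every bag to make that guarantee hold. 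You replace all of this with a different key lemma: the local parent-pointer characterization, by which a bundle containing $\hat z_i$ has all its vertices within distance $\beta$ of $\hat z_i$ in the induced subgraph if and only if it admits a labelling into $[\beta]_0$ whose unique $0$-labelled vertex is $\hat z_i$ and in which every other vertex has a neighbour labelled exactly one less. Since labels strictly decrease along parent pointers, chains cannot cycle and must terminate at $\hat z_i$, so the labelling certifies the radius bound and connectivity simultaneously; hence a single ``has a parent yet'' bit per bag vertex, finalised at the unique forget node (sound because, by the tree-decomposition property, all edges incident to a vertex are introduced before it is forgotten), does the work of the rooted partitions, and ``floating'' components are automatically impossible rather than explicitly excluded. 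What each approach buys: yours is substantially lighter---no acyclic-join machinery from the preliminaries, no bag augmentation, a much shorter correctness argument---and it attains the same runtime (your join accounting is slightly off, since the number of admissible splits of $r$ is $3^{\card{X_t}}$ rather than $2^{\tw + 1}$, and the $r$-flag needs exact ``if and only if'' semantics so that the introduce-edge promotion is forced, but both slips are absorbed by the $\cO(\cdot)$'s); the paper's witness-forest formulation is heavier but more generic, as it does not depend on the existence of a monotone distance labelling and would therefore adapt to connectivity-style constraints that admit no such local certificate.
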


The rest of this section is dedicated to proving   Lemma~\ref{lem:treedp}. And for that, we design a dynamic programming algorithm over a (nice) tree decomposition of $G$. 
From now on, we assume that we are given $(G, [n], \ca{V}, \beta, (\hat z_i)_{i \in [n]})$, an $n^2$ tuple $(\hat w_{ij})_{i, j \in [n]}$, where $0 \leq w_{ij} \leq \maxval$ for every $i, j \in [n]$ and a nice tree decomposition $(T, \{X_t ~|~ t \in V(T) \})$ of $G$ of width $\tw$. {\bf We first add $\bm{\hat z_i}$ for every $\bm{i \in [n]}$, to every bag $\bm{X_t}$} (and denote the resulting tree decomposition by $(T, \{X_t ~|~ t \in V(T) \})$ as well). Notice that this increases the width of the tree decomposition by $n$; and  $(T, \{X_t ~|~ t \in V(T) \})$ still remains a nice tree decomposition but for the fact that the bags corresponding to the root node and the leaf nodes of $T$ are non-empty. In particular, for $t \in V(T)$, where $t$ is a leaf node or $t$ is the root of $T$, we have $X_t = \set{\hat z_1, \hat z_2,\ldots, \hat z_n}$. Following standard convention, for $t \in V(T)$, we use $G_t$ to denote the subgraph of $G$ that consists of all the vertices and edges introduced in the subtree rooted at $t$.

\paragraph*{Outline of our algorithm.} Suppose that $\pi$ is the $((\hat z_i)_{i \in [n]}; \beta)$-annotated allocation that we are looking for. Then for each $i \in [n]$ and for each $z \in \pi(i)$, $\dist_{G[\pi(i)]}(\hat z_i, z) \leq \beta$. Let $H_i$ be a BFS tree of $G[\pi(i)]$, rooted at $\hat z_i$. Since $H_i$ is a BFS tree rooted at $\hat z_i$, we have $\dist_{G[\pi(i)]}(\hat z_i, z) = \dist_{H_i}(\hat z_i, z)$ for every $z \in \pi(i)$. For a node $t \in V(T)$, we guess how $H_i$ intersects with the graph $G_t$. Notice that $H_i$ is an acyclic graph and it could possibly split into multiple connected components when intersected with $G_t$. With this in mind, we guess the following features of the intersection of $H_i$ with $G_t$: (a) the intersection of $V(H_i) (= \pi(i))$ with $X_t$, (b) how $H_i$ partitions the vertices of $X_t \cap \pi(i)$ into connected components (c) for each $z \in X_t \cap \pi(i)$, the distance in $H_i$ between $z$ and $\hat z_i$ and (4) agent $i$'s valuations for the subsets of goods in $G_t$ that have been allocated to herself as well as the other agents, i.e., $v_i(\pi(j) \cap V(G_t))$ for every $i, j \in [n]$. We argue that these four pieces of information are sufficient to design a dynamic programming algorithm that constructs $\pi$ in a bottom up fashion over $T$. 

\begin{sloppypar}
\paragraph*{Designing the DP: Necessary Ingredients.} To design our DP, for each node $t \in V(T)$, we first define a set $\ca{C}_t$, (which formalises our guesses). Formally, for each $t \in V(T)$, let $\ca{C}_t$ be the set of all tuples $((S_i, f_i, (\ca{P}_i, R_i))_{i \in [n]}, (w_{ij})_{i, j \in [n]})$ with the following properties. \begin{itemize}
\item For each $i \in [n]$,  $S_i$ is a vertex subset such that $\hat z_i \in S_i$ and $S_i \subseteq X_t \cap B_G(\hat z_i, \beta)$ and $S_i \cap S_j = \emptyset$ for every $j \in [n] \setminus \set{i}$. 
\item For each $i \in [n]$, $(\ca{P}_i, R_i)$ is a rooted partition of $S_i$ such that $\hat z_i \in R_i$. 
\item For each $i \in [n]$, $\fn{f_i}{S_i}{[\beta]_0}$ is a function such that $f_i(\hat z_i) = 0$ and $f_i(z) \neq 0$ for every $z \in S_i \setminus \set{\hat z_i}$. 
\item The tuple $(w_{ij})_{i, j \in [n]}$ is an $n^2$-tuple of non-negative integers such that $0 \leq w_{ij} \leq \maxval$ for every $i, j \in [n]$. 
\end{itemize}

For example, when $n = 2$, the set $\ca{C}_t$ consists of all tuples of the form $((S_1, f_1, (\ca{P}_1, R_1)), (S_2, f_2, (\ca{P}_2, R_2)), (w_{11}, w_{12}, w_{21}, w_{22}))$. We now bound $\card{\ca{C}_t}$. 
\end{sloppypar}
\begin{observation}\label{obs:Ctsize}
For each $t \in V(T)$, we have $\card{\ca{C}}_t \leq (\tw + n)^{\cO(\tw + n)} \cdot \beta^{\cO(\tw)} \cdot \maxval^{\cO(n^2)}$. 
To see this, observe the following facts that follow from the definition of $\ca{C}_t$. (i) As the sets $S_1, S_2,\ldots, S_n \subseteq X_t$ are pairwise disjoint, each element of $X_t$ belongs to at most one $S_i$, and therefore, the number of choices for $(S_1, S_2,\ldots, S_n)$ is $(n + 1)^{\card{X_t}}$. Identical reasoning applies to the sets $R_1, R_2,\ldots, R_n$ as well, and therefore the number of choices for $(R_1, R_2,\ldots, R_n)$ is at most $(n + 1)^{\card{X_t}}$. 
(ii) For each choice of $(S_1, S_2,\ldots, S_n)$, notice that the family $\bigcup_{i = 1}^{n} \ca{P}_i$ is a partition of $\bigcup_{i = 1}^n S_i \subseteq X_t$. Therefore, corresponding to each $(S_1, S_2,\ldots, S_n)$, the number of choices for $(\ca{P}_1, \ca{P}_2,\ldots, \ca{P}_n)$ is at most $\card{X_t}^{\card{X_t}}$. 
(iii) To bound the number of choices for $(f_1, f_2,\ldots, f_n)$, notice again that as the sets $S_1, S_2,\ldots, S_n$ are pairwise disjoint, we may think of the $n$-tuple $(f_1, f_2,\ldots, f_n)$ of functions as a single function from $\bigcup_{i = 1}^{n}S_i$ to $[\beta]_0$. Therefore, the number of choices for $(f_1, f_2,\ldots, f_n)$ is at most $(\beta + 1)^{\card{X_t}}$. (iv) We have $\card{X_t} \leq \tw + 1 + n$;  the ``$+n$'' accounts for the fact that we added $\hat z_i$ for every $i \in [n]$ to every bag. (v) Putting these together, the number of choices for $(S_i, f_i, (\ca{P}_i, R_i))_{i \in [n]}$ is at most $(n + 1)^{\card{X_t}} \cdot (n + 1)^{\card{X_t}} \cdot \card{X_t}^{\card{X_t}} \cdot (n + 1)^{\card{X_t}} \cdot (\beta + 1)^{\card{X_t}} \leq (\tw + n)^{\cO(\tw + n)} \cdot \beta^{\cO(\tw + n)}$. (vi) The number of choices for the tuple $(w_{ij})_{i,j \in [n]}$ is at most $(\maxval + 1)^{n^2}$. (vii) From points (v) and (vi), we can conclude that the number of choices for $((S_i, f_i, (\ca{P}_i, R_i))_{i \in [n]}, (w_{ij})_{i, j \in [n]})$ is at most $(\tw + n)^{\cO(\tw + n)} \cdot \beta^{\cO(\tw + n)} \cdot (\maxval + 1)^{n^2}$, which bounds $\card{\ca{C}}_t$. 
\end{observation}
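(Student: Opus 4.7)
The plan is to bound the number of choices for each component of a tuple in $\ca{C}_t$ separately, and then multiply the bounds together. Recall that a tuple in $\ca{C}_t$ has the form $((S_i, f_i, (\ca{P}_i, R_i))_{i \in [n]}, (w_{ij})_{i, j \in [n]})$, so I would break the count into four pieces: the sets $(S_i)_{i \in [n]}$, the rooted partitions $(\ca{P}_i, R_i)_{i \in [n]}$, the functions $(f_i)_{i \in [n]}$, and the weights $(w_{ij})_{i, j \in [n]}$. The first thing to record is that, because every bag was augmented with $\set{\hat z_1, \hat z_2,\ldots, \hat z_n}$, we have $\card{X_t} \leq \tw + 1 + n = \cO(\tw + n)$, so any bound that is singly exponential in $\card{X_t}$ is automatically singly exponential in $\tw + n$.

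For the sets $(S_i)_{i \in [n]}$: since they are pairwise disjoint subsets of $X_t$, every vertex of $X_t$ either lies in exactly one $S_i$ or in none of them, giving at most $(n+1)^{\card{X_t}}$ choices. For the rooted partitions: the concatenation $\bigcup_{i} \ca{P}_i$ is a partition of $\bigcup_i S_i \subseteq X_t$, and the number of set partitions of a $k$-element set is bounded by $k^k$, so this contributes at most $\card{X_t}^{\card{X_t}}$; picking the root inside each block is dominated by another $(n+1)^{\card{X_t}}$-type factor. For the functions $f_i$: since the $S_i$ are disjoint, the $n$-tuple $(f_1,\ldots, f_n)$ can be viewed as a single function $\bigcup_i S_i \to [\beta]_0$, giving at most $(\beta+1)^{\card{X_t}}$ choices. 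Finally, each $w_{ij}$ lies in $[0, \maxval]$, so there are at most $(\maxval + 1)^{n^2}$ choices for the weights tuple.

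Multiplying these factors and substituting $\card{X_t} = \cO(\tw + n)$ yields an upper bound of the shape $(\tw + n)^{\cO(\tw + n)} \cdot \beta^{\cO(\tw + n)} \cdot \maxval^{\cO(n^2)}$. To recover the slightly sharper $\beta^{\cO(\tw)}$ stated in the claim, I would exploit the constraint $f_i(\hat z_i) = 0$: since the $n$ designated vertices $\hat z_1, \ldots, \hat z_n$ sit in every bag and their $f_i$-values are forced to be $0$, the function tuple only has genuine freedom on the remaining $\card{X_t} - n = \cO(\tw)$ vertices, trimming the $\beta$-factor down from $(\beta+1)^{\cO(\tw + n)}$ to $(\beta+1)^{\cO(\tw)}$.

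This is essentially a routine counting exercise and I do not anticipate a real obstacle. The only mildly delicate point is to avoid double-counting across the coupled components (for example, the rooted partitions are constrained to live on top of the already-chosen $S_i$'s), but since we are only after an upper bound it is safe to count each component with its own ``worst-case'' alphabet, which is what the plan above does.
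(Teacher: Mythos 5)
Your proposal is correct and follows essentially the same route as the paper: count the disjoint set-tuples $(S_i)_i$ and $(R_i)_i$ with $(n+1)^{\card{X_t}}$ each, the partitions via $\card{X_t}^{\card{X_t}}$, the function tuple as a single map into $[\beta]_0$, the weights via $(\maxval+1)^{n^2}$, and then use $\card{X_t} \leq \tw + 1 + n$. The one place you go beyond the paper is worth noting: the paper's own computation (items (iii), (v), (vii)) only yields $(\beta+1)^{\card{X_t}} = \beta^{\cO(\tw + n)}$, which is weaker than the $\beta^{\cO(\tw)}$ factor announced in the statement, whereas your observation that $f_i(\hat z_i) = 0$ is forced for each of the $n$ vertices $\hat z_1,\ldots,\hat z_n$ (all of which lie in every bag) reduces the genuine freedom of the function tuple to $\card{X_t} - n = \cO(\tw)$ values, and so actually delivers the sharper exponent claimed. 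This is a clean fix of a small mismatch between the paper's statement and its proof; since the downstream runtime in Lemma~\ref{lem:treedp} is stated with $\beta^{\cO(\tw + n)}$ anyway, nothing else in the paper hinges on the distinction.
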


\paragraph*{Valid allocations.} We now define what we call valid allocations. Informally, these are allocations that agree with our guesses (i.e., the tuples in $\ca{C}_t$). For a node $t \in V(T)$, recall that $G_t$ is the subgraph of $G$ made up of all the vertices and edges introduced in the subtree rooted at $t$. 
Consider a node $t \in V(T)$ and an allocation $\fn{\pi}{[n]}{2^{V(G_t)}}$. For a tuple $\eta \in \ca{C}_t$, where $\eta = ((S_i, f_i, (\ca{P}_i, R_i))_{i \in [n]}, (w_{ij})_{i, j \in [n]})$, we say that $\pi$ is \emph{$(t, \eta)$-valid} if 
\begin{enumerate}[nosep,label={(\textsl{VA}.\arabic*)}]
\item\label{inter} for each $i \in [n]$, $\pi(i) \cap X_t = S_i$; 
\item\label{span} for each $i \in [n]$, $G_t[\pi(i)]$ has a spanning forest $H_i$ such that 
\begin{enumerate}[label=(\alph*)]
\item\label{comp}  for every $z, z' \in S_i$, $z$ and $z'$ are in the same connected component of $H_i$ if and only if $z$ and $z'$ are in the same block of $\ca{P}_i$; 
\item\label{dist} for every $z \in S_i$, $f_i(z)= f_i(x) + \dist_{H_i}(x, z)$, where $x$ is the root of $\blk_{\ca{P}_i}(z)$;
\item\label{beta}  for every $z \in \pi(i) \setminus X_t$, there exists $y \in R_i$ such that $\dist_{H_i}(y, z) \leq \beta - f_i(y)$; and
\end{enumerate}
\item\label{value} for each $i, j \in [n]$, $v_i(\pi(j))= w_{ij}$.
\end{enumerate}

For $t \in V(T), \eta \in \ca{C}_{t}$, a $(t, \eta)$-valid allocation $\fn{\pi}{[n]}{2^{V(G_t)}}$ and $i \in [n]$, we call a spanning forest $H_i$ of $G_t[\pi(i)]$ that satisfies conditions \ref{span}\ref{comp}--\ref{span}\ref{beta} a \emph{$(t, \eta, i)$-witness for $\pi$}. 

The correctness of our DP crucially relies on the following lemma, which says that every connected component of a $(t, \eta, i)$-witness intersects $X_t$. 
\begin{lemma}\label{lem:root}
Consider $i \in [n]$. For $t \in V(T)$ and $\eta = ((S_i, f_i, (\ca{P}_i, R_i))_{i \in [n]}, (w_{ij})_{i, j \in [n]}) \in \ca{C}_{t}$, let $\fn{\pi}{[n]}{2^{V(G_t)}}$ be a $(t, \eta)$-valid allocation and $H_i$ a $(t, \eta, i)$-witness for $\pi$. Then for each connected component $H$ of $H_i$, there exists a unique vertex $x_H \in V(H) \cap R_i$. 
\end{lemma}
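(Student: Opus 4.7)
The plan is to reduce the lemma to two independent claims about a connected component $H$ of $H_i$: \emph{(i)} $V(H) \cap R_i$ is non-empty, and \emph{(ii)} $V(H) \cap R_i$ contains at most one vertex. Uniqueness then follows from combining the two. Throughout, I will repeatedly use the fact that $R_i \subseteq S_i \subseteq X_t$ and that, by the definition of a rooted partition, each block of $\ca{P}_i$ contains exactly one element of $R_i$.

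For uniqueness, the argument is essentially a one-liner. Suppose toward contradiction that $y, y' \in V(H) \cap R_i$ are distinct. Since $R_i \subseteq S_i$, both lie in $S_i$, and they belong to the same connected component $H$ of $H_i$ by assumption. Condition \ref{span}\ref{comp} of valid allocations then forces $y$ and $y'$ into the same block of $\ca{P}_i$, which is impossible as each block of a rooted partition contains exactly one root. So $|V(H) \cap R_i| \le 1$.

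For existence, I split on whether $V(H)$ meets $X_t$. If $V(H) \cap X_t \ne \emptyset$, then since $\pi(i) \cap X_t = S_i$ by \ref{inter}, there is some $z \in V(H) \cap S_i$. By the iff direction of \ref{span}\ref{comp}, $V(H) \cap S_i$ is exactly the block $P := \blk_{\ca{P}_i}(z)$, and so the unique element of $P \cap R_i$ lies in $V(H) \cap R_i$. If instead $V(H) \cap X_t = \emptyset$, pick any $z \in V(H)$; then $z \in \pi(i) \setminus X_t$, so condition \ref{span}\ref{beta} supplies $y \in R_i$ with $\dist_{H_i}(y, z) \le \beta - f_i(y)$. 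Since $f_i(y) \in [\beta]_0$ we have $\beta - f_i(y) \ge 0$, so the distance is finite, meaning $y$ lies in the same connected component of $H_i$ as $z$, i.e.\ $y \in V(H)$. But $y \in R_i \subseteq X_t$ contradicts $V(H) \cap X_t = \emptyset$, ruling out this case.

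I do not expect any real obstacle here: the lemma is a structural sanity check that the guessed data $(\ca{P}_i, R_i, f_i)$ is compatible with the witness $H_i$, and each of the three conditions in \ref{span} is used exactly once (\ref{span}\ref{comp} for uniqueness and for the ``meets $X_t$'' case, \ref{span}\ref{beta} for the ``avoids $X_t$'' case, and the range of $f_i$ for the non-negativity of $\beta - f_i(y)$). The only subtle point is remembering that $\dist_{H_i}(\cdot, \cdot)$ being finite forces membership in the same component, which is what turns \ref{span}\ref{beta} into a statement about reachability within $H$.
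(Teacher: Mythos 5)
Your proof is correct and follows essentially the same route as the paper's: the paper likewise splits on whether $V(H)$ meets $S_i$ (which equals $V(H)\cap X_t$ by \ref{inter}), using \ref{span}\ref{comp} together with the rooted-partition property to get the unique root in the intersecting case, and \ref{span}\ref{beta} plus finiteness of $\dist_{H_i}$ to rule out the disjoint case. Your reorganization into separate existence and uniqueness claims is only a cosmetic difference.
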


\begin{proof}
Let $H$ be a connected component of $H_i$. 
Suppose first that $V(H) \cap S_i \neq \emptyset$. Then, as $H$ is a connected component of $H_i$, by condition~\ref{span}\ref{comp}, $V(H) \cap S_i$ is a block of $\ca{P}_i$. And by the definition of a rooted partition, we have $\card{V(H) \cap S_i \cap R_i} = 1$. That is, there exists a unique element $x_H \in R_i$ such that $\set{x_H} = V(H) \cap S_i \cap R_i$. The lemma thus holds. Notice now that it is always the case that $V(H) \cap S_i \neq \emptyset$. Suppose for a contradiction that $V(H) \cap S_i = \emptyset$. Let $z \in V(H)$. Then $V(H) \cap X_t = \emptyset$, and by condition~\ref{span}\ref{beta}, there exists $y \in R_i$ such that $\dist_{H_i}(y, z) \leq \beta - f_i(y)$. Since $H$ is the connected component component of $H_i$ that contains $z$, we have $\dist_H(y, z) = \dist_{H_i}(y, z) \leq \beta - f_i(y)$. Thus $y \in V(H)$, and in particular, $y \in V(H) \cap S_i$, which is a contradiction.  
\end{proof}

\begin{observation}[Structure of a $\bm{(t, \eta, i)}$-witness]
\label{obs:structure}
For $i \in [n]$, we can think of a $(t, \eta, i)$-witness $H_i$ as a rooted forest---each connected component is a rooted tree; $R_i$ is precisely the set of roots of the components of $H_i$, each connected component of $H_i$ intersects $X_t$ and the partition $\mathcal{P}_i$ is such that for $z, z' \in S_i$, $z$ and $z'$ are in the same block of $\ca{P}_i$ if and only if $z$ and $z'$ are in the same connected component of $H_i$. 
\end{observation}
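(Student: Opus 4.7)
The plan is to read off each of the four structural assertions in Observation~\ref{obs:structure} directly from Lemma~\ref{lem:root} together with the validity conditions~\ref{span}\ref{comp}--\ref{span}\ref{beta} that already appear in the definition of a $(t,\eta,i)$-witness; there is no genuinely new combinatorial content to prove.

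First I would verify that $H_i$ is a rooted forest with $R_i$ as its set of component-roots. Lemma~\ref{lem:root} supplies, for every connected component $H$ of $H_i$, a unique vertex $x_H \in V(H) \cap R_i$, which we promote to the root of $H$; this turns each component into a rooted tree and $H_i$ into a rooted forest. The set $\set{x_H ~|~ H \text{ a component of } H_i}$ is trivially contained in $R_i$. For the reverse inclusion I would exploit the fact that $H_i$ is a \emph{spanning} forest of $G_t[\pi(i)]$, so $V(H_i) = \pi(i) \supseteq S_i \supseteq R_i$: every $r \in R_i$ therefore lies in some component $H$, and the uniqueness clause of Lemma~\ref{lem:root} forces $r = x_H$. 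Hence $R_i$ is exactly the set of roots.

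Next, every component meets $X_t$ automatically, because each contains its root in $R_i \subseteq S_i \subseteq X_t$. Finally, the correspondence between the blocks of $\ca{P}_i$ and the restriction of the component partition to $S_i$ is precisely what condition~\ref{span}\ref{comp} stipulates, so no further argument is needed there. The only mild subtlety worth flagging is the appeal to the spanning property $V(H_i) = \pi(i)$ in order to place $R_i$ inside $V(H_i)$: without this, an element of $R_i$ could in principle sit outside $H_i$ entirely, and the bijection between $R_i$ and the set of component-roots would fail. Other than this, the observation is pure bookkeeping on top of Lemma~\ref{lem:root}.
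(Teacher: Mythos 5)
Your proposal is correct and follows the same route as the paper, which states this observation without a separate proof precisely because it is immediate from Lemma~\ref{lem:root} together with conditions~\ref{inter} and \ref{span}\ref{comp}, exactly as you argue. Your explicit verification of the reverse inclusion $R_i \subseteq \set{x_H ~|~ H \text{ a component of } H_i}$ via the chain $R_i \subseteq S_i \subseteq \pi(i) = V(H_i)$ and the uniqueness clause of Lemma~\ref{lem:root} is the right bookkeeping, and the spanning property you flag is indeed what makes it go through.
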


Before proceeding further with our DP, let us first see how valid  allocations help us prove Lemma~\ref{lem:treedp}. 
\begin{lemma}\label{lem:compatible}
Consider $(G, [n], \ca{V}, \beta, (\hat z_i)_{i \in [n]})$, $(\hat w_{ij})_{i, j \in [n]}$, and $(T, \{X_t ~|~ t \in V(T)\})$ of $G$ as defined in Lemma~\ref{lem:treedp}. And consider an allocation $\fn{\pi}{[n]}{2^{V(G)}}$. Then $\pi$ is a $((\hat z_i)_{i \in [n]}; \beta)$-annotated  allocation with $v_i(\pi(j)) = \hat w_{ij}$ for every $i, j \in [n]$ if and only if $\pi$ is $(\hat t, \hat \eta)$-valid, where $\hat t$ is the root of the tree $T$ and $\hat \eta = ((S_i, f_i, (\ca{P}_i, R_i))_{i \in [n]}, (\hat w_{ij})_{i, j \in [n]})\in \ca{C}_{\hat t}$  with $S_i = \set{\hat z_i}$, $\fn{f_i}{S_i}{[\beta]_0}$ is the function that maps $\hat z_i$ to $0$, $P_i = \set{\set{\hat z_i}}$ and $R_i = \set{\hat z_i}$ for each $i \in [n]$.  
\end{lemma}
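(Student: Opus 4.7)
The proof will proceed by a direct verification, unpacking the definitions of both a $((\hat z_i)_{i \in [n]}; \beta)$-annotated allocation and $(\hat t, \hat \eta)$-validity at the root bag. The plan rests on three easy structural observations about $\hat t$ that together collapse most of the conditions to triviality: first, since a nice tree decomposition has empty root and leaf bags and we augmented every bag by $\set{\hat z_1, \ldots, \hat z_n}$, we have $X_{\hat t} = \set{\hat z_1, \ldots, \hat z_n}$; second, because every vertex and every edge of $G$ is introduced at some node in the subtree rooted at $\hat t$, we have $G_{\hat t} = G$; third, because each $S_i = \set{\hat z_i}$ is a singleton and $\ca{P}_i = \set{\set{\hat z_i}}$ with $R_i = \set{\hat z_i}$ and $f_i(\hat z_i) = 0$, conditions \ref{span}\ref{comp} and \ref{span}\ref{dist} impose nothing, \ref{inter} reduces to $\pi(i) \cap \set{\hat z_1,\ldots, \hat z_n} = \set{\hat z_i}$, and \ref{span}\ref{beta} reduces to the existence, for every $z \in \pi(i) \setminus \set{\hat z_i}$, of a path of length at most $\beta$ from $\hat z_i$ to $z$ inside the spanning forest $H_i$.

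For the forward direction, assume $\pi$ is a $((\hat z_i)_{i \in [n]}; \beta)$-annotated allocation with $v_i(\pi(j)) = \hat w_{ij}$. The definition of annotated allocation gives $\hat z_i \in \pi(i)$, and disjointness of bundles gives $\hat z_j \notin \pi(i)$ for $j \neq i$, so condition \ref{inter} holds. For each $i$, since $G[\pi(i)]$ is $(\hat z_i; \beta)$-annotated, $\hat z_i$ is connected to every vertex of $\pi(i)$ by a path of length at most $\beta$; in particular $G[\pi(i)]$ is connected, and I take $H_i$ to be a BFS tree of $G[\pi(i)]$ rooted at $\hat z_i$. As noted in the preliminaries, a BFS tree preserves distances from its root, so $\dist_{H_i}(\hat z_i, z) = \dist_{G[\pi(i)]}(\hat z_i, z) \leq \beta$ for every $z \in \pi(i)$. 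With $y = \hat z_i \in R_i$ and $f_i(\hat z_i) = 0$, this yields condition \ref{span}\ref{beta}; conditions \ref{span}\ref{comp} and \ref{span}\ref{dist} hold vacuously as already noted; and \ref{value} holds by the choice of $w_{ij} = \hat w_{ij}$.

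For the backward direction, assume $\pi$ is $(\hat t, \hat \eta)$-valid. Condition \ref{value} immediately gives $v_i(\pi(j)) = \hat w_{ij}$. Condition \ref{inter} forces $\hat z_i \in \pi(i)$. It remains to show that $G[\pi(i)]$ is $(\hat z_i; \beta)$-annotated, i.e., $\dist_{G[\pi(i)]}(\hat z_i, z) \leq \beta$ for every $z \in \pi(i)$. This is trivial for $z = \hat z_i$; for $z \in \pi(i) \setminus \set{\hat z_i} = \pi(i) \setminus X_{\hat t}$, condition \ref{span}\ref{beta} applied to the witness $H_i$ (which is a spanning forest of $G_{\hat t}[\pi(i)] = G[\pi(i)]$) supplies $y \in R_i = \set{\hat z_i\}$ with $\dist_{H_i}(y, z) \leq \beta - f_i(y) = \beta$; hence $\dist_{G[\pi(i)]}(\hat z_i, z) \leq \dist_{H_i}(\hat z_i, z) \leq \beta$, as desired.

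The only mildly delicate point is the matching between spanning-forest witnesses and BFS trees in the forward direction, and it is really just the observation that BFS trees preserve root-distances. No step is an obstacle; the entire statement is a sanity check that the $\ca{C}_{\hat t}$-guess $\hat \eta$ is the correct ``boundary condition'' at the root of the tree decomposition, which is exactly what must hold for the DP described subsequently to return the right answer.
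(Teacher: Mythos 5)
Your proof is correct and follows essentially the same route as the paper's: both directions unpack the definitions at the root bag, the forward direction takes a BFS tree of $G[\pi(i)]$ rooted at $\hat z_i$ as the witness (using that BFS trees preserve root-distances), and the backward direction applies condition \ref{span}\ref{beta} of the witness together with $R_i = \set{\hat z_i}$ and $f_i(\hat z_i) = 0$. Your added remarks (that the allocation's disjointness gives condition \ref{inter}, and that annotatedness makes $G[\pi(i)]$ connected so the BFS tree exists) only make explicit what the paper leaves implicit.
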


\begin{proof}
Assume first that $\pi$ is a $((\hat z_i)_{i \in [n]}; \beta)$-annotated  allocation and $v_i(\pi(j)) = \hat w_{ij}$ for every $i, j \in [n]$. We will show that $\pi$ satisfies each of the conditions in the definition of $(\hat t, \hat \eta)$-valid allocation. Recall that $X_{\hat t} = \set{\hat z_1, \hat z_2,\ldots, \hat z_n}$ and $G_{\hat t} = G$. Consider $i \in [n]$. 
First, $\pi(i) \cap X_{\hat t} = \pi(i) \cap \set{\hat z_1, \hat z_2,\ldots, \hat z_n} = \set{\hat z_i} = S_i$. Thus $\pi$ satisfies condition~\ref{inter}.  
Since $\pi$ is $((\hat z_i)_{i \in [n]}; \beta)$-annotated, $\dist_{G[\pi(i)]}(\hat z_i, z) \leq \beta$ for every $z \in \pi(i)$. 
Now, fix a BFS tree $H_i$ of $G[\pi(i)]$, rooted at $\hat z_i$. 
Since $\pi(i) \cap X_t = \set{\hat z_i}$ and $\ca{P}_i = \set{\set{\hat z_i}}$, $H_i$ trivially satisfies condition~\ref{span}\ref{comp}. Again, since $S_i = \set{\hat z_i}$, $f_i(\hat z_i) = 0$ and $\dist_{H_i}(\hat z_i, \hat z_i) = 0$, $H_i$ trivially satisfies condition~\ref{span}\ref{dist} as well. Observe now that by the definition of a BFS tree, for every $z \in \pi(i)$, we have $\dist_{H_i}(\hat z_i, z) = \dist_{G[\pi(i)]}(\hat z_i, z) \leq \beta$, which along with the fact that $f_i(\hat z_i) = 0$, implies that $\dist_{H_i}(\hat z_i, z) \leq \beta - f_i(\hat z_i)$. Thus $H_i$ satisfies condition~\ref{span}\ref{beta}. Finally, as $v_{i}(\pi(j)) = \hat w_{ij}$, $\pi$ satisfies condition~\ref{value}. We have thus shown that $\pi$ is $(\hat t, \hat \eta)$-valid. 

Conversely, assume that $\pi$ is $(\hat t, \hat \eta)$-valid. Consider $i \in [n]$. Recall that $S_i = \set{\hat z_i}$ and $f_i(\hat z_i) = 0$. By condition~\ref{inter}, we have $\hat z_i \in \pi(i)$. By the definition of a $(\hat t, \hat \eta)$-valid allocation, $\pi$ has a $(t, \eta, i)$-witness, say $H'_i$. By condition~\ref{span}\ref{beta}, for every $z \in \pi(i) \setminus \set{\hat z_i}$, we have $\dist_{H'_i}(\hat z_i, z) \leq \beta$, which implies that $\dist_{G[\pi(i)]}(\hat z_i, z) \leq \beta$. Thus $((\hat z_i)_{i \in [n]}; \beta)$-annotated.  
\end{proof}

\paragraph*{Definition of the states of the DP.}  Lemma~\ref{lem:compatible} tells us that to check if $(G, N, \ca{V})$ admits a $((\hat z_i)_{i \in [n]}; \beta)$-annotated  allocation, it is sufficient to check if $(G, N, \ca{V})$ admits a $(\hat t, \hat \eta)$-valid allocation. In light of this, we now define the states of our DP as follows. {\bf For each $\bm{t \in V(T)}$ and each tuple $\bm{\eta \in \ca{C}_t}$, we define 
$\bm{A[t, \eta] = 1}$  if there exists a $\bm{(t, \eta)}$-valid allocation  and $\bm{A[t, \eta] = 0}$ otherwise.} 

\subsubsection*{Implications of the DP}

Assuming the correctness of Lemma~\ref{lem:treedp} and assuming that we can correctly compute $A[t, \eta]$ for every $t \in V(T)$ and $\eta \in \ca{C}_t$, let us now complete the proof of Theorem~\ref{thm:treedp}. 
\begin{proof}[Proof of Theorem~\ref{thm:treedp}]
We show how we can use Lemma~\ref{lem:treedp} as well as the definition of $A[t, \eta]$ to solve  \xanno.  Consider an instance $(G, [n], \ca{V}, (\hat z_i)_{i \in [n]}, \beta)$ of \xanno. 

First, proportionality. For $i \in [n]$, let $\maxval_i =  v_i(V(G))$. For every $n^2$-tuple $(\hat w_{ij})_{i, j \in [n]}$ such that $\hat w_{ii} \geq (1/ n) \cdot \maxval_i$, we invoke the algorithm of Lemma~\ref{lem:treedp} on the instance $(G, [n], \ca{V}, (\hat z_i)_{i \in [n]}, \beta, (\hat w_{ij})_{i, j \in [n^2]})$. If the algorithm returns yes for at least one tuple $(\hat w_{ij})_{i, j \in [n]}$, then we return that $(G, [n], \ca{V}, (\hat z_i)_{i \in [n]}, \beta)$ admits a proportional, $(\hat z_i)_{i \in [n]}$-annotated allocation. Otherwise, we return no. 

Now, envy-freeness. We go over all possible tuples $(w_{ij})_{i, j \in [n]}$ where for every $i \in [n]$, $\hat w_{ii} \geq \hat w_{ij}$ for every $j \in [n] \setminus \set{i}$. For each such tuple $(w_{ij})_{i, j \in [n]}$, we invoke the algorithm of Lemma~\ref{lem:treedp} and return yes or no accordingly. 

Notice that for both proportionality and envy-freeness, the number of tuples $(\hat w_{ij})_{i, j \in [n]}$ that we considered is always upper bounded by $\maxval^{\cO(n^2)}$. 

To solve the maximin fair variant of the problem, let us first see how we can compute the maximin share guarantee for each agent. Recall that $\Gamma(((\hat z_i)_{i \in [n]}; \beta) \mh \mathtt{anno})$ is the class of all $((\hat z_i)_{i \in [n]}; \beta)$-annotated allocations; for convince, let us denote this set of allocations simply by $\Gamma$, and for each agent $p \in [n]$, let us denote the maximin share guarantee of $p$ by $\mms_p$.  Recall that $\mms_p = \max_{\pi \in \Gamma} \min_{q \in [n]} v_p(\pi(q))$. But by Lemma~\ref{lem:compatible}, for any allocation $\pi$, we have $\pi \in \Gamma$ if and only if $\pi$ is $(\hat t, \eta)$-valid for an appropriately defined $\eta \in \ca{C}_{\hat t}$. We thus have
\[
\mms_p = \max_{(w_{ij})_{i, j \in [n]}} \min_{q \in [n]} \{w_{pq} ~|~ \text{there exists } \eta \in \ca{C}_{\hat t} \text{ with }\eta = ((S_i, f_i, (\ca{P}_i, R_i))_{i \in [n]}, (w_{ij})_{i, j \in [n]}) \text{ and } A[\hat t, \eta] = 1\}, 
\] 
where for each $i \in [n]$, $S_i = \set{\hat z_i}$, $\fn{f_i}{S_i}{[\beta]_0}$ is the function that maps $\hat z_i$ to $0$, $P_i = \set{\set{\hat z_i}}$ and $R_i = \set{\hat z_i}$. We can thus compute $\mms_p$. Now to check if $(G, [n], \ca{V}, (\hat z_i)_{i \in [n]}, \beta)$ admits a maximin fair allocation, we go over all possible tuples $(\hat w_{ij})_{i,j \in [n]}$ such that $\hat w_{pp} \geq \mms_p$ for every $p \in [n]$; and invoke the algorithm of Lemma~\ref{lem:treedp}. Again, the number of tuples $(\hat w_{ij})_{i, j \in [n]}$ that we have to consider is always upper bounded by $\maxval^{\cO(n^2)}$. Therefore, we can indeed solve each of the three problems in the claimed runtime, assuming we can compute $A[t, \eta]$ in such time for every $t$ and $\eta$.  
\end{proof}

\paragraph*{Welfare-maximisation.} We can use a similar approach to decide if there is a welfare-maximising $((\hat z_i)_{i \in [n]}; \beta)$-annotated allocation as well. Let $\phi$ be a complete allocation such that for every item $z \in V(G)$, if $z \in \phi(i)$ for some agent $i$, then $v_i(z) \geq v_{i'}(z)$ for every $i' \in [n]$. Notice that since the valuations are additive, $\phi$ achieves the maximum maximum utilitarian social welfare. Now, to decide if there exists a welfare-maximising $((\hat z_i)_{i \in [n]}; \beta)$-annotated allocation, we go over all tuples $(\hat w_{ij})_{i, j \in [n]}$ such that $\sum_{i = 1}^{n} w_{ii} \geq \sum_{i = 1}^n \phi(i)$; and for each such tuple we invoke the algorithm of Lemma~\ref{lem:treedp}. 
 
\subsubsection*{Computing $\bm{A[t, \eta]}$}

We now show how to compute $A[t, \eta]$ recursively. For each $t \in V(T)$, we compute $A[t, \eta]$, assuming that we have computed $A[t', \eta']$ correctly for all descendants $t'$ of $t$ and all $\eta' \in \ca{C}_{t'}$. The base case of the recursion corresponds to the case when $t$ is a leaf node. In all the cases below, $\eta = ((S_i, f_i, (\ca{P}_i, R_i))_{i \in [n]}, (w_{ij})_{i, j \in [n]})$. 

\subsubsection*{{Case 1: leaf node.}} Let $t$ be a leaf node. Then $X_t = \set{\hat z_1, \hat z_2, \ldots, \hat z_n}$ and $G_t$ is the graph containing the $n$ isolated vertices, $\hat z_1, \hat z_2, \ldots, \hat z_n$. By the definition of $\ca{C}_t$, for $i \in [n]$, we have $S_i = \set{\hat z_i}$, $\ca{P}_i = \set{\{\hat z_i\}}$, and $R_i = \set{z_i}$. Notice that the allocation $\pi$ that allocates $\hat z_i$ to $i$ for each $i \in [n]$ trivially satisfies~\ref{inter} and \ref{span}; and $\pi$ satisfies~\ref{value} if and only if $v_i(\hat z_j)= w_{ij}$ for $i, j\in [n]$. Accordingly, we set 
\[
A[t, \eta] = \begin{cases} 1, & \text{ if }v_{i}(\hat z_j) = w_{ij} \text{ for every } i, j \in [n], \\ 0, & \text{ otherwise.} \end{cases} \tag{\textsl{Eq. Leaf}}\label{eq:leaf}
\]

\subsubsection*{Case 2: forget node.} Let $t$ be a forget node and $t'$ the unique child of $t$. 
Let $X_t = X_{t'} \setminus \set{z}$ for some $z \in X_{t'}$. 
Then $z \notin \set{\hat z_1, \hat z_2,\ldots, \hat z_n}$ and $G_t = G_{t'}$. 

Suppose that we have a $(t, \eta)$-valid allocation, say $\pi$. 
Notice that $\pi$ may or may not allocate the vertex $z$. 
If $\pi$ does not allocate $z$, then $\pi$ is $(t', \eta)$-valid as well (and we can simply ignore $z$). 
On the other hand, if $\pi$ does allocate $z$ to agent $i$ for some $i \in [n]$, then notice that $\pi(i) \cap X_{t'} = S_i \cup \set{z}$, and $z$ must be present in one of the blocks of $\ca{P}'_i$, where $\ca{P}'_i$ is a partition obtained from $\ca{P}_i$ by adding $z$ to one of the blocks of $\mathcal{P}_i$. 
We formalise this intuition as follows. 

\begin{sloppypar}
For $i \in [n]$, let $\ca{C}_{t'}(i, z, \eta) \subseteq \ca{C}_{t'}$ be set of all tuples $((S'_p, f'_p, (\ca{P}'_p, R'_p))_{p \in [n]}, (w'_{pq})_{p, q \in [n]}) \in \ca{C}_{t'}$ such that (i) $S'_i = S_i \cup \set{z}$, (ii) $\ca{P}'_i \in \add(z, \ca{P}_i)$ (iii) $R'_i = R_i$, (iv) $f'_i \in \ext(f, \set{z}, [\beta])$, (v) for $j \in [n] \setminus \set{i}$, $(S'_j, f'_j, (\ca{P}'_j, R'_j)) = (S_j, f_j, (\ca{P}_j, R_j))$ and (vi) $w'_{pq} = w_{pq}$ for every $p, q \in [n]$. And we set
\begin{equation}
A[t, \eta]  = A[t', \eta] \lor \bigvee_{\substack{(i, \eta') \\ i \in [n] \\ \eta' \in \ca{C}_{t'}(i, z, \eta)}} A[t', \eta']. \tag{\textsl{Eq. Forget}}\label{eq:forget}
\end{equation}
\end{sloppypar}

\subsubsection*{Case 3: introduce vertex node.} Let $t$ be an introduce vertex node and $t'$ the unique child of $t$. 
Let $X_t = X_{t'} \cup \set{z}$ for some $z \notin X_{t'}$. 
Then $z \notin \set{\hat z_1, \hat z_2, \ldots, \hat z_n}$ and $G_t$ is the disjoint union of $G_{t'}$ and the vertex $z$. That is, $z$ is an isolated vertex in $G_t$. 

Suppose that we have a $(t, \eta)$-valid allocation, say $\pi$. 
Again, $\pi$ may or may not allocate the vertex $z$. 
If $\pi$ does not allocate $z$, then $\pi$ is $(t', \eta)$-valid as well (and we can simply ignore $z$). 
On the other hand, if $\pi$ does allocate $z$ to agent $i$ for some $i \in [n]$, then we must have $z \in S_i = \pi(i) \cap X_t$ and hence $\pi(i) \cap X_{t'} = S_i \setminus \set{z}$. 
Also, as $z$ is an isolated vertex in the graph $G_t$, $\set{z}$ must be a block of $\ca{P}_i$ (and consequently $z \in R_i$). 
In addition, because the valuations are additive, the contribution of $z$ to $v_j(\pi(i))$ is exactly $v_j(z)$ for $j \in [n]$. With this discussion in mind, we now formally compute $A[t, \eta]$ as follows. 
\begin{sloppypar}
For $i \in [n]$ such that $z \in S_i$ and $\set{\set{z_i}} \in \ca{P}_i$, let $\eta_i = ((S_{i, p}, f_{i, p}, (\ca{P}_{i, p}, R_{i, p}))_{p \in [n]}, (w_{i, pq})_{p, q \in [n]}) \in \ca{C}_{t'}$, where (i) $S_{i, i} = S_i \setminus \set{z}$, (ii) $\ca{P}_{i, i} = \ca{P}_i \setminus \set{\set{z}}$ (iii) $R_{i, i} = R_i \setminus \set{z}$, (iv) $f_{i, i} = f_i \big|_{S_i \setminus \set{z}}$, (v) for $j \in [n] \setminus \set{i}$, $(S_{i, j}, f_{i, j}, (\ca{P}_{i, j}, R_{i, j})) = (S_j, f_j, (\ca{P}_j, R_j))$ and (vi) for $p, q \in [n]$, $w_{i, pq} = w_{pq} - v_p(z)$ if $q = i$ and $w_{i, pq} = w_{pq}$ if $q \neq i$. 
And we set
\[
A[t, \eta]  = \begin{cases}A[t', \eta],  & \text{ if    } z \notin \bigcup_{p \in [n]}S_p, \\ A[t', \eta_i], & \text{ if } z \in S_i \text{ and } \set{z} \in \ca{P}_i, \text{ for some } i \in [n], \\ 0, & \text{ otherwise.} \end{cases} \tag{\textsl{Eq. Intro-v}}\label{eq:intver}
\]  
\end{sloppypar}

\subsubsection*{Case 4: introduce edge node.} Let $t$ be an introduce edge node and $t'$ the unique child of $t$. 
Then $X_t = X_{t'}$. Let $t$ introduce the edge $z z' \in E(G)$ for some $z, z' \in X_t$. That is, $G_t$ is the graph $G_{t'}$ with the edge $zz'$ added. 

Suppose that we have a $(t, \eta)$-valid allocation, say $\pi$. 
As in the previous cases, $\pi$ may or may not allocate either of the vertices $z$ and $z'$. 
If $\pi$ does not allocate either $z$ or $z'$, then $\pi$ is $(t', \eta)$-valid. 
If $\pi$ allocates $z$ to agent $i \in [n]$ and $z'$ to agent $j \in [n] \setminus \set{i}$, then again, $\pi$ is $(t', \eta)$-valid. 
So suppose that $\pi$ allocates both $z$ and $z'$ to agent $i$, and consider a spanning forest $H_i$ of $G_t[\pi(i)]$. 
If $H_i$ that does not contain the edge $zz'$, then $\pi$ is still $(t', \eta)$-valid and in particular, $H_i$ is a $(t', \eta, i)$-witness for $\pi$. 
The only other possible scenario is when $H_i$ does contain the edge $zz'$. 
In this case, notice that $z$ and $z'$ must be in the same block of $\ca{P}_i$ and $\dist_{H_i}(z, z') = 1$. Then, by condition~\ref{span}\ref{dist}, we must have $| f_i(z) -f_i(z') | = 1$. 
And deleting the edge $zz'$ from $H_i$ would split the connected component of $H_i$ that contains $z$ and $z'$ into exactly two connected components---one containing $z$ and the other containing $z'$, with exactly one of these two components containing the root of $\blk_{\ca{P}_i}(z)$ as well; and the connected components of $H_i$ that do not contain $z$ (and hence $z'$) remain unchanged. 
Intuitively, we try to guess all the possible ways in which the connected component of $H_i$ that contains $z$ and $z'$ be split. 
\begin{sloppypar}
For $i \in [n]$, we say that $\eta$ is $(i, zz')$-consitent if $z, z' \in S_i$, $\blk_{\ca{P}_i}(z) = \blk_{\ca{P}_i}(z')$ and $| f_i(z) - f_i(z') | = 1$. Suppose $\eta$ is $(i, zz')$-consistent for some $i \in [n]$. Let $P \in \ca{P}_i$ be such that $z, z' \in P$ and let $x \in R_i \cap P$. That is, $P = \blk_{\ca{P}_i}(z) = \blk_{\ca{P}_i}(z')$ and $x$ is the root of $\blk_{\ca{P}_i}(z)$. Let $u, u' \in \set{z, z'}$ be such that $f_i(u') = 1 + f_i(u)$. Now, let $\ca{C}_{t'}(i, zz', \eta) \subseteq C_{t'}$ be the set of all tuples $((S'_p, f'_p, (\ca{P}'_p, R'_p))_{p \in [n]}, (w'_{pq})_{p, q \in [n]}) \in \ca{C}_{t'}$ such that (i) $S'_i = S_i$, (ii) $\ca{P}'_i = (\ca{P}_i \setminus \set{P}) \cup \set{P_u, P_{u'}}$ for some partition $\set{P_u, P_{u'}}$ of $P$ such that $x, u \in P_u$ and $u' \in  P'$ (for example, if $u = z$ and $u' = z'$, then we have $x, z \in P_u$ and $z' \in P_{u'}$), (iii) $R'_i = R_i \cup \set{u'}$ (for example, if $u = z$, then we have $R'_i = R_i \cup \set{z'}$), (iv) $f'_i = f_i$, (v) for $j \in [n] \setminus \set{i}$, $(S'_j, f'_j, (\ca{P}'_j, R'_j)) = (S_j, f_j, (\ca{P}_j, R_j))$ and (vi) for $p, q \in [n]$, $w'_{pq} = w_{pq}$. And we set
\[
A[t, \eta]  = \begin{cases} A[t', \eta] \lor \bigvee_{\eta' \in \ca{C}_{t'}(i, zz', \eta)} A[t', \eta'],  & \text{ if  } \eta \text{ is } (i, zz')\text{-consistent for } i \in [n], \\  A[t', \eta],  & \text{ otherwise.} \end{cases} \tag{\textsl{Eq. Intro-e}}\label{eq:inted}
\]  
\end{sloppypar}

\subsubsection*{Case 5: join node.} Let $t$ be a join node with children $t'$ and $t''$. Then $X_t = X_{t'} = X_{t''}$ and the graph $G_t$ is the ``union'' of the two graphs $G_{t'}$ and $G_{t''}$ with $V(G_t) = V(G_{t'}) \cup V(G_{t''})$, $E(G_t) = E(G_{t'}) \cup E(G_{t''})$, $V(G_{t'}) \cap V(G_{t''}) = X_t$ and $E(G_{t'}) \cap E(G_{t''}) = \emptyset$. 

Suppose that we have a $(t, \eta)$-valid allocation, say $\pi$. Then $\pi$ is essentially a ``union'' of two allocations, $\pi'$ and $\pi''$, where $\pi'$ is $(t', \eta')$-valid and $\pi''$ is $(t'', \eta'')$-valid for some $\eta' \in \ca{C}_{t'}$ and $\eta'' \in \ca{C}_{t''}$.  
We guess all possible choices for $\eta'$ and $\eta''$. Notice that while combining $\pi'$ and $\pi''$ to form $\pi$ , we must also ensure that we count agent $i$'s valuation for $\pi(j) \cap X_t = S_j$ (i.e., $v_i(S_j)$) only once. We formalise this intuition as follows. 

For $\eta' = ((S'_i, f'_i, (\ca{P}'_i, R'_i))_{i \in [n]}, (w'_{ij})_{i, j \in [n]}) \in \ca{C}_{t'}$ and $\eta'' = ((S''_i, f''_i, (\ca{P}''_i, R''_i))_{i \in [n]}, (w''_{ij})_{i, j \in [n]}) \in \ca{C}_{t''}$, we say that $\eta$ is $(\eta', \eta'')$-consistent if for each $i \in [n]$, $S'_i = S''_i = S_i$, $f'_i = f''_i = f_i$ and $(\ca{P}_i, R_i)$ is an acyclic join of $(\ca{P}'_i, R'_i)$ and $(\ca{P}''_i, R''_i)$, and for $i, j \in [n]$, $w_{ij} = w'_{ij} + w''_{ij} - v_i(S_j)$. Let $\ca{C}_{(t, t'')}(\eta) \subseteq \ca{C}_{t'} \times \ca{C}_{t''}$ be the set of all $(\eta', \eta'') \in  \ca{C}_{t'} \times \ca{C}_{t''}$ such that $\eta$ is $(\eta', \eta'')$-consistent. And we set
 \[
A[t, \eta] = \bigvee_{(\eta', \eta'') \in \ca{C}_{(t, t'')}(\eta)} (A[t', \eta'] \land  A[t'', \eta'']). \tag{\textsl{Eq. Join}}\label{eq:join}
\]
This completes the description of the DP. We relegate the formal proof of correctness of our computation to the appendix (see Lemma~\ref{lem:treedp-correctness}). To complete the proof of Lemma~\ref{lem:treedp}, let us analyse the runtime of our DP. By Observation~\ref{obs:Ctsize}, we have $\card{\ca{C}_t} \leq (\tw + n)^{\cO(\tw + n)} \cdot \beta^{\cO(\tw)} \cdot \maxval^{\cO(n^2)}$. And since we work with a nice tree decomposition, $\card{V(T)} = m^{\cO(1)}$. Hence the number of states of the DP is at most $(\tw + n)^{\cO(\tw + n)} \cdot \beta^{\cO(\tw)} \cdot \maxval^{\cO(n^2)} \cdot m^{\cO(1)}$. And notice that to compute each of the entries, we only need to look up $(\tw + n)^{\cO(\tw + n)} \cdot \beta^{\cO(\tw)} \cdot \maxval^{\cO(n^2)}$ many previously computed entries. Hence the runtime follows. This completes the proof of Lemma~\ref{lem:treedp}.

\begin{toappendix}
\begin{lemma}\label{lem:treedp-correctness}
For each $t \in V(T)$ and $\eta \in \ca{C}_t$, $A[t, \eta]$ is computed correctly. 
\end{lemma}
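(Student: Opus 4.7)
The plan is to proceed by induction on the depth of $t$ in the (nice) tree decomposition $T$, from the leaves up to the root. For each of the five node types we argue both directions of the equivalence encoded by the recurrence: in the \emph{forward} direction, given a $(t, \eta)$-valid allocation $\pi$ together with its witnesses $(H_i)_{i \in [n]}$, we produce child-labels whose entries (known to be correctly computed by the inductive hypothesis) make the right-hand side of the recurrence evaluate to $1$; in the \emph{backward} direction, given such a labelling with value $1$, we assemble the valid child allocations into a $(t, \eta)$-valid allocation for $t$.

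First I would dispose of the three simpler cases. For a leaf node, $G_t$ consists of exactly the $n$ isolated vertices $\hat z_1, \ldots, \hat z_n$; the definition of $\ca{C}_t$ forces $S_i = \set{\hat z_i}$, $\ca{P}_i = \set{\set{\hat z_i}}$, $R_i = \set{\hat z_i}$ and $f_i(\hat z_i) = 0$, so the only candidate allocation is $\pi(i) = \set{\hat z_i}$, and \ref{eq:leaf} simply tests condition \ref{value} for it. For a forget node $t$ with child $t'$ and forgotten vertex $z$, we have $G_t = G_{t'}$, so a $(t, \eta)$-valid $\pi$ is either $(t', \eta)$-valid (when $z$ is unallocated) or $(t', \eta')$-valid for some $\eta' \in \ca{C}_{t'}(i, z, \eta)$ (when $z \in \pi(i)$); in the latter case $\eta'$ records the block of $\ca{P}_i$ into which $z$ fits and its distance label $f_i(z) = f_i(y) + 1$, where $y$ is the $H_i$-parent of $z$ inside that block (the existence of such a $y$ in $X_{t'}$ is guaranteed by Lemma~\ref{lem:root} applied to~$t'$). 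The introduce-vertex case is analogous, except that the isolated $z$ must form its own singleton block of $\ca{P}_i$ and be its own root, and the valuation shift $w_{i, pq} = w_{pq} - v_p(z)$ subtracts $z$'s contribution when passing to $G_{t'}$, so the two-branch formula \ref{eq:intver} exactly enumerates the possibilities.

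The introduce-edge case requires slightly more care. Given a $(t, \eta)$-valid $\pi$ with witnesses $H_i$, if no $H_i$ uses the new edge $zz'$ then the same $H_i$ witness $(t', \eta)$-validity, and the first disjunct of \ref{eq:inted} applies. Otherwise there is a unique $i$ with $zz' \in E(H_i)$; then $z, z' \in S_i$ (since both lie in $X_t = X_{t'}$), they sit in the same block $P$ of $\ca{P}_i$, and $|f_i(z) - f_i(z')| = 1$ holds by condition \ref{dist}, so $\eta$ is $(i, zz')$-consistent. Removing $zz'$ from $H_i$ splits $P$ into the sub-component containing the root $x$ of $P$ and the sub-component containing whichever of $z, z'$ is farther from $x$ (which becomes a new root); this is precisely the operation encoded in $\ca{C}_{t'}(i, zz', \eta)$. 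Conversely, given $\eta' \in \ca{C}_{t'}(i, zz', \eta)$ and a valid $\pi$ at $(t', \eta')$, we glue the two sub-components of $P$ by re-inserting $zz'$ into the witness forest; no cycle is created because $z$ and $z'$ lay in distinct components beforehand, and condition \ref{dist} is preserved by construction.

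The hard part will be the join node. Given a $(t, \eta)$-valid $\pi$ with witnesses $(H_i)_i$, observe that $G_t$ is the edge-disjoint union of $G_{t'}$ and $G_{t''}$ glued along $X_t$, so each $H_i$ decomposes as $H'_i \cup H''_i$ with $H'_i \subseteq G_{t'}$ and $H''_i \subseteq G_{t''}$, each a forest. These induce rooted partitions $(\ca{P}'_i, R'_i)$ and $(\ca{P}''_i, R''_i)$ on $S_i$, and I claim that $(\ca{P}_i, R_i)$ is their acyclic join: any cycle in $\merge(\mathtt{G}(\ca{P}'_i), \mathtt{G}(\ca{P}''_i))$ would lift, by replacing each block-edge with its $H'_i$- or $H''_i$-path, to a cycle in $H_i$, contradicting that $H_i$ is a forest. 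Taking $w'_{ij} = v_i(\pi(j) \cap V(G_{t'}))$ and $w''_{ij} = v_i(\pi(j) \cap V(G_{t''}))$, additivity of $v_i$ together with $\pi(j) \cap V(G_{t'}) \cap V(G_{t''}) = S_j$ yields $w_{ij} = w'_{ij} + w''_{ij} - v_i(S_j)$, so $(\eta', \eta'') \in \ca{C}_{(t, t'')}(\eta)$. The converse is the most delicate step: given consistent $(\eta', \eta'')$ with $A[t', \eta'] = A[t'', \eta''] = 1$, I set $\pi(i) = \pi'(i) \cup \pi''(i)$ and $H_i = H'_i \cup H''_i$; the acyclicity built into the join guarantees that $H_i$ is a forest, conditions \ref{comp} and \ref{dist} transfer directly from the two halves since they agree on $X_t$ (because $S'_i = S''_i$ and $f'_i = f''_i$), and condition \ref{beta} holds because every $z \in \pi(i) \setminus X_t$ lies in $\pi'(i) \setminus X_t$ or $\pi''(i) \setminus X_t$, and is thus within the required distance of some root in $R'_i \cup R''_i \subseteq R_i$. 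This verifies \ref{eq:join} and completes the induction.
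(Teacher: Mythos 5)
Your induction scheme and your treatment of the leaf, forget, introduce-vertex and introduce-edge nodes essentially match the paper's proof, and those cases are fine at the sketch level. The genuine gap is in the join node, exactly where you say the difficulty lies. Your backward direction rests on the inclusion $R'_i \cup R''_i \subseteq R_i$, and this is false: by the definition of an acyclic join of rooted partitions the correct relation is $R_i = R'_i \cap R''_i$. A vertex that is the root of a component of $H'_i$ but not of $H''_i$ acquires a parent through an edge of $H''_i$ once the two forests are merged, and so is \emph{not} a root of $H_i = H'_i \cup H''_i$. With the inclusion going the wrong way, your one-line verifications of conditions \ref{span}\ref{dist} and \ref{span}\ref{beta} collapse. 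These conditions do not ``transfer directly'': the unique path in $H_i$ from the root $x$ of $\blk_{\ca{P}_i}(z)$ to a vertex $z \in S_i$ may alternate between edges of $H'_i$ and edges of $H''_i$, and the hypothesis on, say, $H'_i$ only relates $f_i(z)$ to the $R'_i$-root of $z$'s block in $\ca{P}'_i$, which in general is neither $x$ nor even an element of $R_i$. One has to prove that the switch vertices along this path are alternately roots in $R'_i$ and in $R''_i$, so that the $f_i$-increments telescope to $\dist_{H_i}(x,z)$; the paper establishes this with a descent argument (a violating switch vertex spawns an infinite strictly decreasing sequence of $f_i$-values, impossible since $0 \leq f_i \leq \beta$). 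Condition \ref{span}\ref{beta} then needs the same machinery: a vertex $z \in \pi'(i) \setminus X_t$ is close to some $y' \in R'_i$, but since $y'$ need not belong to $R_i$ you must walk from $y'$ up to the $R_i$-root $y$ of $\blk_{\ca{P}_i}(y')$ and combine with the distance claim to obtain $\dist_{H_i}(y, z) \leq \beta - f_i(y)$.

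There are two further omissions, both in the forward direction of the join. First, for the tuples $\eta', \eta''$ you construct to be elements of $\ca{C}_{t'}$ and $\ca{C}_{t''}$ at all, the roots you assign to the components of $H'_i$ and $H''_i$ must lie in the bag $X_{t'} = X_{t''} = X_t$; this is not automatic (the natural root of a component of $H'_i$, namely its vertex closest to the root of the enclosing component of $H_i$, has to be shown to lie in the bag) and the paper proves it as a separate claim. Second, consistency of $(\eta', \eta'')$ demands not only that $\merge(\mathtt{G}(\ca{P}'_i), \mathtt{G}(\ca{P}''_i))$ be acyclic---which is all your lifting argument addresses---but also that it be a graphical representation of $\ca{P}_i$ (its connected components must induce exactly the partition $\ca{P}_i$) and that $R_i = R'_i \cap R''_i$. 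None of these steps is deep in isolation, but together with the false root inclusion they show that the actual content of the join case---the structural relationship between the merged witness forest and the two child witnesses---is missing from the proposal.
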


\begin{proof}
\begin{sloppypar}
Consider $t \in V(T)$ and $\eta \in \ca{C}_t$. Let $\eta = (((S_i, f_i, (\ca{P}_i, R_i))_{i \in [n]}, w_{ij})_{i, j \in [n]})$. 

We prove the lemma by induction on $t$. That is, assuming that $A[t', \eta']$ is computed correctly for all descendants $t'$ of $t$ and $\eta' \in \ca{C}_{t}'$, we prove that $A[t, \eta]$ is computed correctly as well. The base case of the induction is when $t$ is a leaf node. 

\paragraph*{Base case.} Suppose that $t$ is a leaf node. Then $X_t = \set{\hat z_1, \hat z_2, \ldots, \hat z_n}$ and $G_t$ is the edgeless graph containing the $n$ vertices $\hat z_1, \hat z_2, \ldots, \hat z_n$. Then by the definition of $C_{t}$, for $i \in [n]$, the only choices for $S_i, f_i$ and $(\ca{P}_i, R_i)$ are  as follows: $S_i = \set{\hat z_i}$, $f_i(\hat z_i) = 0$, $\ca{P}_i = \set{\{\hat z_i\}}$, and $R_i = \set{z_i}$. Notice then that an allocation $\fn{\pi}{[n]}{2^{V(G_t)}}$ is $(t, \eta)$-valid if and only if $\pi(i) = \set{\hat z_i}$ for each $i \in [n]$ and $v_i(\hat z_j)= w_{ij}$ for $i, j\in [n]$. That is, $A[t, \eta] = 1$ if and only if $v_i(\hat z_j)= w_{ij}$ for $i, j\in [n]$, and this is precisely how \ref{eq:leaf} computes $A[t, \eta]$. 

\paragraph*{Induction Hypothesis.} Assume now that $t$ is a non-leaf node and that we have computed $A[t', \eta']$ correctly for all descendants $t'$ of $t$ and $\eta' \in \ca{C}_{t}'$. 

Now, to prove that $A[t, \eta]$ is computed correctly, we must prove that $A[t, \eta] = 1$ if and only if the right hand side of the recurrence formula for $A[t, \eta]$ (i.e., \ref{eq:forget}, \ref{eq:intver}, \ref{eq:inted} or \ref{eq:join}, depending on which kind of a node $t$ is) evaluates to $1$. In each of the following cases, whenever we assume that $A[t, \eta] = 1$, we use $\fn{\pi}{[n]}{2^{V(G_t)}}$ to denote a $(t, \eta)$-valid allocation, and for each $i \in [n]$, we use $H_i$ to denote a spanning forest of $G_t[\pi(i)]$ that satisfies conditions~\ref{span}\ref{comp}--\ref{span}\ref{beta}, i.e., a $(t, \eta, i)$-witness for $\pi$. We now split the proof into cases depending on what kind of a node $t$ is. 

\paragraph*{Forget node.} Suppose that $t$ is a forget node and $t'$ the unique child of $t$. Let $t$ forget the vertex $z$. That is, $X_t = X_{t'} \setminus \set{z}$ for some $z \in X_{t'}$. Then $z \notin \set{\hat z_1, \hat z_2, \ldots, \hat z_n}$ and $G_t = G_{t'}$.

Recall that for $i \in [n]$, we defined $\ca{C}_{t'}(i, z, \eta) \subseteq \ca{C}_{t'}$ to be the set of all tuples $((S'_p, f'_p, (\ca{P}'_p, R'_p))_{p \in [n]}, (w'_{pq})_{p, q \in [n]}) \in \ca{C}_{t'}$ such that (i) $S'_i = S_i \cup \set{z}$, (ii) $\ca{P}'_i \in \add(z, \ca{P}_i)$ (iii) $R'_i = R_i$, (iv) $f'_i \in \ext(f, \set{z}, [\beta])$, (v) for $j \in [n] \setminus \set{i}$, $(S'_j, f'_j, (\ca{P}'_j, R'_j)) = (S_j, f_j, (\ca{P}_j, R_j))$ and (vi) $w'_{pq} = w_{pq}$ for every $p, q \in [n]$. And we have set
\begin{equation}
A[t, \eta]  = A[t', \eta] \lor \bigvee_{\substack{(i, \eta') \\ i \in [n] \\ \eta' \in \ca{C}_{t'}(i, z, \eta)}} A[t', \eta']. \tag{\textsl{Eq. Forget}}\label{eq:forget}
\end{equation}

\paragraph*{Forget node-Forward direction.} Assume that $A[t, \eta] = 1$.  
Since $X_t \subseteq X_{t'}$, notice that $\eta \in C_{t}'$ as well. There are two possible cases: either $\pi$ does not allocate $z$ or $\pi$ does allocate $z$. 

Suppose that $\pi$ does not allocate $z$. That is, $z \notin \bigcup_{i \in [n]}\pi(i)$. Then $\pi$ is $(t', \eta)$-valid as well. To see this, observe that for each $i \in [n]$,  $\pi(i) \cap X_{t'} = \pi(i) \cap X_{t} = S_i$ and $G_{t'}[\pi(i)] = G_{t}[\pi(i)]$. Hence $\pi$ and $H_i$ satisfy the all the properties \ref{inter}-\ref{value}. Since $\pi$ is $(t', \eta)$-valid, by the definition of $A[t', \eta']$, we have $A[t', \eta] = 1$. And by induction hypothesis, we computed $A[t', \eta]$ correctly, which implies that the right hand side of \ref{eq:forget} evaluates to $1$. 

Suppose now that $\pi$ allocates $z$. Assume without loss of generality that $\pi$ allocates $z$ to agent $1$. That is, $z \in \pi(1)$.  (The other cases are symmetric.) We define a tuple $\eta' = ((S'_p, f'_p, (\ca{P}'_p, R'_p))_{p \in [n]}, (w'_{pq})_{p, q \in [n]}) \in \ca{C}_{t'}$ as follows. (i) We define $S'_1 = \pi(1) \cap X_{t'} = \pi(1) \cap (X_t \cup \set{z}) = S_1 \cup \set{z}$. (ii) We define $\ca{P}'_1$ to be the partition of $S'_1$ into connected components of $H_1$. Observe that the set $\set{z}$ is not a block of $\ca{P}'_1$. To see this, notice that since $\pi$ is $(t, \eta)$-valid and since $z \in \pi(1) \setminus X_t$, by \ref{span}\ref{beta}, there exists $y \in R_1$ such that $\dist_{H_1}(y, z) \leq \beta - f_1 (y)$, which implies that $y$ and $z$ are in the same connected component of $H_i$. And since $y \in X_t$, we have $y \in X_{t'}$ as well. Thus $\blk_{\ca{P}'_1} (y) = \blk_{\ca{P}'_1} (z)$, and in particular, $\ca{P}'_1 \in \add(z, \ca{P}_1)$. (iii) We define $R'_1 = R_1$ (and hence $z \notin R'_1$). (iv) The function $\fn{f'_1}{S'_1}{[\beta]_0}$ is defined such that $f'_1 \big|_{S_1} = f_1$ and $f'_1(z) = f'_1(y) + \dist_{H_1}(y, z)$, (where $y \in R'_1$ is the root of $\blk_{\ca{P}'_1}$). Notice  that as $y \in S_1$, $f'_1 (y) = f_1 (y)$ and since $\dist_{H_1}(y, z) \leq \beta - f_1 (y)$, we indeed have $f'_1 (z) \in [\beta]$. In particular, $f'_1 \in \ext(f_1, \set{z}, [\beta]_0)$. (vi) We define $(S'_j, f'_j, (\ca{P}'_j, R'_j)) = (S_j, f_j, (\ca{P}_j, R_j)$ for every $j \in [n] \setminus \set{1}$. And finally, we define $w'_{pq} = w_{pq}$ for every $p, q \in [n]$. Observe that $\eta' \in C_{t'}(1, z, \eta)$. Observe also that our choice of $\eta'$ is such that $\pi$ is $(t', \eta')$-valid, which implies that $A[t', \eta'] = 1$. Now, by induction hypothesis, we computed $A[t', \eta']$ correctly, which, along with the fact that $\eta' \in C_{t'}(1, z, \eta)$, implies that the right hand side of \ref{eq:forget} evaluates to $1$. 

\paragraph*{Forget node-Backward direction.} Assume that the right hand side of ~\ref{eq:forget} evaluates to $1$. Then either $A[t', \eta] = 1$ or $A[t', \eta'] = 1$ for some $i \in [n]$ and $\eta' \in \ca{C}_{t'}(i, z, \eta)$. In either case, we will prove that $A[t, \eta] = 1$, which will imply that ~\ref{eq:forget} computes $A[t, \eta]$ correctly. And to prove that $A[t, \eta] = 1$, we will prove that there exists a $(t, \eta)$-valid allocation. 

Suppose first that $A[t', \eta] = 1$. By the induction hypothesis, we computed $A[t', \eta]$ correctly, and hence there exists a $(t', \eta)$-valid allocation $\fn{\phi}{[n]}{2^{V(G_{t'})}}$. We claim that $\phi$ is a $(t, \eta)$-valid allocation as well. 
To see this, observe that $\phi$ satisfies all the conditions in the definition of a $(t, \eta)$-valid allocation. In particular, since $\phi$ is $(t', \eta)$-valid, for $i \in [n]$, we have $\phi(i) \cap X_{t'} = S_i$, which implies that $z \notin \phi(i)$. And the facts that $z \notin \phi(i)$ and $X_t = X_{t'} \setminus \set{z}$ together imply that $\phi(i) \cap X_{t} = S_i$. Thus condition~\ref{inter} is satisfied. Now, for $i \in [n]$, let $H_{\phi, i}$ be a $(t', \eta, i)$-witness for $\phi$. Then $H_{\phi, i}$ is a $(t, \eta, i)$-witness for $\phi$ as well, as it trivially satisfies the required conditions~\ref{span}\ref{comp}--\ref{span}\ref{beta}. In particular, as $z \notin \phi(i)$ and $X_t = X_{t'} \setminus \set{z}$, for every $z' \in V(G_t)$, we have $z' \in \phi(i) \setminus \set X_{t}$ if and only if $z' \in \phi(i) \setminus X_{t'}$; and  hence $H_{\phi, i}$ satisfies condition \ref{span}\ref{beta}. Finally, condition~\ref{value} is trivially satisfied as well.  

Suppose now that $A[t', \eta'] = 1$ for some $i \in [n]$ and $\eta' \in \ca{C}_{t'}(i, z, \eta)$. Again, by the induction hypothesis, we computed $A[t', \eta']$ correctly, and hence there exists a $(t', \eta')$-valid allocation $\fn{\pi'}{[n]}{2^{V(G_{t'})}}$. For $i \in [n]$, let $H'_i$ be a $(t', \eta', i)$-witness for $\pi'$. First, assume without loss of generality that $\eta' \in \ca{C}_{t}(1, z, \eta)$. Let $\eta' = ((S'_p, f'_p, (\ca{P}'_p, R'_p))_{p \in [n]}, (w'_{pq})_{p, q \in [n]})$. Then, by the definition of $\ca{C}_{t}(1, z, \eta)$, we have (i) $S'_1 = S_1 \cup \set{z}$, (ii) $\ca{P}'_1 \in \add(z, \ca{P}_1)$ (iii) $R'_1 = R_1$, (iv) $f'_1 \in \ext(f, \set{z}, [\beta])$, (v) $(S'_p, f'_p, (\ca{P}'_p, R'_p)) = (S_p, f_p, (\ca{P}_p, R_p))$ for every $p \in [n] \setminus \set{1}$ and (vi) $w'_{pq} = w_{pq}$ for every $p, q \in [n]$.  We claim that for each $i \in [n]$, $H'_i$ is a $(t, \eta, i)$-witness for $\pi'$, and that $\pi'$ is a $(t, \eta)$-valid allocation. And to prove this, we only need to verify that (a) $\pi(1) \cap X_t = S_1$ and (b) there exists $y \in R_1$ such that $\dist_{H'_1}(y, z) \leq \beta - f_1 (y)$, as the rest of the conditions in the definition of a $(t, \eta)$-valid allocation follow directly from the fact that $\pi'$ is $(t', \eta')$-valid. Now, since $\pi'$ is $(t', \eta')$-valid, we have $\pi'(1) \cap X_{t'} = S'_1 = S_1 \cup \set{z}$. In particular, we have $z \in \pi'(1) \setminus X_t$.  Since $X_t = X_{t'} \setminus \set{z}$, we have$\pi'(1) \cap X_t = S'_1 \setminus \set{z} = S_1$. 
Observe now that since $\ca{P}'_1 \in \add(z, \ca{P}_1)$, the set $\set{z}$ is not a block of $\ca{P}'_1$. And since $R'_1 = R_1 \subseteq X_t$, we have $z \notin R'_1$. Let $y \in R_1$ be the root of $\blk_{\ca{P}'_1}(z)$. Then $y \in X_t$ and hence $f'_1 (y) = f_1 (y)$. Therefore, by condition~\ref{span}\ref{beta}, we have $\dist_{H'_1}(y, z) \leq \beta - f'_1 (y) = \beta - f_1 (y)$. Thus $\pi'$ is a $(t, \eta)$-valid allocation. 

\paragraph*{Introduce vertex node.} Suppose that $t$ is an introduce vertex node and $t'$ the unique child of $t$. Let $t$ introduce the vertex $z$. That is, $X_t = X_{t'} \cup \set{z}$ for some $z \notin X_{t'}$. Then $z \notin \set{\hat z_1, \hat z_2, \ldots, \hat z_n}$ and the graph $G_t$ is the union of the graph $G_{t'}$ and the isolated vertex $z$. 

\begin{sloppypar}
Recall that for $i \in [n]$ such that $z \in S_i$ and $\set{\set{z_i}} \in \ca{P}_i$, we defined $\eta_i = ((S_{i, p}, f_{i, p}, (\ca{P}_{i, p}, R_{i, p}))_{p \in [n]}, (w_{i, pq})_{p, q \in [n]}) \in \ca{C}_{t'}$, where (i) $S_{i, i} = S_i \setminus \set{z}$, (ii) $\ca{P}_{i, i} = \ca{P}_i \setminus \set{\set{z}}$ (iii) $R_{i, i} = R_i \setminus \set{z}$, (iv) $f_{i, i} = f_i \big|_{S_i \setminus \set{z}}$, (v) for $j \in [n] \setminus \set{i}$, $(S_{i, j}, f_{i, j}, (\ca{P}_{i, j}, R_{i, j})) = (S_j, f_j, (\ca{P}_j, R_j))$ and (vi) for $p, q \in [n]$, $w_{i, pq} = w_{pq} - v_p(z)$ if $q = i$ and $w_{i, pq} = w_{pq}$ if $q \neq i$. 
And we have set
\[
A[t, \eta]  = \begin{cases}A[t', \eta],  & \text{ if    } z \notin \bigcup_{p \in [n]}S_p, \\ A[t', \eta_i], & \text{ if } z \in S_i \text{ and } \set{z} \in \ca{P}_i, \text{ for some } i \in [n], \\ 0, & \text{ otherwise.} \end{cases} \tag{\textsl{Eq. Intro-v}}\label{eq:intver}
\]  
\end{sloppypar}

\paragraph*{Introduce vertex node-Forward direction.} Assume that $A[t, \eta] = 1$. There are two possible cases: either $\pi$ does not allocate $z$ or $\pi$ does allocate $z$. 

Suppose that $\pi$ does not allocate $z$. That is, $z \notin \bigcup_{i = 1}^{n}\pi(i)$. Then, since $z \in X_t$ and since $\pi$ is $(t, \eta)$-valid, we have $z \notin \bigcup_{i = 1}^{n} S_i$. In this case, \ref{eq:intver} sets $A[t, \eta]$ to be $A[t', \eta]$. First, since $z \notin \bigcup_{i = 1}^{n} S_i$, we have $S_1, S_2,\ldots, S_n \subseteq X_{t'}$ and hence $\eta \in C_{t'}$. Since $z \notin \bigcup_{i = 1}^{n} \pi(i)$, and since the graphs $G_t$ and $G_{t'}$ differ only in the isolated vertex $z$, observe that $\pi$ is $(t', \eta)$-valid as well. Hence $A[t', \eta] = 1$. By the induction hypothesis, $A[t', \eta]$ is computed correctly, and we can thus conclude that \ref{eq:intver} correctly assigns the value $1$ to $A[t, \eta]$ in this case. 

Suppose now that $\pi$ does allocate $z$. Assume without loss of generality that $\pi$ allocates $z$ to agent $1$. 
We thus have $z \in \pi(1) \cap X_t = S_1$. Since $z$ is an isolated vertex in $G_t$ and hence in $H_1$, the set $\set{z}$ is a block of $\ca{P}_1$. Therefore, \ref{eq:intver} sets $A[t, \eta]$ to be $A[t', \eta_1]$ in this case. Now, consider the allocation $\fn{\pi_1}{[n]}{V(G_{t'})}$, where $\pi_1(1) = \pi(1) \setminus \set{z}$ and $\pi_1 (j) = \pi(j)$ for $j \in [n] \setminus \set{1}$. We claim that $\pi_1$ is $(t', \eta_1)$-valid. 
To see this, observe the following facts. (1) Since $\pi$ is $(t, \eta)$-valid, $\pi(1) \cap X_t = S_1$, which implies that $\pi_1(1) \cap X_{t'} = S_1 \setminus \set{z} = S^1_1$. 
(2) Since $z$ is an isolated vertex in $H_1$, we can verify that $H_1 - z$ is a $(t', \eta_1, 1)$-witness for $G_{t'}[\pi_1]$. 
(3) For $p \in [n]$, we have $v_p(\pi_1 (1)) = v_p(\pi(1) \setminus \set{z}) = v_p(\pi(1)) - v_p(z) = w_{p1} - v_p(z) = w_{1, p1}$; and $v_p(\pi_1 (q)) = v_p(\pi(q)) = w_{pq} = w_{1, pq}$. 
Thus $\pi_1$ is $(t', \eta_1)$-valid and hence $A[t', \eta_1] = 1$. By the induction hypothesis, we computed $A[t', \eta_1]$ correctly, and hence \ref{eq:intver} correctly assigns the value $1$ to $A[t, \eta]$. 

\paragraph*{Introduce vertex node-Backward direction.} Assume that the right hand side of \ref{eq:intver} evaluates to $1$. Notice that there are only two possible ways for this to happen: either (i) $z \notin \bigcup_{i = 1}^{n} S_i $ and $A[t', \eta] = 1$ or $z \in S_i$, $\set{z} \in \ca{P}_i$ and $A[t', \eta_i] = 1$ for some $i \in [n]$.  

Suppose that $z \notin \bigcup_{i = 1}^{n} S_i$ and $A[t', \eta = 1]$. Then, by the induction hypothesis, there exists a $(t', \eta)$-valid allocation $\fn{\pi'}{[n]}{2^{V(G_{t'})}}$. And we can verify that $\pi'$ is $(t, \eta)$-valid as well, which implies that $A[t, \eta] = 1$. Thus \ref{eq:intver} correctly assigns the value $1$ to $A[t, \eta]$ in this case. 

Suppose that $z \in S_i$, $\set{z} \in \ca{P}_i$ and $A[t', \eta_i] = 1$ for some $i \in [n]$. Assume without loss of generality that $i = 1$. That is, $A[t', \eta_1] = 1$. And by the induction hypothesis, there exists a $(t', \eta_1)$-valid allocation $\fn{\phi_1}{[n]}{2^{V(G_{t'})}}$. 
Consider the allocation $\fn{\phi}{[n]}{2^{V(G_{t})}}$ that extends $\phi_1$ as follows: $\phi(1) = \phi_1 (1) \cup \set{z}$ and $\phi(p) = \phi_1 (p)$ for every $p \in [n] \setminus \set{1}$. 
We can verify that $\phi$ is $(t, \eta)$-valid, which implies that $A[t, \eta] = 1$, and therefore, we can conclude that \ref{eq:intver} correctly assigns the value $1$ to $A[t, \eta]$ in this case. 
To see that $\phi$ is $(t, \eta)$-valid, observe the following facts. (a) We have $\phi(1) \cap X_t = \phi(1) \cap (X_{t'} \cup \set{z}) = (\phi_1(1) \cap X_{t'}) \cup \set{z} = S_{1, 1} \cup \set{z} = S_1$ and $\phi(p) \cap X_t = \phi_1(p) \cap X_{t'} = S_{1, p} = S_2$. 
(b) Consider a $(t', \eta, 1)$-witness $H'_1$ for $\phi_1$. Then the graph obtained from $H'_1$ by adding the isolated vertex $z$ is a $(t, \eta, 1)$-witness for $\phi$. 
And a $(t', \eta, p)$-witness for $\phi_1$ is a $(t, \eta, p)$-witness for $\phi$ as well.  
(c) Finally, for $p \in [n]$, we have $v_p (\phi(1)) = v_p(\phi_1 (1) \cup \set{z}) = v_p (\phi_1 (1)) + v_p (z) = w_{1, p1} + v_p (z) = w_{p1}$ and $v_p (\phi(q)) = v_p(\phi_1 (q)) = w_{1, pq} = w_{pq}$. 

\paragraph*{Introduce edge node.} Suppose that $t$ is an introduce vertex node and $t'$ the unique child of $t$. Let $t$ introduce the edge $zz' \in E(G)$. That is, $X_t = X_{t'}$ and $z, z' \in X_{t}$. for some $z \notin X_{t'}$. Thus the only difference between the graphs $G_{t}$ and $G_{t'}$ is that while $G_{t}$ contains the edge $zz'$, $G_{t'}$ does not. 

Let us recall \ref{eq:inted}. For $i \in [n]$, we say that $\eta$ is $(i, zz')$-consistent if $z, z' \in S_i$, $\blk_{\ca{P}_i}(z) = \blk_{\ca{P}_i}(z')$ and $| f_i(z) - f_i(z') | = 1$. Suppose $\eta$ is $(i, zz)$-consistent for some $i \in [n]$. Let $P \in \ca{P}_i$ be the block of $\ca{P}_i$ that contains $z$ (and $z'$) and $x \in R_i$ the root of $P$.  
Let $u, u' \in \set{z, z'}$ be such that $f_i(u') = 1 + f_i(u)$. We have defined $\ca{C}_{t'}(i, zz', \eta) \subseteq C_{t'}$ to be the set of all tuples $((S'_p, f'_p, (\ca{P}'_p, R'_p))_{p \in [n]}, (w'_{pq})_{p, q \in [n]}) \in \ca{C}_{t'}$ such that (i) $S'_i = S_i$, (ii) $\ca{P}'_i = (\ca{P}_i \setminus \set{P}) \cup \set{P_u, P_{u'}}$ for some partition $\set{P_u, P_{u'}}$ of $P$ such that $x, u \in P_u$ and $u' \in  P'$ (for example, if $u = z$ and $u' = z'$, then we have $x, z \in P_u$ and $z' \in P_{u'}$), (iii) $R'_i = R_i \cup \set{u'}$ (for example, if $u = z$, then we have $R'_i = R_i \cup \set{z'}$), (iv) $f'_i = f_i$, (v) for $j \in [n] \setminus \set{i}$, $(S'_j, f'_j, (\ca{P}'_j, R'_j)) = (S_j, f_j, (\ca{P}_j, R_j))$ and (vi) for $p, q \in [n]$, $w'_{pq} = w_{pq}$. And we have set
\[
A[t, \eta]  = \begin{cases} A[t', \eta] \lor \bigvee_{\eta' \in \ca{C}_{t'}(i, zz', \eta)} A[t', \eta'],  & \text{ if  } \eta \text{ is } (i, zz')\text{-consistent for } i \in [n], \\  A[t', \eta],  & \text{ otherwise.} \end{cases} \tag{\textsl{Eq. Intro-e}}\label{eq:inted}
\]  

\paragraph*{Introduce edge node-Forward direction.} Assume that $A[t, \eta] = 1$. There are two possible cases: either $zz' \in E(H_i)$ for some $i \in [n]$, or (2) $zz' \notin E(H_i)$ for any $i \in [n]$. 

Suppose first that $zz' \in E(H_i)$ for some $i \in [n]$. Without loss of generality, assume that $i = 1$. We will show that $\eta$ is $(1, zz')$-consistent and that $A[t', \eta'] = 1$ for some $\eta' \in C_{t'}(1, zz', \eta)$. 
The fact that $zz' \in E(H_1)$ implies that $z, z' \in \pi(1) \cap X_t = S_1$. It also implies that $z$ and $z'$ are in the same block of $\ca{P}_1$. Let $P \in \ca{P}_1$ be the block of $\ca{P}_1$ that contains $z$ and $z'$, and let $x \in R_i$ be the root of $P$. 
Notice that as $zz' \in E(H_1)$ and $H_1$ is acyclic, either the unique path in $H_1$ from $x$ to $z$ contains $z'$ or the unique path in $H_1$ from $x$ to $z'$ contains $z$; for otherwise $H_1$ would contain a cycle.  Assume without loss of generality that the unique path in $H_1$ from $x$ to $z'$ contains $z$ (and hence the edge $zz'$). Thus, $\dist_{H_1}(x, z') = \dist_{H_1}(x, z) + 1$. And by condition~\ref{span}\ref{dist}, $f_1 (z') = f_1(x) + \dist_{H_1}(x, z') = f_1(x) + \dist_{H_1}(x, z) + 1 = f_1 (z) + 1$. That is, $|f_1(z') - f_1(z)| = 1$. We have thus shown that $\eta$ is $(1, zz')$-consistent. 

Now, consider the graph $H_1 - zz'$.  Notice that as $H_1$ is acyclic, $z$ and $z'$ are in different components of $H_1 - zz'$. Let $H_1^{z}$ be the component of $H_1 - zz'$ that contains $z$, and $H_1^{z'}$ the component that contains $z'$. Notice that $x$ belongs to the component $H_1^z$.  
We now define a tuple $((S'_p, f'_p, (\ca{P}'_p, R'_p))_{p \in [n]}, (w'_{pq})_{p, q \in [n]})$ as follows: 
$S'_1 = S_1, f'_1 = f_1, R'_1 = R_1 \cup \set{z'}$, $\ca{P}'_1 = (\ca{P} \setminus \set{P}) \cup \set{P_{z}, P_{z'}}$, where $\set{P_z, P_{z'}}$ is the partition of $P$ into the connected components of $H_1 - zz'$; $(S'_j, f'_j, (\ca{P}'_j, R'_j) = (S_j, f_j, (\ca{P}_j, R_j)$ for every $j \in [n] \setminus \set{1}$ and $w'_{pq} = w_{pq}$ for every $p, q \in [n]$. Notice that $\eta' \in \ca{C}_{t'}(1, zz;, \eta)$. We can verify that $\pi$ is indeed $(t', \eta')$-valid. Notice that conditions \ref{inter} and \ref{value} trivially hold; and that $H_p$ is a $(t', \eta', p)$- witness for $\pi$ for every $p \in [n] \setminus \set{1}$. So we only need to argue that $H_1 - zz'$ is a $(t', \eta', 1)$-witness for $\pi$. And for this, notice first that condition~\ref{span}\ref{comp} trivially holds as well, as $\ca{P}'_1$, by definition, is precisely the partition of $S'_1$  into the connected components of $H_1 - zz'$. 

To see that condition~\ref{span}\ref{dist} holds, consider $z'' \in S_i$ and let $x'' \in R_i$ be the root of $\blk_{\ca{P}'_1}(x'')$. If the unique path in $H_1$ from $x''$ to $z''$ does not contain the edge $zz'$, then \ref{span}\ref{dist} trivially holds. 
So suppose that the unique path in $H_1$ from $x''$ to $z''$ does contain the edge $zz'$ (and hence $x'' = x$). Notice then that $z'$ and $z''$ are in the same connected component of $H_1 - zz'$, (and because we have $z' \in R'_1$, $z'$ is the root of $\blk_{\ca{P}'_1}(z'')$). 
We also have $f'_1(z'') = f_1 (z'') = \dist_{H_1}(x'', z'') = \dist_{H_1}(x'', z') + \dist_{H_1}(z', z'') = f_1(z') +\dist_{H_1}(z', z'') = f'_1 (z') + \dist_{H_1 - zz'}(z', z'')$. 
Thus condition~\ref{span}\ref{dist} holds as well. 
Finally, consider $z''' \in \pi(1) \setminus X_{t'}$. Since $z''' \in \pi(1) \setminus X_t$, there exists $y \in R_i$ such that $\dist_{H_1}(y, z''') \leq \beta - f_1(y)$. 
If the unique path in $H_1$ from $y$ to $z'''$ does not contain the edge $zz'$, then again, condition~\ref{span}\ref{beta} holds. %$\dist_{H_1 - zz'}(y, z''') = \dist_{H_1}(y, z''') \leq \beta - f_1(y) = \beta - f'_1 (y)$. 
Suppose the path does contain the edge $zz'$. Then $\beta \geq \dist_{H_1}(y, z''') = \dist_{H_1}(y, z') + \dist_{H_1}(z', z''') = f_1 (z') + \dist_{H_1}(z', z''')$. 
But notice that $z'$ and $z'''$ are in the same connected component of $H_1 - zz'$ and therefore $\dist_{H_1 - zz'}(z', z''') = \dist_{H_1} (z', z''')$. And recall that $f'_1 = f_1$. Therefore, $\beta \geq f_1 (z') + \dist_{H_1} (z', z''')$ implies that $\beta \geq f'_1 (z') + \dist_{H_1 - zz'}(z', z''')$, which implies that $\dist_{H_1 - zz'}(z', z''') \leq \beta - f'_1 (z')$. Finally, since $z' \in R'_1$, condition~\ref{span}\ref{beta} holds as well. 
We have thus shown that $\pi$ is $(t', \eta')$-valid and hence $A[t', \eta'] = 1$. By the induction hypothesis, we computed $[t', \eta']$ correctly, and hence \ref{eq:inted} correctly assigns the value $1$ to $A[t, \eta]$ in this case. 

Suppose now that $zz' \notin E(H_i)$ for every $i \in [n]$. Then $H_i$ is a $(t', \eta, i)$-witness for $\pi$ for each $i \in [n]$, and $\pi$ is $(t', \eta)$-valid. Thus $A[t', \eta] = 1$. Again, from the induction hypothesis, we can conclude that \ref{eq:inted} correctly assigns the value $1$ to $A[t, \eta]$. 

\paragraph*{Introduce edge node-Backward direction.} Assume that the right hand side of \ref{eq:inted} evaluates to $1$. Notice that there are only two possible ways for this to happen:  either $A[t', \eta] = 1$ or for some $i \in [n]$, $\eta$ is $(i, zz')$-consistent and $A[t', \eta'] = 1$ for some $\eta' \in C_{t'}(i, zz', \eta)$. 

If $A[t', \eta] = 1$, we then have a $(t', \eta)$-valid allocation $\fn{\phi'}{[n]}{2^{V(G_{t'})}}$. And we can verify that $\phi'$ is indeed a $(t, \eta)$-valid allocation. Thus $A[t, \eta] = 1$, and \ref{eq:inted} is correct in this case. 

Assume that for some $i \in [n]$, $\eta$ is $(i, zz')$-consistent and $A[t', \eta'] = 1$ for some $\eta' \in C_{t'}(i, zz', \eta)$, and assume without loss of generality that $i = 1$. 
Since $\eta$ is $(1, zz')$ consistent, we have $z, z' \in S_1$, $\blk_{\ca{P}_1}(z) = \blk_{\ca{P}_1}(z')$ and $|f_1 (z) - f_1 (z')| = 1$. Assume without loss of generality that $f_1 (z') = 1 + f_1 (z)$. 
Recall that $P$ is the block of $\ca{P}_1$ that contains $z$ and $z'$, and $x \in P \cap R_1$ is the root of $P$.  
Let $\eta' =  ((S'_p, f'_p, (\ca{P}'_p, R'_p))_{p \in [n]}, (w'_{pq})_{p, q \in [n]})$. By the definition of $C_{t'}(1, zz', \eta)$, we have $S' = S_1, f'_1 = f_1, R'_1 = R_1 \cup \set{z'}$, $\ca{P}'_1 = (\ca{P}_1 \setminus P) \cup \set{P_z, P_{z'}}$, where $\set{P_z, P_{z'}}$ is some partition of $P$ such that $x, z \in P_z$ and $z' \in P_{z'}$. 
Since $A[t', \eta'] = 1$, there exists a $(t', \eta')$-valid allocation, say $\fn{\phi}{[n]}{2^{V(G_t)}}$. For $j \in [n]$, let $F'_j$ be a $(t',  \eta', j)$-witness for $\phi(j)$. Let $F_1$ be the graph obtained from $F'_1$ by adding  the edge $zz'$. Notice that $F_1$ is acyclic as $z$ and $z'$ are in different connected components of $F'_1$. Also, $\ca{P}_1$ is indeed the partition of $S_1$ into the connected components of $F_1$. To see that $F_1$ is a $(t, \eta, 1)$-witness for $\phi$, consider $z'' \in P_{z'}$. Notice that $z' \in P_{z'} \cap R'_1$ is the root of $\blk_{\ca{P}'_1}(z'')$ and $x \in P \cap R_1$ is the root of $\blk_{\ca{P}_1}(z'')$. The unique path in $F_1$ from $x$ to $z''$ consists of the unique path in $F'_1$ from $x$ to $z$, followed by the edge $zz'$ and further followed by the unique path in $F'_1 $ from $z'$ to $z''$. We thus have $\dist_{F_1}(x, z'') = \dist_{F'_1}(x, z) + 1 + \dist_{F'_1}(z', z'') = (f'_1 (z) - f'_1 (x)) + 1 + (f'_1 (z'') - f'_1 (z')) = f'_1 (z'') + 1 - (f'_1 (z') - f'_1 (z)) - f'_1 (x) = f'_1 (z'') - f'_1 (x) = f_1 (z'') - f_1 (x)$. Thus condition~\ref{span}\ref{dist} holds. Now consider $z''' \in \pi(1) \setminus X_t = \pi(1) \setminus X_{t'}$. Since $\phi$ is $(t', \eta')$-valid, there exists $y \in R'_1$ such that $\dist_{F'_1}(y, z''') \leq \beta - f'_1 (y)$. If $y \neq z'$, then $y \in R_1$ and $\dist_{F'_1}(y, z''') = \dist_{F_1}(y, z''')$, and hence condition~\ref{span}\ref{beta} holds. So suppose $y = z'$. Then $\dist_{F'_1}(z', z''') \leq \beta - f'_1 (z')$. Again, the unique path in $F_1$ from $x$ to $z'''$ consists of the unique path in $F'_1$ from $x$ to $z$, followed by the edge $zz'$ and further followed by the unique path in $F'_1$ from $z'$ to $z'''$. Thus $\dist_{F_1}(x, z''') = \dist_{F'_1}(x,  z) + 1 + \dist_{F'_1}(z', z''') \leq (f'_1 (z) - f'_1 (x)) + 1 + (\beta - f'_1(z')) = \beta - f'_1 (x) + 1 - (f'_1 (z') - f'_1 (z)) = \beta - f_1 (x)$. Thus condition~\ref{span}\ref{beta} holds as well. 
Therefore, $F_1$ is a $(t, \eta, 1)$-witness for $\phi$. And we can conclude that $\phi$ is a $(t, \eta)$-valid allocation as conditions~\ref{inter} and \ref{value} trivially hold. Hence $A[t, \eta] = 1$, and thus \ref{eq:inted} correctly assigns the value $1$ to $A[t, \eta]$. 

\paragraph*{Join node.} Suppose that $t$ is a join node and let $t', t''$ the children of $t$. Then $X_t = X_{t'} = X_{t''}$, $V(G_{t}) = V(G_{t'}) \cup V(G_{t''})$, $V(G_{t'}) \cap V(G_{t''}) = X_t$, $E(G_{t}) = E(G_{t'}) \cup E(G_{t''})$ and $E(G_{t'}) \cap E(G_{t''}) = \emptyset$. 

Recall that for $\eta' = ((S'_i, f'_i, (\ca{P}'_i, R'_i))_{i \in [n]}, (w'_{ij})_{i, j \in [n]}) \in \ca{C}_{t'}$ and $\eta'' = ((S''_i, f''_i, (\ca{P}''_i, R''_i))_{i \in [n]}, (w''_{ij})_{i, j \in [n]}) \in \ca{C}_{t''}$, we say that $\eta$ is $(\eta', \eta'')$-consistent if (i) for each $i \in [n]$, $S'_i = S''_i = S_i$ and $f'_i = f''_i = f_i$, $(\ca{P}_i, R_i)$ is an acyclic join of $(\ca{P}'_i, R'_i)$ and $(\ca{P}''_i, R''_i)$ and (ii) for $i, j \in [n]$, $w_{ij} = w'_{ij} + w''_{ij} - v_i(S_j)$. Let $\ca{C}_{(t, t'')}(\eta) \subseteq \ca{C}_{t'} \times \ca{C}_{t''}$ be the set of all $(\eta', \eta'') \in  \ca{C}_{t'} \times \ca{C}_{t''}$ such that $\eta$ is $(\eta', \eta'')$-consistent. And we have set
 \[
A[t, \eta] = \bigvee_{(\eta', \eta'') \in \ca{C}_{(t, t'')}(\eta)} (A[t', \eta'] \land  A[t'', \eta'']). \tag{\textsl{Eq. Join}}\label{eq:join}
\]

\paragraph*{Join node-Forward Direction. } Assume that $A[t, \eta] = 1$. We will show that $A[t', \eta'] = 1$ and $A[t'', \eta''] = 1$ for some $(\eta', \eta'') \in C_{(t', t'')}(\eta)$. By the induction hypothesis, $A[t', \eta']$ and $A[t'', \eta'']$ are computed correctly, and thus we can conclude that the right hand side of \ref{eq:join} evaluates to $1$. 

For $i \in [n]$, let $H'_i$ be the subgraph of $H_i$ such that $H'_i$ is the intersection of $H_i$ with $G_{t'}$, i.e., $V(H'_i) = V(H_i) \cap V(G_{t'})$ and $E(H'_i) = E(H_i) \cap E(G_{t'})$. And we define $H''_i$ analogously to be the intersection of $H_i$ with $G_{t''}$. 
Based on $\eta$, $t$, $H'_i$ and $H''_i$, we define $\eta' \in \ca{C}_{t'}$ and $\eta'' \in \ca{C}_{t''}$ such that $A[t', \eta'] = $ and  $A[t'', \eta''] =1$, which will imply that \ref{eq:join} correctly assigns the value $1$ to $A[t, \eta]$. 

Informally, we define $\eta'$ and $\eta''$ based on the components of $H'_i$ and $H''_i$, respectively. Let us just consider $\eta'$. (The definition of $\eta''$ is analogous.) We define $\eta'$ in such a way that $H'_i$ is a $(t', \eta', i)$-witness for the allocation $\fn{\pi'}{[n]}{2^{V(G_{t'})}}$, where $\pi'(i) = \pi(i) \cap V(G_{t'})$ for each $i \in [n]$. 
Fix $i \in [n]$.  
Recall that $H'_i$ is the intersection of $H_i$ with the graph $G_{t'}$. That is, $H'_i$ is a rooted forest and each component of $H'_i$ is a rooted subtree of a connected component of $H_i$. So we define $\eta' = ((S'_i, f'_i, (\ca{P}'_i, R'_i))_{i \in [n]}, (w'_{ij})_{i, j \in [n]})$, where $S'_i = S_i$ and $f'_i = f_i$; $\ca{P}'_i$ is the partition of $S'_i$ into the connected components of $H'_i$; and $R'_i$ is the set of roots of $H'_i$; and $w'_{ij} = v_i(\pi(j) \cap V(G_{t'}))$. And we define $\eta'' = ((S''_i, f''_i, (\ca{P}''_i, R''_i))_{i \in [n]}, (w''_{ij})_{i, j \in [n]})$ analogously; we take $S''_i = S_i$ and $f''_i = f_i$; $\ca{P}''_i$ is the partition of $S''_i$ into the connected components of $H''_i$; and $R''_i$ is the set of roots of $H''_i$; and $w''_{ij} = v_i(\pi(j) \cap V(G_{t''}))$. To show that $\eta'$ and $\eta''$ are well-defined (i.e., $\eta' \in \ca{C}_{t'}$ and $\eta'' \in \ca{C}_{t''}$), we only have to argue that $R'_i \subseteq X_{t'}$ and $R''_i \subseteq X_{t''}$ for each $i \in [n]$ and (ii) $0 \leq w'_{ij}, w''_{ij} \leq W$ for every $i, j \in [n]$. 
Notice that to prove that $R'_i \subseteq X_{t'}$ (resp. $R''_i \subseteq X_{t''}$), it is enough to prove that the root of each connected component of $H'_i$ (resp. $H''_i$) is in $X_{t'}$ (resp. $X_{t''}$). For that, we prove the following claim.   

\begin{claim}
For every $i \in [n]$ and for every connected component $H'$of $H'_i$ (resp. $H''_i$), $x_{H'} \in X_{t'}$ (resp. $x_{H'} \in X_{t''}$), where $x_{H'}$ is the root of $H'$. 
\end{claim}
\begin{proof}
Fix $i \in [n]$. Let $H'$ be a connected component of $H'_i$ and $x_{H'}$ the root of $H'$. Then $H$ is contained in one of the connected components of $H_i$. Let $H$ be the connected component of $H_i$ that contains $H'$ and $x_H$ the root of $H$.  Notice that $x_H \in R_i \subseteq S_i \subseteq X_{t}$. 

Suppose for a contradiction that $x_{H'} \notin X_{t'}$. Then $x_{H'} \neq x_{H}$, as $x_H \in X_{t} = X_{t'}$. Let $P = y_1 y_2\cdots y_{k - 1} y_k$, where $y_1 = x_{H}$ and $y_k = x_{H'}$, be the unique path in $H$ from $x_H$ to $x_{H'}$.  Let $s \in [k]$ be the largest index such that $y_s \in X_{t'}$ and $y_{s + 1} \in V(G_{t'}) \setminus X_{t'}$. Such an index $s$ exists as $y_1 \in X_{t'}$ and $y_k \in V(G_{t'}) \setminus X_{t'}$. Then $y_s \neq x_{H'}$, and $x_{H'}$ is a descendant of $y_s$ in the rooted tree $H$. By the definition of $s$, the vertices $y_{s + 1}, y_{s + 2},\ldots, y_k \notin X_{t'}$; and since $y_k \in V(G_{t'}) \setminus X_{t'}$ and $y_{r + 1} y_{r + 2} \in E(G)$ for every $s  \leq r \leq k -2$, we can conclude that $y_{s + 1}, y_{s + 2},\ldots, y_k \in V(G_{t'}) \setminus X_{t'}$ and $y_{r + 1} y_{r + 2} \in E(G_{t'})$ for every $j \leq r \leq k - 2$. Also, since $y_{s + 1} \in V(G_{t'}) \setminus X_{t'}$ and $y_s y_{s + 1} \in E(G)$, we have $y_s \in V(G_{t'})$ and $y_s y_{s + 1} \in E(G_{t'})$ as well. Thus the sub-path of $P$ from $y_s$ to $y_k$ is completely contained in the graph $G_{t'}$ and hence in $H'$. Hence $x_{H'}$ is a descendant of $y_s$ in the tree $H'$ as well, which contradicts our assumption that $x_{H'}$ is the root of $H'$. 

The case when $H'$ is a connected component of $H''_i$ is identical. 
\end{proof}

Finally, consider $i, j \in [n]$. Notice that as $w'_{ij} = v_i(X)$ and $w''_{ij} = v_i(Y)$ for some $X, Y \subseteq V(G)$, we have $0 \leq w'_{ij}, w''_{ij} \leq W$. We have thus shown that $\eta' \in \ca{C}_{t'}$ and $\eta'' \in \ca{C}_{t''}$. 
We now argue that $(\ca{P}_i, R_i)$ is an acyclic join of $(\ca{P}'_i, R'_i)$ and $(\ca{P}''_i, R''_i)$. 

\begin{claim}\label{claim:acyclic}
For $i \in [n]$, $(\ca{P}_i, R_i)$ is an acyclic join of $(\ca{P}'_i, R'_i)$ and $(\ca{P}''_i, R''_i)$. 
\end{claim}

\begin{proof}
Fix $i \in [n]$. First, recall that that each connected component of $H'_i$ and $H''_i$ is a rooted sub-tree of a connected component of $H_i$. And we defined $R'_i$ and $R''_i$ to be the set of roots of $H'_i$ and $H''_i$, respectively. Recall also that $R_i$ is the set of roots of $H_i$. So if $z \in R_i$, then $z$ is the root of the connected component of $H_i$ that contains $z$. Since each connected component of $H'_i$ and each connected component of $H''_i$ is a rooted sub-tree of $H_i$,  we can conclude that $z$ is the root of the connected components of $H'_i$ and $H''_i$ that contain $z$. Thus $z \in R'_i  \cap R''_i$. Conversely, suppose that $z \in R'_i \cap R''_i$. Then $z$ is the root of a connected component of $H'_i$ and the root of a connected component of $H''_i$. But as $E(H_i) = E(H'_i) \cup E(H''_i)$, there does not exist a vertex $z'$ such that $z$ is a descendant of $z'$ in $H_i$. Hence $z$ is the root of a connected component of $H_i$ and thus $z \in R_i$.  

Now, to prove that $\ca{P}'$is an acyclic join of $\ca{P}'$ and $\ca{P}''_i$, consider  any graphical representations $\mathtt{G}(\ca{P}'_i)$ and $\mathtt{G}(\ca{P}''_i)$ of $\ca{P}'_i$ and $\ca{P}''_i$, respectively. Let $\mathtt{G}(\ca{P}'_i, \ca{P}''_i)$ denote the merge of $\mathtt{G}(\ca{P}''_i)$. We will show that $\mathtt{G}(\ca{P}'_i, \ca{P}''_i)$ is a graphical representation of $\ca{P}_i$, which will prove that $\ca{P}'$ is an acyclic join of $\ca{P}'$ and $\ca{P}''_i$. 

Let us first prove that $\ca{P}_i$ is the partition of $S_i$ into the connected components of $\mathtt{G}(\ca{P}'_i, \ca{P}''_i)$. Consider $z, z' \in S_i (= S'_i = S''_i)$. 

Suppose first that $z$ and $z'$ are in the same block of $\ca{P}_i$. Then $z$ and $z'$ are in the same connected component of $H_i$. Let $P = z_1 z_2 \cdots z_{r - 1} z_r$, where $z_1 = z$ and $z_r = z'$, be the unique path in $H_i$ from $z$ to $z'$. Notice that since $E(H_i)$ is the disjoint union of $E(H'_i)$ and $E(H''_i)$, one of the following three conditions must hold: (i) $E(P) \subseteq E(H'_i)$, in which case $P$ is a path in $H'_i$,  (ii) $E(P) \subseteq E(H''_i)$, in which case $P$ is a path in $H''_i$, or (iii) there exist $1 \leq r_1 < r_2 < \cdots < r_s \leq r$ such that for every $k \in [s - 1]$, the subpath of $P$ from $z_{r_k}$ to $z_{r_{k + 1}}$ is fully contained in $H'_i$ or fully contained in $H''_i$. 
If (i) holds, then $z$ and $z'$ are in the connected component of $H'_i$, which implies that $z$ and $z'$ are in the same block of $\ca{P}'_i$, which implies that $z$ and $z'$ are in the same connected component of $\mathtt{G}(\ca{P}'_i)$, which implies that $z$ and $z'$ are in the same connected component of $\mathtt{G}(\ca{P}'_i, \ca{P}''_i)$. Similar reasoning applies if (ii) holds. 
If (iii) holds, by similar reasoning, for each $k \in [s - 1]$, $z_k$ and $z_{k + 1}$ are in the same connected component of $\mathtt{G}(\ca{P}'_i, \ca{P}''_i)$, which implies that $z$ and $z'$ are in the same connected component of $\mathtt{G}(\ca{P}'_i, \ca{P}''_i)$. 

Suppose now that $z$ and $z'$ are in the same connected component of $\mathtt{G}(\ca{P}'_i, \ca{P}''_i)$. Let $\mathtt{P} = y_1 y_2 \cdots y_{p - 1} y_p$ be a path in $\mathtt{G}(\ca{P}'_i, \ca{P}''_i)$ from $z$ to $z'$, where $y_1 = z$ and $y_p = z'$. Again, since $E(\mathtt{G}(\ca{P}'_i, \ca{P}''_i)) = E(\mathtt{G}(\ca{P}'_i)) \cup E(\mathtt{G}(\ca{P}''_i))$, one of the three conditions must hold: (i) either $E(P) \subseteq E(\mathtt{G}(\ca{P}'_i))$ or (ii) $E(\mathtt{P}) \subseteq E(\mathtt{G}(\ca{P}''_i))$ or (iii) there exist $1 \leq p_1 < p_2 < \cdots < p_q \leq p$ such that for every $k \in [p - 1]$, the subpath of $\mathtt{P}$ from $y_{p_k}$ to $y_{p_{k + 1}}$ is fully contained in $\mathtt{G}(\ca{P}'_i)$ or fully contained in $\mathtt{G}(\ca{P}''_i)$. If (i) holds, then $z$ and $z'$ are in the same connected component of $E(\mathtt{G}(\ca{P}'_i))$, which implies that $z$ and $z'$ are in the same block of $\ca{P}'_i$, which implies that $z$ and $z'$ are in the same connected component of $H'_i$, which implies that $z$ and $z'$ are in the same connected component of $H_i$, which implies that $z$ and $z'$ are in the same block of $\ca{P}_i$. Similar reasoning applies if (ii) holds. If (iii) holds, then again, we can conclude that for every $k \in [p - 1]$, $y_{p_k}$ and $y_{p_{k + 1}}$ are in the same block of $\ca{P}_i$, which implies that $z$ and $z'$ are in the same block of $\ca{P}_i$. 

We have thus shown that $\ca{P}_i$ is indeed the partition of $S_i$ into the connected components of $\mathtt{G}(\ca{P}'_i, \ca{P}''_i)$. To complete the proof, we only need to prove that $\mathtt{G}(\ca{P}'_i, \ca{P}''_i)$ is acyclic. But observe that for every edge $xy \in E(\mathtt{G}(\ca{P}'_i))$, there exists a corresponding $x$-$y$ path in $H'_i$, which is an $x$-$y$ path in $H_i$. And for every edge $xy \in \mathtt{G}(\ca{P}''_i)$, there exists a corresponding $x$-$y$ path in $H''_i$, which is also a path in $H_i$. 
Recall now that $E(\mathtt{G}(\ca{P}'_i, \ca{P}''_i)) = E(\mathtt{G}(\ca{P}'_i)) \cup E(\mathtt{G}(\ca{P}'_i))$. Thus every edge in $\mathtt{G}(\ca{P}'_i, \ca{P}''_i)$ corresponds to a path in $H_i$. We can therefore conclude that for a cycle in $\mathtt{G}(\ca{P}'_i, \ca{P}''_i)$, there exists a corresponding cycle in $H_i$, which contradicts the fact that $H_i$ is acyclic. Hence $\mathtt{G}(\ca{P}'_i, \ca{P}''_i)$ is acyclic. 
\end{proof}

Finally, consider $i, j \in [n]$. Recall that $V(G_{t'}) \cap V(G_{t''}) = X_t$ and $\pi(j) \cap X_t = S_j$. Hence $v_i(\pi(j)) = v_i(\pi(j) \cap V(G_{t'})) + v_i(\pi(j) \cap V(G_{t''})) - v_i(\pi(j) \cap X_t) = w'_{ij} + w''_{ij} - v_i(S_j)$. This fact, along with Claim~\ref{claim:acyclic} and the definitions of $\eta'$ and $\eta''$, shows that $\eta$ is $(\eta', \eta'')$-consistent. That is, $(\eta', \eta'') \in \ca{C}_{(t', t'')}(\eta)$. 

To conclude the arguments, notice that the allocations $\fn{\pi'}{[n]}{2^{V(G_{t'})}}$ and $\fn{\pi'}{[n]}{2^{V(G_{t''})}}$ defined by $\pi'(i) = \pi(i) \cap V(G_{t'})$ and  $\pi''(i) = \pi(i) \cap V(G_{t''})$ for each $i \in [n]$ are $(t', \eta')$-valid and $(t'', \eta'')$-valid, respectively. This follows from the definitions of $\eta'$ and $\eta''$ and the fact that $H'_i$ and $H''_i$ are $(t', \eta', i)$-witness and $(t'', \eta'', i)$-witness for $\pi'$ and $\pi''$, respectively.   Hence, $A[t', \eta'] = 1$ and $A[t'', \eta''] = 1$, and by the induction hypothesis, we can thus conclude that \ref{eq:join} correctly assigns the value $1$ to $A[t, \eta]$.  

\paragraph*{Join node-Backward Direction. } Assume that the right hand side of \ref{eq:join} evaluates to $1$. Then $A[t', \eta'] = 1$ and $A[t'', \eta''] = 1$ for some $(\eta', \eta'') \in \ca{C}_{(t', t'')}(\eta)$. Thus there exist allocations $\fn{\phi'}{[n]}{2^{V(G_{t'})}}$ and $\fn{\phi'}{[n]}{2^{V(G_{t''})}}$ such that $\phi'$ is $(t', \eta')$-valid and $\phi''$ is $(t'', \eta'')$-valid.  Let $\eta' = ((S'_i, f'_i, (\ca{P}'_i, R'_i))_{i \in [n]}, (w'_{ij})_{i, j \in [n]})$ and $\eta'' = ((S''_i, f''_i, (\ca{P}'_i, R'_i))_{i \in [n]}, (w'_{ij})_{i, j \in [n]})$.

We now define an allocation $\fn{\phi}{[n]}{2^{V(G_t)}}$ as follows: $\phi(i) = \phi'(i) \cup \phi''(i)$ for each $i \in [n]$. 
We claim that $\phi$ is $(t, \eta)$-valid. 
Notice first that $\phi$ is indeed an allocation as $\phi(i) \cap \phi(j) = \emptyset$ for every $i, j \in [n]$. This follows from the facts that (i) $V(G_{t'}) \cap V(G_{t''}) = X_t = X_{t'} = X_{t''}$, (ii) $S_i = S'_i = S''_i$ and (iii) $\phi'$ and $\phi''$ are allocations. We can also verify that $\phi(i) \cap X_t = S_i$ for $i \in [n]$. Recall that since $(\eta', \eta'') \in \ca{C}_{(t, t'')}(\eta)$, by the definition of $\ca{C}_{(t', t'')}(\eta)$, we have $w_{ij} = w'_{ij} + w''_{ij} - v_i(S_j)$ for $i, j \in [n]$. And using this fact, we can also verify that $v_i(\phi(j)) = v_i(\phi'(j)) + v_i(\phi''(j)) - v_i(S_j) = w'_{ij} + w''_{ij} - v_i(S_j) = w_{ij}$. 

To prove that $\phi$ is $(t, \eta)$-valid, we now only have to prove that for each $i \in [n]$, there exists a $(t, \eta, i)$-witness for $\phi$. For $i \in [n]$, let $F'_i$ be a $(t', \eta', i)$-witness for $\phi'$ and and $F''_i$ be a $(t'', \eta'', i)$-witness for $\phi''$. And let  $F_i$ be the union of $F_i$ and $F''_i$. That is, $V(F_i) = V(F'_i) \cup V(F''_i)$ and $E(F_i) = E(F'_i) \cup E(F''_i)$. We claim that $F_i$ is a $(t, \eta, i)$-witness for $\phi$. Let us first see that $F_i$ is acyclic. 

\begin{claim}
\label{claim:Fi_acyclic}
For every $i \in [n]$, $F_i$ is acyclic. 
\end{claim}

\begin{proof}
Fix $i \in [n]$. Suppose for a contradiction that $F_i$ is not acyclic. Let $C$ be a cycle in $F_i$. We will argue that corresponding to $C$, there exists a cycle in the merge of $\mathtt{G}(\ca{P}'_i)$ and $\mathtt{G}(\ca{P}''_i)$, which will contradict the fact that $\ca{P}'_i$ is an acyclic join of $\ca{P}'_i$ and $\ca{P}''_i$. 

First, observe that $C$ must contain edges from both $F'_i$ and $F''_i$, i.e., $E(C) \cap E(F'_i) \neq \emptyset$ and $E(C) \cap E(F''_i) \neq \emptyset$. Otherwise, $E(C) \subseteq E(F'_i)$, in which case $C$ is a cycle in $F'_i$, or $E(C) \subseteq E(F''_i)$, in which case $C$ is a cycle in $F''_i$. In either case, we get a contradiction, as $F'_i$ and $F''_i$ are acyclic. Therefore, there exist vertices $z, z', z'' \in V(C)$ such that $zz' \in E(C) \cap E(F'_i)$ and $zz'' \in E(C) \cap E(F''_i)$; we call such a vertex $z$ a flip vertex. Notice now that $C$ contains at least two distinct flip vertices. Otherwise, let $z$ be the unique flip vertex in $C$, and let $N_C(z) = \set{z', z''}$ with $zz' \in E(F'_i)$ and $zz'' \in E(F''_i)$.  Notice that the graph $C - z$ is a $z'$-$z''$ path. Since $z$ is the unique flip vertex in $C$, either $E(C - z) \subseteq E(F'_i)$ or $E(C - z) \subseteq E(F''_i)$. Assume without loss of generality that $E(C - z) \subseteq E(F'_i)$. Let $y \in V(C)$ be the unique vertex such that $y \neq z$ and $yz'' \in E(C)$. Then, by our assumption that $E(C - z) \subseteq E(F'_i)$, we have $yz'' \in E(F'_i)$. Thus we have $y, z'', z \in V(C)$ such that $yz'' \in E(C) \cap E(F'_i)$ and $z''z \in E(C) \cap E(F')$, which implies that $z''$ is a flip vertex, a contradiction to our assumption that $z$ is the unique flip vertex in $C$.  Also, notice that for every flip vertex $z$ in $C$, we have $z \in X_t$, as both $G_{t'}$ and $G_{t''}$ contain edges incident with $z$. 

Fix a flip vertex $z$ and fix a traversal of $C$ starting at $z$. Let $(z_1, z_2, \ldots, z_{r-1}, z_r)$ be the ordered sequence of flip vertices we encounter in this traversal, where $z_1= z_r = z$. Then $z_j \in X_t$ for every $j \in [r-1]$ and since $C$ contains at least two flip vertices, we have $r - 1 \geq 2$. 
For $j \in [r - 1]$, let $P_j$ be the $z_j$-$z_{j + 1}$ path in $C$ such that $V(P) \setminus \set{z_j, z_{j + 1}}$ does not contain any flip vertex. 
Notice that for each $j \in [r - 1]$, either $P_j$ is a path in $F'_i$ or $P_j$ is a path in $F''_i$. Hence either $\mathtt{G}(\ca{P}'_i)$ contains a $z_j$-$z_{j + 1}$ path or $\mathtt{G}(\ca{P}''_i)$ contains a $z_j$-$z_{j + 1}$ path. In either case, the merge of  $\mathtt{G}(\ca{P}'_i)$ and $\mathtt{G}(\ca{P}''_i)$ contains a $z_j$-$z_{j + 1}$ path for every $j \in [r - 1]$. And these paths together constitute a cycle in the merge of  $\mathtt{G}(\ca{P}'_i)$ and $\mathtt{G}(\ca{P}''_i)$.  
\end{proof}

We have thus shown that $F_i$ is indeed a spanning forest for $G_{t}[\pi(i)]$. To prove that $F_i$ is a $(t, \eta, i)$-witness for $\phi$, we have to show that conditions \ref{span}\ref{comp}--\ref{span}\ref{beta} hold. And for that, we now prove the following claims. 

\begin{claim}
\label{claim:Fi_comp}
For every $i \in [n]$, $\ca{P}_i$ is the partition of $S_i$ into the connected components of $F_i$. 
\end{claim}
\begin{proof}
Fix $i \in [n]$. Let $z, z' \in S_i$. Assume first that $\blk_{\ca{P}_i}(z) = \blk_{\ca{P}_i}(z')$. Since $\ca{P}_i$ is the partition of $S_i$ into the connected components of $\merge(\mathtt{G}(\ca{P}'_i), \mathtt{G}(\ca{P}''_i))$, $z$ and $z'$ are in the same connected component of $\merge(\mathtt{G}(\ca{P}'_i), \mathtt{G}(\ca{P}''_i))$. Let $\mathtt{P} = z_1 z_2 \cdots z_r$, where $z_1 = z$ and $z_r = z'$, be the unique $z$-$z'$ path in $\merge(\mathtt{G}(\ca{P}'_i), \mathtt{G}(\ca{P}''_i))$. We can break $\mathtt{P}$ into sub-paths such that each subpath is either fully contained in $\mathtt{G}(\ca{P}'_i)$ or fully contained in $\mathtt{G}(\ca{P}''_i)$. That is, let $1 = j_1 < j_2 < \cdots j_s = r$ be such that for every $p \in [s - 1]$, the $z_{j_p}$-$z_{j_{p + 1}}$ subpath of $\mathtt{P}$ is fully contained in $\mathtt{G}(\ca{P}'_i)$ or fully contained in $\mathtt{G}(\ca{P}''_i)$. But $\ca{P}'_i$ and $\ca{P}''_i$ are the partitions of $S_i = S'_i = S''_i$ into the connected components of $F'_i$ and $F''_i$ respectively. Thus for each $p \in [s - 1]$, there is an $z_{j_p}$-$z_{j_{p + 1}}$ path in either $F'_i$ or $F''_i$. In either case, $F_i$ thus contains a $z_{j_p}$-$z_{j_{p + 1}}$ path for every $p \in [s - 1]$, which implies that $F_i$ contains a path between $z_{j_1} = z$ and $z_{j_s} = z'$. 

Conversely, assume that $z$ and $z'$ are in the same connected component of $F_i$. By Claim~\ref{claim:Fi_acyclic}, $F_i$ is acyclic. And hence $F_i$ contains a unique path from $z$ to $z'$, say $P$. We can again break $P$ into subpaths such that each subpath is fully contained in $F'_i$ or fully contained in $F''_i$, which will imply that $\merge{\mathtt{G}(\ca{P}'), \mathtt{G}(\ca{P}''_i})$ contains a $z$-$z'$ path. Since $\ca{P}_i$ is an acyclic join of $\ca{P}'_i$ and $\ca{P}''_i$, $\merge{\mathtt{G}(\ca{P}'), \mathtt{G}(\ca{P}''_i})$ is a graphical representation of $\ca{P}_i$. And hence we can conclude that $\blk_{\ca{P}_i}(z) = \blk_{\ca{P}_i}(z')$.  
\end{proof}

\begin{claim}
\label{claim:dist}
For every $i \in [n]$ and $z \in S_i$, $f_i(z) = f_i(x) + \dist_{F_i}(x, z)$, where $x \in R_i$ is the root of $\blk_{\ca{P}_i}(z)$.  
\end{claim}

\begin{proof}
Fix $i \in [n]$ and $z \in S_i$. Let $x \in R_i$ be the root of $\blk_{\ca{P}_i}(z)$.  If $x = z$, then $\dist_{F_i}(x, z) = 0$ and the claim trivially holds. So assume that $x \neq z$. Let $P = z_1 z_2\cdots z_r$ be the unique path in $F_i$ from $x$ to $z$, where $z_1 = x$ and $z_r = z$. For $p, q \in [r]$, let $P_{p, q}$ denote the subpath of $P$ from $z_p$ to $z_q$. Recall that as $E(F_i)$ is the disjoint union of $E(F'_i)$ and $E(F''_i)$, we can ''break'' $P$ into subpaths such that each subpath is fully contained in $F'_i$ or fully contained in $F''_i$. Assume without loss of generality that $z_1 z_2 \in E(F'_i)$. Let $1 = j_1 < j_2 < \cdots j_{s - 1} < j_s = r$ be such that $P_{j_p}, j_{p + 1}$ is fully contained in $E(F'_i)$ for every $p \in \tsub{[s - 1]}{od}$ and $P_{j_p, j_{p + 1}}$ is fully contained in $E(F''_i)$ for every $p \in \tsub{[s - 1]}{ev}$. (Recall that $\tsub{[s - 1]}{od}$ and $\tsub{[s - 1]}{ev}$ are the sets of odd numbers and even numbers in $[s - 1]$, respectively.) Observe first that $z_{j_1}, z_{j_2}, \ldots, z_{j_s} \in X_t = X_{t'} = X_{t''}$. 

Let us first assume that$z_{j_p} \in R'_i$ for every $p \in \tsub{[s - 1]}{od}$ and $z_{j_p} \in R''_i$ for every $p \in \tsub{[s - 1]}{ev}$,  and complete the proof of the claim. Consider $p \in \tsub{[s - 1]}{od}$. Since $P_{j_p, j_{p + 1}}$ is a $z_{j_p}$-$z_{j_{p + 1}}$ path in $F'_i$, $z_{j_p}$ and $z_{j_{p + 1}}$ are in the same connected component of $F'_i$. And since $\ca{P}'_i$ is the partition of $S'_i = S_i$ into the connected components of $F'_i$ and since $z_{j_p} \in R'_i$, we can conldue that $z_{j_p}$ is the root of $\blk_{\ca{P}'_i}(z_{j_{p + 1}})$. Hence by condition~\ref{span}\ref{dist}, we have $f_i(z_{j_{p + 1}}) = f_i(z_{j_p}) + \dist_{F'_i} (z_{j_p}, z_{j_{p + 1}})$. Since $P_{j_{p}, j_{p + 1}}$ is the unique $z_{j_p}$-$z_{j_{p + 1}}$ path in $F'_i$ and hence in $F_i$, we have $\dist_{F'_i} (z_{j_p}, z_{j_{p + 1}}) = \dist_{F_i} (z_{j_p}, z_{j_{p + 1}})$. Thus $f_i(z_{j_{p + 1}}) = f_i(z_{j_p}) + \dist_{F_i} (z_{j_p}, z_{j_{p + 1}})$. Symmetric arguments hold for $p \tsub{[s - 1]}{ev}$. We thus have $\dist_{F_i} (z_{j_p}, z_{j_{p + 1}}) = f_i(z_{j_{p + 1}}) - f_i(z_{j_p})$ for every $p \in [s - 1]$. 

Now, notice that $\dist_{F_i}(x, z) = \dist_{F_i}(z_{j _1}, z_{j_s}) = \sum_{p = 1}^{s -1}\dist_{F_i}(z_{j_p}, z_{j_{p + 1}}) = \sum_{p = 1}^{s -1} (f_i(z_{j_{p + 1}}) - f_i(z_{j_p})) = f_i(z_{j_s}) - f_i(z_{j_1}) = f_i(z) - f_i(x)$. That is, $f_i(z) = f_i(x) + \dist_{F_i}(x, z)$. 

So, to complete the proof of the claim, we only have to prove that $z_{j_p} \in R'_i$ for every $p \in \tsub{[s - 1]}{od}$ and $z_{j_p} \in R''_i$ for every $p \in \tsub{[s - 1]}{ev}$.  First, for $p = 1$, we have $z_{j_1} = z_1 = x \in R_i = R'_i \cap R''_i$; thus, in particular, $z_1 \in R'_i$. Consider $p = 2$.  
We have to prove that $z_{j_2} \in R''_i$. Suppose for a contradiction that $z_{j_2} \notin R''_i$. Then there exists a vertex $x_1 \in R''_i$ such that $x_1$ is the root of $\blk_{\ca{P}''_i}(z_{j_2})$. Since $z_{j_2} \neq x_1$, we have $\dist_{F''_i}(x_1, z_{j_2}) > 0$; and since $F''_i$ is a $(t'', \eta'', i)$-witness for $\phi''$, by condition~\ref{span}\ref{dist}, we have $f_i(z_{j_2}) = f_i(x_1) + \dist_{F''_i}(x_1, z_{j_2}) > f_i(x_1)$.  
Since $\ca{P}''_i$ is the partition of $S''_i = S_i$ into the connected components of $F''_i$,  $x_1$ and $z_{j_2}$ are in the same connected component of $F''_i$. Let $Q_{z_{j_2}, x_1}$ be the unique path in $F''_i$ from $z_{j_2}$ to $x_1$. Notice then that the paths $P_{j_1, j_2}$ and $Q_{z_{j_2}, x_1}$ intersect only in $\set{z_{j_2}}$, for otherwise $F_i$ would contain a cycle. In particular, $x_1 \neq x (= z_{j_1})$. Let $Q_1$ denote the $x$-$x_1$ path in $F_i$ formed by $P_{j_1, j_2}$ followed by $Q_{z_{j_2}, x_1}$. Notice that $Q_1$ is the unique path in $F_i$ from $x = z_{j_1}$ to $x_1$. In particular, $x$ and $x_1$ are in the same connected component of $F_i$. And by Claim~\ref{claim:Fi_comp}, we have $\blk_{\ca{P}_i}(x) = \blk_{\ca{P}_i}(x_1)$. Since $x$ is the unique element of  $\blk_{\ca{P}_i}(x)$ such that $x \in R_i$, we can conclude that $x_1 \notin R_i$. Thus, $x_1 \notin R_i$ and $x_1 \in R''_i$; this, along with the fact that $R_i = R'_i \cap R''_i$, implies that $x_1 \notin R'_i$. But then there exists $x_2 \in R'_i$ such that $x_2$ is the root of $\blk_{\ca{P}'_i}(x_1)$.  Since $x_2 \neq x_1$, we have $\dist_{F'_i}(x_2, x_1) > 0$; and since $F'_i$ is a $(t', \eta', i)$-witness for $\phi''$, by condition~\ref{span}\ref{dist}, we have $f_i(x_1) = f_i(x_2) + \dist_{F'_i}(x_2, x_1) > f_i(x_2)$.  Since $\ca{P}'_i$ is the partition of $S'_i = S_i$ into the connected components of $F'_i$,  $x_2$ and $x_1$ are in the same connected component of $F'_i$. Let $Q_{x_1, x_2}$ be the unique path in $F'_i$ from $z_{j_2}$ to $x_1$ . Notice that  the paths $Q_1$ and $Q_{x_1, x_2}$ intersect only in $\set{x_1}$, for otherwise $F_i$ would contain a cycle. In particular, $x \neq x_2$. Let $Q_2$ denote the $x$-$x_2$ path in $F_i$ formed by $Q_1$ followed by $Q_{x_1 x_2}$. Thus $x$ and $x_2$ are in the same connected component of $F_i$. Since $x \in R_i$, we can conclude that $x_2 \notin R_i$. Again, as $x_2 \in R'_i$ and $x_2 \notin R_i = R'_i \cap R''_i$, we can conclude that $x_2 \notin R''_i$. Then, as before, there exists $x_3 \in R''_i$ such that  $x_3$ is the root of $\blk_{\ca{P}''_i}(x_2)$ and so on. We thus get an infinite sequence of vertices $x_1, x_2, x_3,\ldots$ such that $x_q \in S_i$ for every $q$ and $f_i(x_1) > f_i(x_2) > f_i(x_3) > \cdots$. But this is not possible as $0 \leq f_i(x_q) \leq \beta$ for every $q$. We thus have a contradiction to our assumption that $z_{j_2} \notin R''_i$.  
Symmetric arguments hold for every $p > 2$, and we can show that $z_{j_p} \in R'_i$ for every $p \in \tsub{[s -1]}{od}$ and $z_{j_p} \in R''_i$ for every $p \in \tsub{[s -1]}{ev}$. 
\end{proof}

\begin{claim}
\label{claim:beta}
For every $i \in [n]$ and $z \in \phi(i) \setminus X_t$, there exists $y \in R_i$ such that $\dist_{F_i}(y, z) \leq \beta - f_i(y)$. 
\end{claim}
\begin{proof}
Fix $i \in [n]$ and $z \in \phi(i) \setminus X_t$. Since $\phi(i) = \phi'(i) \cup \phi''(i)$, either $z \in \phi'(i)$ or $z \in \phi''(i)$. Assume without loss of generality that $z \in \phi'(i)$. Then, since $F'_i$ is a $(t', \eta', i)$-witness for $\phi'$, ther exists $y' \in R'_i$ such that $\dist_{F'_i}(y', z) \leq \beta - f_i(y')$. If $y' \in R_i$, then the claim holds. So suppose $y' \notin R_i$. Let $y  \in R_i$ be the root of $\blk_{\ca{P}_i}(y)$. By Claim~\ref{claim:dist}, we have $f_i(y') = f_i(y) + \dist_{F_i}(y, y')$. Let $P$ be the unique path in $F'_i$ from $y'$ to $z$ and $Q$ the unique path in $F_i$ from $y$ to $y'$. Notice that the $y$-$z$ path in $F_i$ formed by $Q$ and $P$ is the unique $y$-$z$ path in $F_i$. We thus have $\dist_{F_i}(y, z) = \dist_{F_i}(y, y') + \dist_{F_i}(y', z) = (f_i(y') - f_i(y)) + \dist_{F'_i} (y', z) \leq f_i(y') - f_i(y) + \beta - f_i(y') = \beta - f_i(y)$. 
\end{proof}

Claims~\ref{claim:Fi_acyclic}-\ref{claim:beta} show that $F_i$ is indeed a $(t, \eta, i)$-witness for $\phi$. We can thus conclude  that $\phi$ is a $(t, \eta)$-valid allocation and hence $A[t, \eta] = 1$. And \ref{eq:join} correctly assigns the value $1$ to $A[t, \eta]$. 
\end{sloppypar}
\end{proof}
\end{toappendix}

\section{Discussion}
We proposed an alternative for connectivity in the fair division of graphs. Our results demonstrate that we can achieve tractability results under such alternative constraints. Our results also demonstrate that a number of tools and deep results from graph theory could find applications in fair division; this is still an under-explored direction. Compact allocations, we believe, can be a compelling alternative for connected allocations, and warrant further study. As we noted in Section~\ref{sec:intro}, compactness generalises the connectivity constraint as well. 

This work leaves several questions open. First, it would be interesting to see if our algorithmic results translate to the strongly compact setting. Second, all our strong \NPH ness results rely on a arbitrary number of agents and non-identical valuations. It would be interesting to see if more restricted hardness results can be proved. 

Apart from connectivity and compactness, there may be other structured bundles that are worth investigating. In situations where the graph models a spatial relationship among the items---rooms in a building, for example---connectivity or compactness may be the most appropriate choices for the bundles. But as we noted in Section~\ref{sec:intro}, graphs can model a variety of relationships among the items. It is pertinent to consider what other graph theoretic restrictions on the bundles would be useful when the graph models non-spatial relationships among the items, say similarity of the items based on rather abstract features. 
We hope this work will trigger such questions.

\paragraph*{Acknowledgement.} This work was supported by the Engineering and Physical Sciences Research Council (EPSRC) [EP/V032305/1]. The author thanks the anonymous referees and Jessica Enright for their comments on the manuscript. 
\addcontentsline{toc}{section}{References}
\bibliographystyle{plainnat} 
\bibliography{arxiv-fd_compact}

%%%%%%%%%%%%%%%%%%%%%%%%%%%%%%%%%%%%%%%%%%%%%%%%%%%%%%%%%%%%%%%%%%%%%%%%

\end{document}